\newtheorem{definition}{Definition}
\newtheorem{proposition}{Proposition}
\newtheorem{theorem}{Theorem}
\newtheorem{lemma}{Lemma}
\newcommand{\E}{\ensuremath{\mathrm{E}}}
\newcommand{\var}{\ensuremath{\mathrm{var}}}
\newcommand{\cov}{\ensuremath{\mathrm{cov}}}
\newcommand{\R}{\mathbb{R}}
\newcommand{\C}{\mathbb{C}}
\newcommand{\N}{\mathbb{N}}
\newcommand{\Z}{\mathbb{Z}}
\newcommand{\btheta}{\boldsymbol{\theta}}
\newcommand{\bgamma}{\boldsymbol{\gamma}}
\newcommand{\sN}{^{(N)}}
\newcommand{\sfN}{^{(\phi(N))}}
\newcommand{\ufN}{_{\phi(N)}}
\newcommand{\inerf}{\omega^{\{f\}}}
\newcommand{\Y}{\widetilde{X}}
\def\@maketitle{%
  \newpage
  \null
  \vskip 2em%
  \begin{center}%
  \let \footnote \thanks
    {\Large\bfseries \@title \par}%
    \vskip 1.5em%
    {\normalsize
      \lineskip .5em%
      \begin{tabular}[t]{c}%
        \@author
      \end{tabular}\par}%
    \vskip 1em%
    {\normalsize \@date}%
  \end{center}%
  \par
  \vskip 1.5em}
\begin{document}
\title{Analysis Of Nonstationary Modulated Time Series With Applications to Oceanographic Surface Flow Measurements}
\author[1]{Arthur P. Guillaumin\thanks{Corresponding author: \texttt{arthur.guillaumin.14@ucl.ac.uk}}}%
\author{Adam M. Sykulski}
\author[1]{Sofia C. Olhede}
\affil[1]{Department of Statistical Science\\ University College London, UK}
\author{Jeffrey J. Early}
\author{Jonathan M. Lilly}
\affil{NorthWest Research Associates, Seattle WA, USA}
\date{}
\maketitle
\begin{abstract} 
We propose a new class of univariate nonstationary time series models, using the framework of modulated time series, which is appropriate for the analysis of rapidly-evolving time series as well as time series observations with missing data. We extend our techniques to a class of bivariate time series that are isotropic. Exact inference is often not computationally viable for time series analysis, and so we propose an estimation method based on the Whittle-likelihood, a commonly adopted pseudo-likelihood. Our inference procedure is shown to be consistent under standard assumptions, as well as having considerably lower computational cost than exact likelihood in general. We show the utility of this framework for the analysis of drifting instruments, an analysis that is key to characterising global ocean circulation and therefore also for decadal to century-scale climate understanding.
\\ \\ \textbf{Keywords:} modulation; nonstationary; periodogram; Whittle likelihood; missing data; surface drifters
\end{abstract}

\section{Introduction}
This paper introduces a new family of rapidly-evolving time series models, inspired by real data applications, and then develops the appropriate analysis tools for their computationally-efficient and consistent inference. Statistical models for time series observations are usually described by their expectations and covariance structure. Classic families of covariance structure correspond to stationary covariances, governed only by the temporal lags between observed values of the process. The assumption of stationarity greatly simplifies analysis, as it renders the covariance structure homogeneous across time and this motivates averaging for estimation. Unfortunately most often this homogeneous time structure is inadequate as a model for real-world applications, and does not reflect the variability of the observed time series.

In order to analyse nonstationary time series, using the framework of locally stationary time series is standard \citep{priestley1988non, Dahlhaus1997}. The idea is to allow for a time-varying spectral density. Parametric models for the time-varying spectral density can be fitted via the use of local Fourier transforms, usually requiring a spectral smoothness assumption. The concept of infill asymptotics developed by \citet{Dahlhaus1997} is based on the idea that a growing amount of data is obtained locally in time.
Normally, for nonstationary time series analysis, there is a bias-variance trade-off that occurs when selecting the length of an analysis window. Longer windows will decrease variance, but will simultaneously increase bias due to the variation of the covariance function over the analysis window \citep{adak1998time}. In our case we shall eliminate the bias, and this will enable us to use longer time window lengths. 
For this purpose we exploit the notion of a modulated process~\citep{Parzen1963, priestley1965evol}. A modulated process is a latent stationary process multiplied pointwise by a modulating function. If we observe the modulating function, this framework allows us to define an averaged autocovariance function, despite the clear nonstationarity of the modulated process. This in turn allows us to introduce the Fourier transform of the averaged autocovariance function of the modulated process, which is equal to its expected periodogram. Through examining the expected periodogram, the properties controlling the latent random process may be inferred even when the modulating function changes very rapidly.

The standard class of modulated processes are asymptotically stationary modulated processes~\citep{Parzen1963, Iacobucci2003, Jiang2004}. Here the autocovariance of the modulating function converges to a fixed function, which is too restrictive for our real-world application. We introduce a more general class, which we call \emph{modulated processes with a significant correlation contribution}. This more flexible model will still allow us to infer the parameters of the driving process using likelihood-based methodologies. An alternative approach might be to simply divide the observed process by the known modulating sequence to recover the latent process,  and then perform inference directly on the recovered latent process. However this is not possible in general, as the modulating function may contain zeros, or the observed process may in fact be an aggregation of different processes, as will be the case in the real-world application that motivated us to develop this model class.

Anticipating our application to oceanographic surface flow measurements, we present a novel generalization of modulated processes for isotropic bivariate processes, or equivalently proper complex-valued processes \citep{complexProcesses}. The wealth of possible structure in multivariate processes
is considerable in general. Inherent documented challenges in modelling include producing valid joint representations~\citep{tong1973some,tong1974time,priestley1973analysis}. This problem does not apply here as we shall modulate both processes under consideration simultaneously, thus automatically removing such problems.

Having set up our model of modulated processes with a significant correlation contribution, we show how a modified version of a frequency-domain likelihood allows us to consistently estimate parameters with a high degree of computational efficiency. More specifically, the Whittle likelihood for stationary Gaussian processes is an approximation of the exact likelihood that is consistent and can be computed in $\mathcal{O}(N\log N)$ elementary operations. 
We adapt this pseudo-likelihood to our class of models, making use of the expected periodogram, and conserve the $\mathcal{O}_P\left(N^{-1/2}\right)$ convergence rate. We also conserve the $\mathcal{O}(N\log N)$ computational cost in the minimization procedure, except for a pre-computational step of $\mathcal{O}(N^2)$ operations, which must be performed only once per observed time series sample. Exact likelihood for nonstationary time series, on the other hand, will in general require more than $\mathcal{O}(N^2)$ operations, due to the need to manipulate large covariance matrices.

We apply this method to an important dataset measuring ocean currents. There are only a handful of observational platforms capable of providing continuous global coverage of the Earth's oceans and so it is critical that we fully utilize these datasets to advance our understanding of the oceans and their impact on climate.
One of these studies is the Global Drifter Program (GDP, www.aoml.noaa.gov/phod/dac), consisting of freely drifting instruments, or ``drifters" \citep{lumpkin07}. 
Fig. 1(a) shows positions from multiple trajectories obtained from drifters at or near the equator. From the positions of the trajectories, we may also calculate the velocities of the instruments, and these velocity time series are useful measurements for understanding ocean dynamics.
Depending on the instrument it may not be reasonable to model the velocity time series as locally stationary, as is assumed in \citet{sykulski2016Lagrangian}.
In particular, for reasons to be discussed, regions near the equator are likelier to yield drifter trajectories with highly nonstationary velocities where locally stationary modelling breaks down. Instead, to capture such rapid time-variability, we use a modulated stochastic process from our class of nonstationary models. This model allows us to capture the rapid frequency modulation of oscillations known to geophysicists as ``inertial oscillations''. An example of a time series with such rapid frequency modulation can be seen in Fig. 1(c).

We organize the paper into the following sections. Section 2 reviews the model family of modulated processes, the standard assumption of asymptotic stationarity associated with such processes, and introduces our generalized class called modulated processes with a significant correlation contribution. This section also includes extensions to bivariate processes. Section 3 describes a computationally consistent pseudo-likelihood estimation procedure. In Section 4 we apply our methods to real-world oceanographic data and various numerical experiments; we also apply our methods to a simulated missing data problem. 
We establish consistency of our proposed procedure in Section 5, under the assumption of significant correlation contribution, as well as standard assumptions on the stationary process that is modulated. Finally, concluding remarks can be found in Section 6.
\begin{figure}[h!]
\centering
\includegraphics[width=0.8\textwidth]{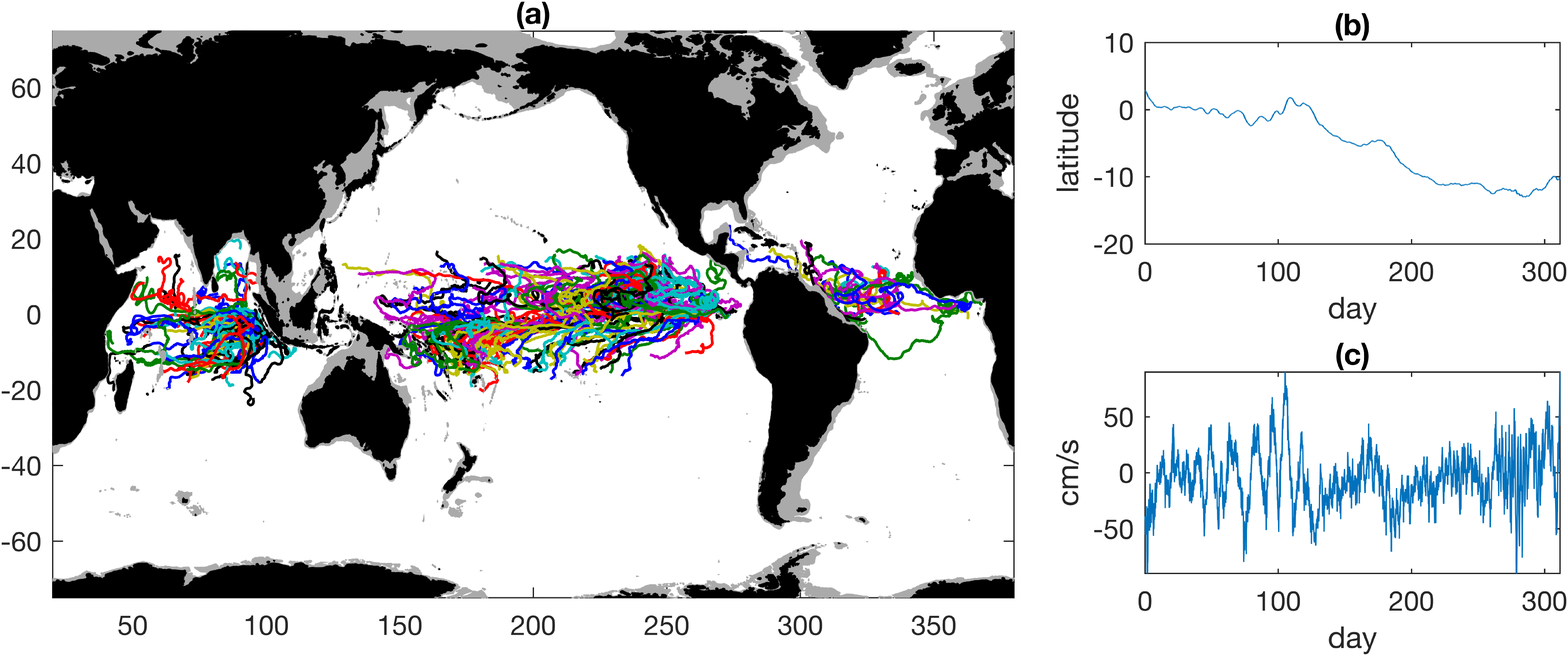}
\caption{(a) The trajectories of the 200 drifters from the Global Drifter Program, analysed in Section~\ref{sec=GDPapp}, that exhibit the greatest change in Coriolis frequency ($f$) across 60 inertial cycles, as described in that section; (b) a segment of data of the meridional (latitudinal) positions over time from Drifter ID\#43594; and (c) a segment of data of the meridional velocities from this drifter in cm/s. This figure is produced using the jLab toolbox~\citep{jlab}.}
\label{Equatorial}
\end{figure}

\section{Modulated time series}
In this paper we review and study modelling and inference methods for univariate and bivariate nonstationary Gaussian processes.
We have that the first moment of a univariate stochastic discrete Gaussian process $\{X_t:t\in\N\}$, with index set $\N = \left\{0,1,2,\cdots \right\}$, is provided pointwise by
\begin{equation*}
\mu_X(t)=\E \left\{X_t\right\},
\end{equation*}
and the second-order structure is given by
\begin{equation*}
c_X(t_1,t_2)=\cov\left\{ X_{t_1},X_{t_2}\right\},
\end{equation*}
where moments are finite as a direct consequence of the joint Gaussianity of $\{X_t\}$.
We shall assume throughout this paper that $\mu_X(t) =  0$. In practice this may require us to subtract the sample mean from the observed series, or more generally remove trends and seasonal components \citep[see][chap. 1]{BrockwellDavisTimeSeries}.

Second-order stationarity implies that the function $c_X(t,t+\tau)$ does not depend on the index $t$ and takes the simplified form $c_X(\tau)$. An alternative way to represent $c_X(\tau)$, assuming it is absolutely summable, is via its Fourier transform $S_X(\cdot)$, also known as the spectral density of $\{X_t\}$,
\[
S_X(\omega) =\frac{1}{2\pi}\sum_{\tau=-\infty}^\infty{c_X(\tau)e^{-i\omega\tau}}, \ \ \omega\in [-\pi,\pi].
\]
The spectral density $S_X(\omega)$ is then a continuous function of $\omega$. The corresponding inversion formulae is given for all integer value $\tau$ by
\[
c_X(\tau) = \int_{-\pi}^\pi{S_X(\omega)e^{i\omega\tau}d\omega}.
\]
A consequence of stationarity is that the quantities in question can be stably estimated by averaging in time \citep{BrockwellDavisTimeSeries}. If  $c_X(t,t+\tau)$ is not stationary, but is slowly varying in time, then it can be estimated by dividing the observed data into multiple segments and performing inference on each segment~\citep{adak1998time}.
This does not hold in settings where the time variation is too rapid.
Our goal is to estimate $c_X(t,t+\tau)$ in such settings, in particular when a parametric specification is made for the function.

\subsection{Classes of modulated processes}
Modulation is a natural and simple method of producing a nonstationary process~\citep{Parzen1963}.
A univariate modulated process is defined as follows.
\begin{definition}[Modulated process]
\label{def=modulatedprocesses}
Let $\{X_t:t\in\N\}$ be a Gaussian, real-valued, zero-mean stationary process.
Let $\{g_t:\; t\in {\mathbb{N}}\}$ be a given bounded real-valued deterministic sequence. 
Then a modulated process is defined as one taking the form
\begin{equation}
\label{modeqn}
\Y_t=g_t X_t
\end{equation} 
at all time points $t\in\N$. 
\end{definition}
Herein we treat $\{g_t\}$ as a known deterministic signal.
In our setting the process $\{X_t\}$, which is referred to as the {\em latent} process, is modelled through a finite set of parameters $\boldsymbol{\theta}\in\Theta\subset\R^d$, where $d$ is a positive integer and $\Theta$ is the parameter space. Usually our object of interest is $\boldsymbol{\theta}$, the particular values of parameters that generated the observed realization.
For example, if the latent process is an autoregressive process of order $p$, we then have $d = p+1$ if the mean is known ($p$ regressive parameters and the variance of the innovations).
We denote the autocovariance function of the stationary zero-mean process $\{X_t\}$ by $c_X(\tau)$, or $c_X(\tau;\boldsymbol{\theta})$ when we want to make the dependence on $\boldsymbol{\theta}$ explicit. Its Fourier transform, the spectral density, is denoted $S_X(\omega)$ or $S_X(\omega;\btheta)$, respectively.

The modulation of the latent process $X_t$ is a convenient mechanism to account for a wide range of nonstationary processes. In particular this mechanism has been widely used as a modelling tool for missing data problems, where $g_t$ is assigned values $0$ or $1$ when respectively missing or observing a data point in time \citep{R.H}. 

To understand when we can recover the parameters controlling the latent process $X_t$ from observing $\Y_t$, we need to put further conditions in place on $g_t$. 
The time series $\Y_t/g_t$ cannot always be formed as $g_t$ may be zero for some time indices, corresponding to missing observations. Another reason is that we may not directly observe $\Y_t$, but instead we may observe an aggregated process $\Y_t + Z_t$, where $Z_t$ is a stationary process (or more generally another modulated process) independent from $\Y_t$, this preventing us from recovering the stationary latent process $X_t$ by division.


We assume that $\Y_t$ satisfies~\eqref{modeqn} for a Gaussian, real-valued, zero-mean stationary $X_t$ with absolutely summable autocovariance sequence. Then $\E\{\Y_t\}=g_t\E\{X_t\}=0$  and the time-varying autocovariance sequence is defined by $c_{\Y}(t, t+\tau;\btheta)=\E\left\{\Y_t\Y_{t+\tau}\right\}$. Given a single length N realization $\Y_0,\cdots,\Y_{N-1}$, we start by computing the usual method of moments estimator according to
\begin{equation}
\label{eq=biasedAutocovEst}
\hat{c}_{\Y}\sN(\tau)=\frac{1}{N}\sum_{t=0}^{N-\tau-1} \Y_t\Y_{t+\tau},
\end{equation}
for $\tau=0,1,...,N-1$, such that $\tau$ is within the range of time offsets that is permissible given the length-$N$ sample.
Equation~\eqref{eq=biasedAutocovEst} is the biased sample autocovariance sequence of the modulated time series, which we define even though the process is nonstationary, as this object will become pivotal in our estimation procedure.
The expectation of this object, which we denote $\overline{c}_{\Y}^{(N)}(\tau;\btheta)$ or simply $\overline{c}_{\Y}^{(N)}(\tau)$, takes the following form,
\begin{equation}
	\label{eq=expectedAutocovSequence1}
	\overline{c}_{\Y}\sN(\tau)= \E\{\hat{c}_{\Y}\sN(\tau)\} = \E\left\{ \frac{1}{N}\sum_{t=0}^{N-\tau-1} \Y_t\Y_{t+\tau} \right\} = c_X(\tau)\frac{1}{N}\sum_{t=0}^{N-\tau-1} g_t g_{t+\tau} = c_g\sN(\tau)\cdot c_X(\tau),
\end{equation}
where we have introduced the (deterministic) sample autocovariance of the modulating sequence,
\begin{equation}
	\label{eq=autocovOfg}
	c_g\sN(\tau) = \frac{1}{N}\sum_{t=0}^{N-1-k}{g_tg_{t+\tau}}.
\end{equation}
In the specific case where the modulating sequence $\{g_t\}$ is constant and equal to unity everywhere, which would correspond to observing the latent stationary process directly, we recover the expectation of the biased sample autocovariance for stationary time series, $\left(1-\tau/N\right) c_X(\tau)$, for $\tau=0,\cdots,N-1$.
More generally, a standard assumption is to say that the modulated process $\Y_t$ is an asymptotically stationary process~\citep{parzen1961,Parzen1963}, which arises if for all lags $\tau$, the quantity $c_g\sN(\tau)$ in~\eqref{eq=expectedAutocovSequence1} converges as $N$ tends to infinity. We define this formally as follows.


\begin{definition}[Asymptotically stationary process]\label{asympstat}
Let $\{\Y_t\}$ be a discrete time random process.
We say that
$\{\Y_t\}$ is an asymptotically stationary process if there exists a fixed function $\{\gamma(\tau):\tau\in\N \}$ such that for all $\tau\in\N$,
\begin{equation}
	 \lim_{N\rightarrow \infty}\E\left\{\frac{1}{N}\sum_{t=0}^{N-\tau-1} \Y_t\Y_{t+\tau}\right\} =   \gamma(\tau),
\end{equation}
or specifically if $\Y_t$ is a modulated process as defined in Definition~\ref{def=modulatedprocesses}, $\Y_t$ is asymptotically stationary if,
\begin{equation}
\label{eq=asymptStat}
	\lim_{N\rightarrow \infty} \overline{c}_{\Y}\sN(\tau) = \gamma(\tau),
\end{equation}
where $\overline{c}_{\Y}\sN(\tau)$ is defined in~\eqref{eq=expectedAutocovSequence1}.
\end{definition}
An example of a nonstationary but asymptotically stationary process is given by \citet{Parzen1963}, where a stationary process is observed according to a periodically missing data pattern, such that the first $k$ values are observed, the next $l$ values are missed, the next $k$ values are observed, and so on, where $k$ and $l$ are two strictly positive integers.

The class of asymptotically stationary modulated processes \citep{ Parzen1963, Dunsmuir1981, Iacobucci2003, Jiang2004} corresponds to that for which there exists a sequence $\{R_g(\tau):\tau\in\N\}$ such that
\begin{equation}
\label{asymptotic}
\lim_{N\rightarrow \infty} c_g\sN(\tau) = R_g(\tau), \ \ \forall\tau\in\N.
\end{equation}
Indeed we then note that $\overline{c}_{\Y}\sN(\tau) \rightarrow R_g(\tau) c_X(\tau)$ as $N\longrightarrow\infty$, so we could estimate $c_X(\tau)$ by defining
\begin{equation}
\label{estimate}
	\hat{c}_X\sN(\tau)=\frac{\hat{c}_{\Y}\sN(\tau)}{R_g(\tau)},
\end{equation}
assuming $R_g(\tau) \neq 0$ for all $\tau\in\N$, and is known.
It is shown in \citet{Parzen1963} that $\hat{c}_X\sN(\tau)$ is a consistent estimator of the autocovariance sequence $c_X(\tau)$ of the latent stationary process, under some rather mild conditions. Further results are found in~\cite{Dunsmuir1981}. Consistent spectral density estimates can be obtained by a Fourier transformation of the sequence $\{k\sN(\tau)\hat{c}_X\sN(\tau):\tau=0,\cdots,N-1\}$, where $k\sN(\tau)$ is chosen suitably for $\tau=0,\cdots,N-1$.

The key feature in Definition~\ref{asympstat} is that in~(\ref{eq=asymptStat}) we average the time-varying autocovariance sequence $c_{\Y}(t,t+\tau)=\E\left\{\Y_{t}\Y_{t+\tau}\right\}$ across a time period $N$ to produce an average autocovariance across the time period, written as $\overline{c}_{\Y}\sN(\tau)$.
If this converges (in $N$) to a function of $\tau$, then by observing the modulated process over a suitably long time interval
we can recover the second-order properties of the stationary latent process.

We now wish to explore a more general assumption than that of asymptotic stationarity for modulated processes.
Specifically, we seek a larger class of models where consistent inference is still achievable. This will be smaller than the full class of models for $g_t$, as using a trivial example, if 
$g_t\equiv 0$ always then we would not be able to infer properties of the generating mechanism of $X_t$.
For consistent inference we propose the following class of modulated processes.
\begin{definition}[Modulated process with a significant correlation contribution]
\label{def=univariateSignificantCorrel}
Assume that $\Y_t$ is specified by~\eqref{modeqn}. We say that $\Y_t$ is a \emph{modulated process with a significant correlation contribution} if there exists a finite subset of non-negative lags $\Gamma\subset\N$ such that,
\begin{enumerate}
	\item The mapping $\btheta\mapsto \left\{c_X(\tau):\tau\in\Gamma\right\}$ is one-to-one (injective).
	\item For all lags $\tau\in\Gamma$, 
		\begin{equation}
		\label{eq=assumption1}
			\liminf\limits_{N\rightarrow\infty}\left|c_g\sN(\tau)\right| > 0,
		\end{equation}
		where $\liminf\limits_{N\rightarrow\infty}$ is the limit inferior.
\end{enumerate}
\end{definition}
Because of the symmetry of autocovariance sequences we do not need to consider $\tau<0$ in this definition. 
Point 1 of Definition~\ref{def=univariateSignificantCorrel} means that for any two distinct parameter vectors $\btheta,\btheta'\in\Theta$, there exists at least one lag $\tau$ in the finite set $\Gamma$ such that $c_X(\tau;\btheta)\neq c_X(\tau;\btheta')$. It is therefore an assumption about the latent process model.
The sequence $\left|c_g\sN(\tau)\right|$ is bounded above since the modulating sequence is assumed to be bounded above. Therefore the limit inferior in~\eqref{eq=assumption1} is always finite. 
We observe that, for $\tau\in\Gamma$,~\eqref{eq=assumption1} is equivalent to,
\begin{equation}
	\label{eq=equivalenceoflimitinf}
	\exists\alpha_\tau>0, \exists N_\tau\in\N, \forall N\in\N, N\geq N_\tau \Rightarrow \left|c_g\sN(\tau)\right| \geq \alpha_\tau,
\end{equation}
which we interpret as the fact that the sequence $\left|c_g\sN(\tau)\right|$ is bounded below for $N$ large enough.
For further understanding of Point 1 in Definition \ref{def=univariateSignificantCorrel} we provide the following two simple examples.
\begin{enumerate}
\item Let the latent process $\{X_t\}$ be an autoregressive process of order $p$, denoted AR($p$), with known mean zero and unknown innovation variance, and with the parameter set $\Theta$ that is a subset of $\R^{p+1}$. If the parameter set $\Theta$ is chosen appropriately, i.e. such that the roots of the characteristic equation all lie outside the unit circle, the Yule-Walker equations \citep{BrockwellDavisTimeSeries} show that   $\btheta\mapsto \left\{c_X(\tau; \btheta):\tau\in\Gamma\right\}$, where $\Gamma = \{0, \cdots,p\}$, is a one-to-one mapping. 
Similarly if $\{X_t\}$ is a moving average process of order $q$, denoted MA($q$), with known mean an unknown innovation variance and if the parameter set $\Theta$ is chosen appropriately~\citep{dzhaparize83}, then the mapping $\btheta\mapsto \left\{c_X(\tau; \btheta):\tau\in\Gamma\right\}$, where $\Gamma = \{0, \cdots,q\}$, is one-to-one.
\item Let the latent process $\{X_t\}$ be the MA(2) process defined by,
\begin{equation}
	X_t = \sigma\left(\epsilon_t + \theta_2\epsilon_{t-2}\right), 
\end{equation}
where the innovations $\epsilon_t$ are i.i.d and have a standard normal distribution and $\sigma>0$. The parameters of the model are $(\theta_2, \sigma)$, and the parameter set $\Theta = \R\times\R\backslash\{0\}$ ensures that the mapping $\btheta\mapsto \left\{c_X(\tau; \btheta):\tau\in\Gamma\right\}$, where $\Gamma = \{0, 2\}$, is one-to-one. Note that observing lag-$1$ is not required here as we have assumed $\theta_1=0$ in the model. 
\end{enumerate}

The definition of a \emph{significant correlation contribution} constrains how much energy adds up for any fixed lag $\tau\in\Gamma$. We see directly from~\eqref{eq=expectedAutocovSequence1} that if we assume a significant correlation contribution, the expectation of the estimated autocovariance of $\Y_t$ does not vanish with the length of the observation $N$, at least for lags in $\Gamma$.
This allows for consistent estimation of the parameter $\btheta$ as we will see in Section \ref{sec=consistency}. As a trivial counterexample, assume for instance that $c_g\sN(\tau)$ goes to zero when $N$ goes to infinity. Then $\hat{c}_{\Y}\sN(\tau)$ in~\eqref{eq=biasedAutocovEst} goes to zero as well, independently of the parameter vector $\btheta$, resulting either in infeasible estimation or requiring a change of estimation approach.

Asymptotically stationary modulated processes are a subclass of modulated processes with a significant correlation contribution. 
Specifically, for the class of asymptotically stationary modulated processes,~\eqref{estimate} requires that $c_g\sN(\tau)$ converges to the non-zero quantity $R_g(\tau)$, which is a stronger requirement than~\eqref{eq=assumption1} where we only require an asymptotic positive lower bound rather than convergence.

\subsection{Missing observations}
\label{sec=missing}
A particularly enticing use of modulated processes is to account for missing observations in stationary time series. 
Let $\{X_t:t\in\N\}$ be a stationary process. For each time point $t\in\N$, we set \citep{Parzen1963},
\begin{equation}
	g_t= \left\{
    \begin{array}{ll}
        0 & \mbox{if $X_t$ is missing} \\
        1 & \mbox{if $X_t$ is observed}
    \end{array}
\right.
.
\end{equation}
The process $\Y_t=g_t X_t$ is formed at all time points $t\in\N$, forming a modulated process in the sense of Definition~\ref{def=modulatedprocesses}.

An example where the missing observation pattern is deterministic and leads to an asymptotically stationary modulated process is the case of $(k,l)$-periodically missing data treated by \citet{R.H} and \citet{Parzen1963}. This corresponds to observing the $k$ first values, missing the $l$ next values, observing the $k$ next values, and so on. Note that \citet{Parzen1963} requires $k>l$ for non-parametric estimation of the spectral density of $X_t$ based on ~\eqref{estimate}.
Our model of modulated processes with significant correlation contribution allows for $k\leq l$, as long as we observe the lags in $\Gamma$.
A generalization of this missing data scheme was introduced by \citet{clinger1976} with an application to oceanography.

Missing observations can also occur according to a random mechanism. This can be modelled by a random modulation sequence taking values zero and one
\citep{Schein1965, Bloomfield1970}, when the random mechanism according to which missing points occur is independent from the observed process, which we shall assume.
Conditioning on the observed modulation function, we then return to the deterministic modulating sequence described in this paper. 
Most works, to our knowledge, have assumed some sort of stationarity for the random modulation sequence, i.e. that the sample autocovariance of the modulation sequence converges almost surely to a non-zero value at all lags~\citep{Dunsmuir1981b, Dunsmuir1981c}.
Some authors do not require such an assumption but have treated only specific models, usually autoregressive models \citep{Jones1980,Broersen2004}.
The definition of a modulated process with a significant correlation contribution in such a situation needs to be understood in a probabilistic fashion, i.e. we require that Property 2 of Definition \ref{def=univariateSignificantCorrel} be satisfied with probability one. Indeed, if one sees the general random experiment as a two-step experiment, where first the random modulating sequence $\{g_t\}$ is generated and observed and then a stationary process $\{X_t\}$ is modulated by this modulating sequence to produce $\{\Y_t\}$, then with probability one the moduating sequence $\{g_t\}$ in the first step makes $\{\Y_t\}$ a modulated process with significant correlation contribution. 
Such a situation will be described by saying that $\{\Y_t\}$ is a modulated process with an almost surely significant correlation contribution.
We shall now give a few examples of cases satisfying the stated conditions.
\begin{enumerate}
	\item Let $X_t$ be an AR($p$) Gaussian process with mean zero. If we set $\Gamma=\left\{ 0,\cdots,p\right\}$, and if the missing data occurs deterministically according to a $(k,l)$-periodic pattern, 
	$k\geq p$ is a sufficient condition for the resulting modulated process to have a significant correlation contribution. This is because we are able to observe an infinite number of time the lags in $\Gamma$. We do not require any additional condition on $l$.
	\item \label{ex=missingDataScheme} Let $X_t$ be an AR($p$) process, and consider the missing data scheme treated by \citet{Schein1965}, where the random mechanism is a sequence of Bernoulli i.i.d trials with identical probability of success (to be understood as \emph{observation} here) $0<p\leq 1$. 
	According to the strong central limit theorem, for all lag $\tau\in\N$,  $c_g\sN(\tau)$ converges a.s. to $p^2>0$ and therefore $\liminf\limits_{N\rightarrow\infty}\left|c_g\sN(\tau)\right| > 0$ a.s. Therefore the observed process is a modulated process with an almost surely significant correlation contribution.
	\item\label{example=missingDatasch} Consider the random mechanism where the sequence $\{g_t\}$ is generated according to
\begin{equation}
	g_t \sim \mathcal{B}(p_t),
\end{equation}
where $\mathcal{B}(p)$ represents the Bernoulli distribution with parameter $p$, and where we set 
\begin{equation}
	\label{eq:missing3}
	p_t = \mathcal{P} + A_p\cos\left(\omega_p t\right),
\end{equation}
with $0<\mathcal{P}<1$, $0\leq A_p<\min\left(\mathcal{P}, 1-\mathcal{P}\right)$ (which ensures $0<\mathcal{P}-A_p\leq p_t\leq1, \forall t\in\N$), and $\omega_p\in[-\pi,\pi]$.
The Bernoulli parameters $p_t$ as given by~\eqref{eq:missing3} will oscillate periodically around their mean value $\mathcal{P}$. This also leads to
$\liminf\limits_{N\rightarrow\infty}\left|c_g\sN(\tau)\right| > 0$ a.s., using the fact that $p_t$ is bounded below by $\mathcal{P}-A_p>0$.
\end{enumerate}
In section \ref{sec=missingSims} we will provide a simulation study based on example \ref{example=missingDatasch}. This is novel in comparison of previsouly studied missing observation schemes as we do not make an assumption of stationarity for the process $g_t$ \citep{Dunsmuir1981}.

\subsection{Sampling properties of modulated processes}
\label{sec=samplingproperties}
In this section we shall review and study some distributional properties of the periodogram of a modulated time series. \citet{Dunsmuir1981} used the periodogram as the basis for designing pseudo-likelihood 
methods for asymptotically stationary modulated time series, with an emphasis on treating the problem of missing data. 
Similarly, in Section \ref{sec=estimation} we will use the results of this section to formulate a pseudo-likelihood using the periodogram, for our class of modulated processes with significant correlation contribution. Herein we shall denote $\Omega_N$ the set of Fourier frequencies $\frac{2\pi}{N}\cdot\left( -\lceil\frac{N}{2}\rceil+1, \cdots, -1, 0, 1, \cdots, \lfloor\frac{N}{2}\rfloor \right)$.

We denote $\mathbf{\Y}=\{\Y_t:t=0,\cdots,N-1\}$ as a single realization of a length-$N$ sample of a modulated process $\{\Y_t\}$ defined in Definition \ref{def=modulatedprocesses}. The unobserved sample of the latent stationary process is denoted $\mathbf{X}=\{X_t:t=0,\cdots,N-1\}$ accordingly.
The squared modulus of the Fourier transform of the time series $\mathbf{X}$, known as the \emph{periodogram}, is a common statistic in stationary time series analysis~\citep{SpectralAnalysis}, and is given by
\begin{equation}
\label{eq=defPeriodogram}
	\hat{S}_X\sN(\omega) = \frac{1}{N}\left|\sum_{t=0}^{N-1}{X_t e^{-i\omega t}}\right|^2,\ \omega\in\R.
\end{equation}
Note that this quantity is $2\pi$-periodic, i.e. $\hat{S}_{X}\sN(\omega+2\pi)=\hat{S}_{X}\sN(\omega)$, $\omega\in\R$. The periodogram of the sample $\bold{X}$ is an asymptotically unbiased estimator of the spectral density of the stationary process $\{X_t\}$, i.e. $\allowbreak\lim_{N\rightarrow\infty}\E\{\hat{S}_X\sN(\omega);\btheta\}=2\pi S_X(\omega;\btheta)$ for all $\omega\in[-\pi,\pi)$~\citep{BrockwellDavisTimeSeries}. However the variance of the periodogram does not decrease to zero as the sample size increases.
A consistent nonparametric estimator of a smooth spectral density $S_X(\omega;\boldsymbol{\theta})$ of the latent process $\{X_t\}$, were it to be directly observed, could be obtained by smoothing the periodogram across frequencies~\citep[p. 235--253]{SpectralAnalysis}, as long as $S_X(\omega;\boldsymbol{\theta})$ is continuous.

	For the modulated process $\{\Y_t\}$, the latent time series $\{X_t\}$ is not observed, so we instead compute the periodogram of the modulated (and observed) process itself, $\hat{S}_{\Y}\sN(\omega)$, and we define the expected periodogram to be
	\[
		\overline{S}_{\Y}^{(N)}(\omega;\boldsymbol{\theta}) = \E\left\{\hat{S}_{\Y}\sN(\omega);\; \boldsymbol{\theta}\right\}, \ \ \omega\in\R.
	\]
Note that this quantity is also $2\pi$-periodic.
It is necessary to understand how modulation in the time domain will affect the expected periodogram.
Proposition \ref{prop=freqPropSbar} gives more insight on how $\overline{S}_{\Y}\sN(\omega;\boldsymbol{\theta})$ relates to the modulating sequence $g_t$ and the spectral density $S_X(\omega;\boldsymbol{\theta})$ of the latent stationary process $\{X_t\}$.

\begin{proposition}[Expectation of the periodogram of a modulated time series]
	\label{prop=freqPropSbar}
	The expectation of the periodogram of the modulated time series takes the form
	\begin{equation}
		\label{eq=Sbar1}
		\overline{S}_{\Y}\sN(\omega;\boldsymbol{\theta}) 
		= 2\pi\int_{-\pi}^\pi{S_X(\omega-\lambda;\boldsymbol{\theta})S_g\sN(\lambda)d\lambda}, \ \forall\omega\in\R,
	\end{equation}
	which is a periodic convolution.
	Here $S_g\sN(\lambda)$ is the squared value of the Fourier Transform 
	of the finite sequence $\{g_t\}_{t=0,\cdots,N-1}$ i.e.
	\[
		S_g\sN(\lambda) = \frac{1}{2\pi N}\left|\sum_{t=0}^{N-1}{g_t e^{-i\lambda t}}\right|^2,
	\]
	defined for $\lambda\in\R$ and which is $2\pi$ periodic.
\end{proposition}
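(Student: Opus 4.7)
The plan is to compute $\overline{S}_{\widetilde{X}}^{(N)}(\omega;\btheta)$ by expanding the squared modulus in the definition of the periodogram, taking the expectation inside the resulting finite double sum, and then identifying the outcome as a convolution through the spectral representation of $c_X$.

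First I would write
\begin{equation*}
\overline{S}_{\widetilde{X}}^{(N)}(\omega;\btheta) = \frac{1}{N}\E\left\{\sum_{t=0}^{N-1}\sum_{s=0}^{N-1} g_t g_s X_t X_s\, e^{-i\omega(t-s)}\right\} = \frac{1}{N}\sum_{t=0}^{N-1}\sum_{s=0}^{N-1} g_t g_s\, c_X(t-s)\, e^{-i\omega(t-s)},
\end{equation*}
where the $g_t$'s have been pulled outside the expectation since they are deterministic, and $\E\{X_t X_s\} = c_X(t-s)$ by stationarity of $\{X_t\}$ (and since $\mu_X \equiv 0$). The finiteness of the double sum makes the interchange of expectation and summation immediate.

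Next I would invoke the spectral representation stated earlier in the excerpt,
\begin{equation*}
c_X(t-s) = \int_{-\pi}^{\pi} S_X(\lambda;\btheta)\, e^{i\lambda(t-s)}\, d\lambda,
\end{equation*}
valid since $c_X$ is absolutely summable. Substituting and exchanging the (finite) double sum with the integral — which is justified by absolute summability of $c_X$ and boundedness of $\{g_t\}$ on the finite index set — gives
\begin{equation*}
\overline{S}_{\widetilde{X}}^{(N)}(\omega;\btheta) = \int_{-\pi}^{\pi} S_X(\lambda;\btheta)\, \frac{1}{N}\left|\sum_{t=0}^{N-1} g_t\, e^{-i(\omega-\lambda)t}\right|^2 d\lambda = 2\pi\int_{-\pi}^{\pi} S_X(\lambda;\btheta)\, S_g^{(N)}(\omega-\lambda)\, d\lambda,
\end{equation*}
where the inner sum has been recognized as $2\pi S_g^{(N)}(\omega-\lambda)$.

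Finally, I would make the change of variables $\lambda \mapsto \omega - \lambda$ in the integral; because both $S_X(\cdot;\btheta)$ and $S_g^{(N)}(\cdot)$ are $2\pi$-periodic, the integration range $[-\pi,\pi]$ is preserved and the identity~\eqref{eq=Sbar1} follows. The only step that requires any real care is this last identification as a \emph{periodic} convolution, since the substitution shifts the interval of integration; the argument relies on both spectral densities sharing the same period $2\pi$. The rest of the proof is essentially bookkeeping on finite sums.
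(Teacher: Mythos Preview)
Your argument is correct and is the standard derivation of this well-known identity. The paper does not actually prove the proposition; it simply cites \citet[p.~562]{Dunsmuir1981}, so there is no detailed comparison to make---your direct computation (expand the periodogram as a double sum, insert the spectral representation of $c_X$, factor the double sum into $|\sum g_t e^{-i(\omega-\lambda)t}|^2$, and use $2\pi$-periodicity to recast as a periodic convolution) is precisely the classical route and almost certainly what the cited reference does.
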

\begin{proof}
	The proof for this proposition, which is a well-known result, can be found in~\citet[p.~562]{Dunsmuir1981}.
\end{proof}
When $g_t=1$ everywhere, which corresponds to observing the stationary latent process directly, the quantity $S_g\sN(\lambda)$ is the usual Féjer kernel \citep{Bloomfield2000} defined by,
\begin{equation}
\label{eq=FejerKernel}
	\mathcal{F}^{(N)}(\lambda) = \frac{\sin^2\left(\frac{N\lambda}{2}\right)}{2\pi N\sin^2\left(\frac{\lambda}{2}\right)}, \ \ \forall\lambda\in\mathbb{R}\setminus\Omega_N,
\end{equation}
which behaves asymptotically (as $N$ tends to infinity) as a Dirac delta-function centred at zero. This explains why the periodogram is, asymptotically, an unbiased estimator of the spectral density of a stationary process up to a multiplicative factor of $2\pi$~\citep{BrockwellDavisTimeSeries}.

When $g_t$ is such that the modulated process is asymptotically stationary, \citet{Dunsmuir1981b} approximate $\frac{1}{2\pi}\sum_{\tau=-\infty}^\infty{\gamma(\tau)e^{i\omega\tau}}$, where $\gamma(\tau) = R_g(\tau)c_X(\tau)$ using the notation of~\eqref{asymptotic}, for $\omega$ at Fourier frequencies by,
\begin{equation}
	\label{eq=DunsmuirSpectrum}
	\widetilde{S}_{\Y}^{(D)}(\omega;\btheta) = \frac{2\pi}{N}\sum_{\lambda\in\Omega_N}{S_X(\omega-\lambda;\btheta)S_g^{(N)}(\lambda)}.
\end{equation}

When $g_t$ is such that the modulated process $\Y_t$ has a significant correlation contribution, we derive the exact value of $\overline{S}_{\Y}\sN(\omega;\boldsymbol{\theta})$ by using the theoretical autocovariances of the latent model, in a similar fashion as in \citet{WhittleAdam} for stationary processes. This is the result of Proposition \ref{prop=expectationPeriodogram2}, which follows.
\begin{proposition}[Computation of the expected periodogram]
	\label{prop=expectationPeriodogram2}
	Let $\omega\in\R$.
	We have
	\begin{equation}
		\label{eq=computationOfPeriodogram}
		\overline{S}_{\Y}\sN(\omega;\boldsymbol{\theta}) = 
		2\mathcal{R}\left\{\sum_{\tau=0}^{N-1}{\overline{c}_{\Y}\sN(\tau;\boldsymbol{\theta})e^{-i\omega \tau}}\right\} 
		- \overline{c}_{\Y}\sN(0;\boldsymbol{\theta}),
	\end{equation}
	where $\overline c_{\Y}\sN(\tau;\bm\theta)$ is defined in~(\ref{eq=expectedAutocovSequence1}).
	By defining $\overline{c}_{\Y}\sN(-\tau;\boldsymbol{\theta}) = \overline{c}_{\Y}\sN(\tau;\boldsymbol{\theta})$ for $\tau=1, \cdots, N-1$ we can (equivalently) express this relationship as
	\[
		\overline{S}_{\Y}\sN(\omega;\boldsymbol{\theta}) =
		\sum_{\tau=-(N-1)}^{N-1}{\overline{c}_{\Y}\sN(\tau;\boldsymbol{\theta})e^{-i\omega \tau}},
	\]
\end{proposition}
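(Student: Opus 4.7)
The plan is to expand the squared modulus in the definition of the periodogram, push the expectation inside, and then reorganize the resulting double sum by time lag. First I would write
\begin{equation*}
\hat{S}_{\Y}\sN(\omega) = \frac{1}{N}\sum_{t=0}^{N-1}\sum_{s=0}^{N-1}\Y_t\Y_s e^{-i\omega(t-s)},
\end{equation*}
so that, using linearity of expectation and the factorization $\E\{\Y_t\Y_s\} = g_t g_s\, c_X(t-s;\btheta)$ which comes directly from the definition of a modulated process and the stationarity of $X_t$, we obtain
\begin{equation*}
\overline{S}_{\Y}\sN(\omega;\btheta) = \frac{1}{N}\sum_{t=0}^{N-1}\sum_{s=0}^{N-1} g_t g_s\, c_X(t-s;\btheta)\, e^{-i\omega(t-s)}.
\end{equation*}

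Next I would change variables to $\tau = t-s$ and split the double sum according to the sign of $\tau$. For $\tau = 0$ the inner sum contributes $\frac{1}{N}\sum_{t=0}^{N-1} g_t^2\, c_X(0;\btheta) = c_g\sN(0)\, c_X(0;\btheta) = \overline{c}_{\Y}\sN(0;\btheta)$ by~\eqref{eq=expectedAutocovSequence1}. For a fixed $\tau \in \{1,\dots,N-1\}$ the admissible pairs are $s=0,\dots,N-1-\tau$ with $t=s+\tau$, giving a contribution of $c_g\sN(\tau)\, c_X(\tau;\btheta)\, e^{-i\omega\tau} = \overline{c}_{\Y}\sN(\tau;\btheta)\, e^{-i\omega\tau}$. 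The $\tau < 0$ branch is handled identically by swapping the roles of $t$ and $s$, producing the complex conjugate term $\overline{c}_{\Y}\sN(|\tau|;\btheta)\, e^{i\omega|\tau|}$ for each positive lag.

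Collecting these three pieces gives
\begin{equation*}
\overline{S}_{\Y}\sN(\omega;\btheta) = \overline{c}_{\Y}\sN(0;\btheta) + 2\mathcal{R}\!\left\{\sum_{\tau=1}^{N-1}\overline{c}_{\Y}\sN(\tau;\btheta)\, e^{-i\omega\tau}\right\},
\end{equation*}
which rearranges into~\eqref{eq=computationOfPeriodogram} once one notes that $\overline{c}_{\Y}\sN(0;\btheta)$ is real so that adding and subtracting it inside the real part yields the stated form. The alternative two-sided representation then follows immediately by defining $\overline{c}_{\Y}\sN(-\tau;\btheta) = \overline{c}_{\Y}\sN(\tau;\btheta)$ and folding the two half-sums into one symmetric sum from $-(N-1)$ to $N-1$.

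There is no genuine obstacle here; the entire argument is a book-keeping exercise on the double sum, and the only point that requires a small amount of care is the $\tau=0$ term, which is real and appears exactly once in the expansion but must be handled consistently when the sum is written using a real-part operator. No assumption on $\{g_t\}$ beyond boundedness, and no appeal to stationarity of $\Y_t$, is needed; the identity holds pointwise for every $\omega\in\R$ and every finite $N$.
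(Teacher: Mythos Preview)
Your proof is correct and follows exactly the approach the paper indicates: expand the periodogram as a double sum, take expectations, and aggregate along the diagonals $\tau=t-s$ of the covariance matrix to obtain the Fourier transform of $\overline{c}_{\Y}\sN(\tau;\btheta)$. The paper does not spell out the algebra and simply cites the standard computation in \citet[p.~334]{BrockwellDavisTimeSeries}, so your write-up is in fact more detailed than the original.
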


\begin{proof}
The proof, which is standard~\citep[page 334]{BrockwellDavisTimeSeries}, follows directly from~\eqref{eq=defPeriodogram} and~\eqref{eq=expectedAutocovSequence1} in a few lines of algebra after aggregating along the diagonal of the covariance matrix.
\end{proof}
Therefore the expectation of the periodogram of $\bold{\Y}$ is the discrete Fourier transform of the expected sample autocovariance sequence. 
This is true even though we have not assumed stationarity; it is simply a consequence of the relation between the formal definitions of~\eqref{eq=expectedAutocovSequence1} and~\eqref{eq=defPeriodogram}.
Note that calculating the Fourier transform of the sequence $\overline{c}_{\Y}\sN(\tau;\theta)$ will always give a real-valued positive $\overline{S}_{\Y}\sN(\omega; \boldsymbol{\theta})$ for $\boldsymbol{\theta}\in\Theta$, as the latter is defined as the expectation of the squared modulus of the Fourier transform of the process. 


Proposition~\ref{prop=expectationPeriodogram2} can be used to compute the expected periodogram of an asymptotically stationary modulated process. In such cases, the difference between~\eqref{eq=DunsmuirSpectrum} and Proposition~\ref{prop=expectationPeriodogram2} is that~\eqref{eq=DunsmuirSpectrum} is a finite approximation of~\eqref{eq=Sbar1}, whereas Proposition~\ref{prop=expectationPeriodogram2} is exact. The difference occurs because~\eqref{eq=DunsmuirSpectrum} does not account for the bias of the periodogram that results from leakage (see \citet{WhittleAdam}), whereas these effects are naturally accounted for in Proposition~\ref{prop=expectationPeriodogram2}.

To justify the use of the expected periodogram in the setting of modulated processes with a significant correlation contribution, 
we now consider what conditions are required for the expected periodogram to \emph{carry enough information} so that the parameter vector is identifiable within the parameter set $\Theta$. 

\begin{proposition}[Identifiability of the expected periodogram]
\label{prop=identifiabilityViaPeriodogram}
If the modulated process has a significant correlation contribution, the expected periodogram is a one-to-one (i.e. injective) mapping from the parameter set $\Theta$ to the set of non-negative continuous functions on $[-\pi,\pi]$, for a large enough sample size. 
More specifically, for two distinct parameter vectors $\btheta$ and $\btheta'$, the expected periodograms $\overline{S}_{\Y}\sN(\omega;\btheta)$ and $\overline{S}_{\Y}\sN(\omega;\btheta')$ cannot be equal for all Fourier frequencies $\frac{2\pi}{N}\left( -\lceil\frac{N}{2}\rceil+1, \cdots, -1, 0, 1, \cdots, \lfloor\frac{N}{2}\rfloor \right)$.
\end{proposition}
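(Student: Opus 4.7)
The plan is to use Proposition~\ref{prop=expectationPeriodogram2} to rewrite $\overline{S}_{\Y}\sN(\omega_k;\btheta)$, evaluated at the $N$ Fourier frequencies $\omega_k = 2\pi k/N \in \Omega_N$, as a standard length-$N$ discrete Fourier transform of an aliased sequence. Concretely, using the symmetry $\overline{c}_{\Y}\sN(-\tau) = \overline{c}_{\Y}\sN(\tau)$ together with $e^{-i\omega_k(\tau-N)} = e^{-i\omega_k\tau}$, the representation in Proposition~\ref{prop=expectationPeriodogram2} becomes
\[
\overline{S}_{\Y}\sN(\omega_k;\btheta) = \overline{c}_{\Y}\sN(0;\btheta) + \sum_{\sigma=1}^{N-1}\bigl[\overline{c}_{\Y}\sN(\sigma;\btheta) + \overline{c}_{\Y}\sN(N-\sigma;\btheta)\bigr]e^{-i\omega_k\sigma}.
\]
By invertibility of the DFT, the assumption that $\overline{S}_{\Y}\sN(\omega_k;\btheta) = \overline{S}_{\Y}\sN(\omega_k;\btheta')$ for every $\omega_k\in\Omega_N$ is therefore equivalent to equality of the bracketed aliased sequence for every $\sigma \in \{0,\ldots,N-1\}$.

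Arguing by contradiction, suppose such equality holds. Substituting $\overline{c}_{\Y}\sN(\tau;\btheta) = c_g\sN(\tau)\,c_X(\tau;\btheta)$ from~\eqref{eq=expectedAutocovSequence1}, each equation for $\sigma \geq 1$ reads
\[
c_g\sN(\sigma)\bigl[c_X(\sigma;\btheta) - c_X(\sigma;\btheta')\bigr] = -\,c_g\sN(N-\sigma)\bigl[c_X(N-\sigma;\btheta) - c_X(N-\sigma;\btheta')\bigr].
\]
Since $\btheta\neq\btheta'$, Point 1 of Definition~\ref{def=univariateSignificantCorrel} produces some $\tau_*\in\Gamma$ with $\Delta_*:=|c_X(\tau_*;\btheta) - c_X(\tau_*;\btheta')|>0$; the special case $\tau_*=0$ is handled directly by the $\sigma=0$ equation $c_g\sN(0)c_X(0;\btheta) = c_g\sN(0)c_X(0;\btheta')$ together with $c_g\sN(0)>0$.

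Plugging $\sigma=\tau_*$ into the displayed identity, Point 2 of the definition (equivalently~\eqref{eq=equivalenceoflimitinf}) yields $|c_g\sN(\tau_*)| \geq \alpha_{\tau_*}$ for all $N\geq N_{\tau_*}$, so the left-hand side has modulus at least $\alpha_{\tau_*}\Delta_*>0$. For the right-hand side, directly counting the at most $\tau_*$ surviving terms in~\eqref{eq=autocovOfg} gives $|c_g\sN(N-\tau_*)| \leq \tau_*\|g\|_\infty^2/N$, while the stationary bound $|c_X(\tau)|\leq c_X(0)$ controls $|c_X(N-\tau_*;\btheta) - c_X(N-\tau_*;\btheta')| \leq c_X(0;\btheta) + c_X(0;\btheta')$; hence the right-hand side is $O(1/N)$ as $N\to\infty$, which becomes strictly smaller than $\alpha_{\tau_*}\Delta_*$ for $N$ large enough (depending on $\btheta,\btheta'$), yielding the required contradiction. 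The main obstacle is exactly this aliasing: sampling a trigonometric polynomial of degree $N-1$ at only $N$ Fourier frequencies is not enough to recover its $2N-1$ coefficients, and one must instead exploit the asymmetry between the genuine lag $\tau_*$ and its aliased partner $N-\tau_*$. The significant-correlation hypothesis is precisely what enforces this asymmetry, keeping the informative contribution $c_g\sN(\tau_*)$ bounded below while the aliased contribution $c_g\sN(N-\tau_*)$ vanishes automatically from the shrinking number of terms in its defining sum.
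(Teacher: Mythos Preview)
Your proof is correct and follows the same high-level strategy as the paper: pass from the expected periodogram to the expected sample autocovariances via Proposition~\ref{prop=expectationPeriodogram2}, then invoke the significant-correlation-contribution hypothesis at some lag $\tau_*\in\Gamma$ where $c_X(\tau_*;\btheta)\neq c_X(\tau_*;\btheta')$ and $|c_g\sN(\tau_*)|$ is bounded below.

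Where you go beyond the paper is in treating the aliasing that arises when the degree-$(N-1)$ trigonometric polynomial $\overline{S}_{\Y}\sN(\cdot\,;\btheta)$ is sampled at only the $N$ Fourier frequencies. The paper simply asserts that since the symmetric length-$(2N-1)$ sequences $\{\overline{c}_{\Y}\sN(\tau;\btheta)\}$ and $\{\overline{c}_{\Y}\sN(\tau;\widetilde{\btheta})\}$ differ, ``their finite Fourier transforms $\{\overline{S}_{\Y}\sN(\omega;\cdot):\omega\in\Omega_N\}$ are, by the bijective nature of the Fourier transform, not equal either.'' That step is unproblematic for the first clause of the proposition (injectivity as continuous functions on $[-\pi,\pi]$), but for the second clause it is not a genuine DFT pair: in the real-valued case the symmetric sequence has $N$ free entries while the even function $\omega\mapsto\overline{S}_{\Y}\sN(\omega)$ takes only about $N/2+1$ distinct values on $\Omega_N$, and from those one can recover only the combinations $\overline{c}_{\Y}\sN(\sigma)+\overline{c}_{\Y}\sN(N-\sigma)$, not each $\overline{c}_{\Y}\sN(\sigma)$ separately.

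Your argument closes this gap cleanly: you rewrite $\overline{S}_{\Y}\sN(\omega_k;\btheta)$ as an honest length-$N$ DFT of the aliased sequence $\overline{c}_{\Y}\sN(\sigma)+\overline{c}_{\Y}\sN(N-\sigma)$, and then exploit the asymmetry that $|c_g\sN(\tau_*)|\geq\alpha_{\tau_*}$ is bounded below while $|c_g\sN(N-\tau_*)|\leq \tau_* g_{\max}^2/N$ is $O(1/N)$ simply because its defining sum in~\eqref{eq=autocovOfg} has only $\tau_*$ terms. This makes precise why the aliased partner is asymptotically negligible, and is genuine additional content compared with the paper's one-line appeal to Fourier bijectivity.
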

\begin{proof}
		 Let $\btheta, \widetilde{\btheta}\in\Theta$ be distinct parameter vectors and let $N$ be a positive integer. Let $\Gamma$ be as given by Definition~\ref{def=univariateSignificantCorrel}.
By the assumption of significant correlation contribution, the finite sequences $\left\{c_X(\tau; \btheta):\tau\in\Gamma\right\}$ and $\{c_X(\tau; \widetilde{\btheta}):\tau\in\Gamma\}$ are not equal. Since $\overline{c}_{\Y}\sN(\tau;\btheta) = c_g\sN(\tau)c_X(\tau;\btheta)$ for $\tau\in\Gamma$, and according to~\eqref{eq=equivalenceoflimitinf}, for $N$ large enough the sequences $\{\overline{c}\sN_{\Y}(\tau; \btheta):\tau\in\Gamma\}$ and $\{\overline{c}\sN_{\Y}(\tau; \widetilde{\btheta}):\tau\in\Gamma\}$ are not equal. Hence for $N$ large enough the sequences $\{\overline{c}\sN_{\Y}(\tau; \btheta):\tau=-(N-1),\cdots,N-1)\}$ and $\{\overline{c}\sN_{\Y}(\tau; \widetilde{\btheta}):\tau=-(N-1),\cdots,N-1)\}$ are not equal. Their finite Fourier transforms $\{\overline{S}_{\Y}\sN(\omega;\btheta):\omega\in\Omega_N\}$ and $\{\overline{S}_{\Y}\sN(\omega;\widetilde{\btheta}):\omega\in\Omega_N\}$, are by the bijective nature of the Fourier transform, not equal either.
\end{proof}
This means that for two distinct parameters vectors $\btheta,\btheta'\in\Theta$, we will have two distinct expected periodograms. This is a necessary condition for an estimation procedure based on the expected periodogram. 
We will propose such an estimation procedure in Section~\ref{sec=estimation}, and derive its consistency and convergence rate in Section~\ref{sec=consistency}.

\subsection{Bivariate modulated processes}
\label{sec=bivariate}
It is common in pratical applications to observe more than one time series at any time, and to analyse a set together. Often the series in the set are related via phase-shifts and other small temporal inhomogeneities, see e.g.~\citet{allen1996distinguishing,runstler2004modelling,allefeld2009mental,lilly2012analysis}.
Bivariate nonstationary processes can be challenging to model, as they may not be representable in the same nonstationary oscillatory family~\citep{tong1973some,tong1974time}.
To explore the nature of multivariate modulation, we shall investigate the representation 
of bivariate processes. For ease of exposition we shall represent such series using complex-valued time series, see~\citet{walker1993complex}. We shall continue to assume that the latent process, now denoted $Z_t$ for complex-valued processes, is Gaussian and zero-mean, leaving only the second order structure to be modelled.
For complex-valued processes both the autocovariance $c_Z(\tau)=\E\{Z_t^*Z_{t+\tau}\}$ (the star denotes conjugation) and the relation $r_Z(\tau)=\E\{Z_t Z_{t+\tau}\}$ sequences need to be modelled~\citep{walden2013rotary}. Complex-valued processes, unlike real-valued, no longer have a spectrum that needs to satisfy Hermitian symmetry, and if the series represents motion in the plane, the positive and negative frequencies represent clockwise and anti-clockwise rotations respectively.
Following the classical modelling framework~\citep{miller1969complex} for complex-valued processes we shall assume that the relation sequence takes the value zero for all lags. The complex-valued process is then said to be \emph{proper}. The assumption of propriety has the consequence of directly extending equation~\eqref{modeqn} to the complex-valued case from the real-valued case. Specifically, let $Z_t$ be a complex-valued Gaussian proper zero-mean process, a complex-valued modulated process is defined as one taking the form,
\begin{equation}
	\label{eq=complexMod}
	\widetilde{Z}_t = g_tZ_t,
\end{equation}
at all times $t\in\N$, where $g_t=\rho_t e^{i\phi_t}$ is a bounded modulation sequence.
We note that for complex-valued time series the modulation sequence is complex-valued. 
With this definition, the modulation series $g_t$ accomplishes a time-dependent rescaling or expansion/dilation, from $\rho_t$, together with a time-dependent rotation, from $e^{i\phi_t}$.

The autocovariance of the complex-valued modulated process $\widetilde{Z}_t$ at times $t_1$ and $t_2$ is given by the conveniently simple form,
	\begin{equation}
		\label{autocovY1}
		\nonumber
		c_{\widetilde{Z}}(t_1,t_2;\boldsymbol{\theta})=\E\left\{\widetilde{Z}_{t_1}^*\widetilde{Z}_{t_2};\btheta\right\}=g_{t_1}^*g_{t_2}c_Z(t_2-t_1;\boldsymbol{\theta}) = \rho_{t_1}\rho_{t_2} e^{i(\phi_{t_2}-\phi_{t_1})}c_Z(t_2-t_1;\boldsymbol{\theta}),
	\end{equation}
	and $c_{\widetilde{Z}}(t_1,t_2;\boldsymbol{\theta})$ fully characterizes the process.
	Note that this quantity is not only a function of the lag $t_2-t_1$ as the process is no longer stationary.
	Similarly to the univariate case cf.~\eqref{eq=autocovOfg}, let $N$ be any positive integer, we define for $\tau=0,\cdots,N-1$,
	\begin{equation}
		\label{eq=cgcomplex}
		c_g\sN(\tau) = \frac{1}{N}\sum_{t=0}^{N-\tau-1}{g_t^*g_{t+\tau}}.
	\end{equation}
	Note that when $g_t$ is real-valued (\ref{eq=cgcomplex}) and (\ref{eq=autocovOfg}) are the same.
	We also extend the notion of a significant correlation contribution for complex-valued modulated processes, which naturally mimics Definition~\ref{def=univariateSignificantCorrel}.
We define the expected periodogram of a complex-valued modulated time series as
 $\overline{S}_{\widetilde{Z}}\sN(\omega) = \E\left\{
\hat{S}_{\widetilde{Z}}\sN(\omega) ; \btheta\right\}$, which can be computed efficiently similarly to Proposition~\ref{prop=expectationPeriodogram2} for the univariate case, by replacing $\Y_t$ by $\widetilde{Z}_t$ in~\eqref{eq=expectedAutocovSequence1} and~\eqref{eq=computationOfPeriodogram}.

A univariate real-valued modulated process is stationary if and only if the modulating sequence is a constant.
A necessary and sufficient condition on the modulating sequence for the complex-valued modulated process~\eqref{eq=complexMod} to be stationary is more complicated to obtain, and is determined in the following proposition.

\begin{proposition}[Stationary bivariate modulated processes]
	\label{prop=stationaryModulatedProcesses}
	Let $\widetilde{Z}_t$ be the complex-valued modulated process defined in~\eqref{eq=complexMod}.
	First, assume the latent process $\{Z_t\}$ is a white noise process.
	Then the modulated process $\{\widetilde{Z}_t\}$ is stationary if and only if the modulating sequence $g_t = \rho_te^{i\phi_t}$ is 
	of constant modulus, i.e. $\rho_t=a\geq0$. In such case the modulated process is a white noise process with variance $a^2 \E\{|Z_0|^2\}$.

	More generally, assume the stationary latent process $\{Z_t\}$ is not a white noise process, and let 
$\mu = \gcd\{\tau\neq0\in\N :|c_Z(\tau;\boldsymbol{\theta})|>0\}$ where $\gcd$ denotes the greatest common divisor.
	Then the modulated process is stationary if and only if $\{g_t\}$ is zero everywhere or if there
	exists two constants $a>0$ and $\gamma\in[-\pi,\pi)$ such that for all $t\in\N$, letting $r=t\bmod\mu$ be the remainder of $t$ divided by $\mu$,
	\begin{eqnarray*}
		\rho_t &=& a\\
		\phi_t &=& \phi_{r} + \gamma \left\lfloor{\frac{t}{\mu}}\right\rfloor \mod 2\pi,
	\end{eqnarray*}
	where $\left\lfloor{\frac{t}{\mu}}\right\rfloor$
	denotes the floor of $\frac{t}{\mu}$ and$\mod 2\pi$ indicates that the equality is true up to an additive multiple of $2\pi$.
	In this case the spectral density of the modulated process $\{\widetilde{Z}_t\}$ is
	\begin{equation*}
		S_{\widetilde{Z}}(\omega) = a^2S_Z\left(\omega-\frac{\gamma}{\mu}\right).
	\end{equation*}
\end{proposition}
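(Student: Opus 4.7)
The plan is to reduce the stationarity requirement directly to a constraint on $g_t = \rho_t e^{i\phi_t}$. Starting from the autocovariance identity $c_{\widetilde{Z}}(t_1,t_2;\btheta) = g_{t_1}^* g_{t_2}\, c_Z(t_2-t_1;\btheta)$ given earlier, stationarity requires this depend only on $\tau = t_2-t_1$. The $\tau = 0$ case gives $|g_t|^2 c_Z(0;\btheta)$ constant, and (excluding the trivial case $Z\equiv 0$) this forces $\rho_t = a$ for some $a \geq 0$. The case $a = 0$ is the ``zero everywhere'' branch. For the white-noise case the only nontrivial lag is $\tau = 0$, so this single constraint is sufficient and the variance formula $a^2 \E\{|Z_0|^2\}$ follows immediately; this handles the first statement.

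For the second statement, assume $Z$ is not white noise, write $g_t = a e^{i\phi_t}$ with $a > 0$, and let $\mathcal{T}_+ = \{\tau > 0 : c_Z(\tau;\btheta) \neq 0\}$. For each $\tau \in \mathcal{T}_+$, stationarity forces $\phi_{t+\tau} - \phi_t \equiv \psi(\tau) \pmod{2\pi}$ for some $\psi(\tau)$ independent of $t$. The heart of the argument is to show that $\gamma := (\phi_{t+\mu} - \phi_t) \bmod 2\pi$ is well-defined and independent of $t$, where $\mu = \gcd \mathcal{T}_+$. Since $\mu$ itself may not lie in $\mathcal{T}_+$, this requires Bezout: write $\mu = \sum_i k_i \tau_i$ with $\tau_i \in \mathcal{T}_+$ and $k_i \in \Z$. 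For any $t \geq 0$, construct a path from $t$ to $t+\mu$ that first advances by $P = \sum_{k_i > 0} k_i \tau_i$ and then retreats by $P - \mu$; by construction all intermediate indices are $\geq t$, so summing the individual shift identities yields $\phi_{t+\mu} - \phi_t \equiv \sum_i k_i \psi(\tau_i) \pmod{2\pi}$, independent of $t$. Applying the same construction to the difference of two Bezout representations (a vanishing integer combination of elements of $\mathcal{T}_+$) shows the result is independent of the representation. Iterating $q$ times yields $\phi_{q\mu+r} \equiv \phi_r + q\gamma \pmod{2\pi}$, which is the announced form.

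The converse and the spectral-density computation are then direct: under the stated form, $c_{\widetilde{Z}}(t_1,t_2) = a^2 e^{i(\gamma/\mu)(t_2-t_1)} c_Z(t_2-t_1)$ whenever $\mu \mid t_2-t_1$, and both sides vanish otherwise, so $\widetilde{Z}$ is stationary with lag-$\tau$ autocovariance $a^2 e^{i(\gamma/\mu)\tau} c_Z(\tau)$. A Fourier transform then yields $S_{\widetilde{Z}}(\omega) = a^2 S_Z(\omega - \gamma/\mu)$. The only non-routine step is the Bezout-based well-definedness of $\gamma$; the naive attempt to set $\gamma = \psi(\mu)$ fails when $\mu \notin \mathcal{T}_+$, and the forward-backward path construction is precisely what relates the values of $\psi$ at different generators of the subgroup $\mu\Z$.
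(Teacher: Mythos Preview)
Your argument follows the same skeleton as the paper's: the lag-$0$ constraint forces $\rho_t\equiv a$, and for each $\tau$ with $c_Z(\tau)\neq 0$ stationarity forces the phase increment $\phi_{t+\tau}-\phi_t$ to be a function $\psi(\tau)$ of $\tau$ alone; from this one extracts the stated arithmetic-progression form for $\phi_t$, and the converse plus spectral shift are direct computations.

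The one substantive difference is your Bezout step, and it is an improvement over the paper rather than a detour. The paper's proof simply writes the telescoping sum $\phi_t=\sum_{k=0}^{q-1}(\phi_{r+(k+1)\mu}-\phi_{r+k\mu})+\phi_r$ and sets $\gamma=\zeta(\mu)$, where $\zeta$ is only defined on lags at which $c_Z$ is nonzero. This tacitly assumes $c_Z(\mu;\btheta)\neq 0$, which need not hold: if the nonzero lags are, say, $\{4,6\}$ then $\mu=2$ but nothing forces $c_Z(2)\neq 0$. Your forward-then-backward path built from a Bezout representation $\mu=\sum_i k_i\tau_i$ with $\tau_i\in\mathcal{T}_+$ is exactly what is needed to show that $\phi_{t+\mu}-\phi_t$ is nonetheless well-defined and $t$-independent, closing this gap. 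Apart from this refinement the two proofs coincide.
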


\begin{proof}
	See appendix \ref{proof=stationaryModulatedProcesses}.
\end{proof}
The value of $\mu$ in Proposition \ref{prop=stationaryModulatedProcesses} depends on the location of zeros in the covariance sequence of the latent process.
In particular, if $|c_Z(1;\boldsymbol{\theta})|>0$ then $\mu = 1$ and $\widetilde{Z}_t$ is stationary only if there exists a constant $\gamma\in\R$ such that for all $t\in\N$, $\phi_t = \phi_0 + \gamma t \mod 2\pi$. If $|c_Z(2;\boldsymbol{\theta})|>0$ but $|c_Z(\tau;\boldsymbol{\theta})|=0$ for all $\tau\in\N, \tau\neq0,2$, then $\mu=2$ (this can occur with a second-order moving average process for instance). In that case the modulated process $\widetilde{Z}_t$ is stationary if and only if there exists a constant $\gamma\in[-\pi,\pi)$ such that for all $t\in\N$, $\phi_t = \phi_0 + \gamma \frac{t}{2} \mod 2\pi$ if $t$ is even, or $\phi_t = \phi_1 + \gamma \frac{t-1}{2} \mod 2\pi$ if $t$ is odd.


\subsubsection{A time-varying bivariate autoregressive process}
\label{sec=tvAR}
We now introduce the specific bivariate autoregressive model that will be used in our real-world data application. 
We consider the discrete-time complex-valued process $\{\widetilde{Z}_t:t\in\N\}$, defined by
\begin{eqnarray}
	\label{eq=tvARCdef}
	\widetilde{Z}_t &=& r e^{i\beta_t}\widetilde{Z}_{t-1} + \epsilon_t, \ \ \ \ \ \ \ t\geq1,\ \ 0\leq r < 1, \ \ \beta_t\in\R,\\
	\nonumber\widetilde{Z}_0 &\sim& \mathcal{N}_{C}\left(0, \frac{\sigma^2}{1-r^2}\right),\ \ \ \ \sigma>0,\\
	\nonumber\epsilon_t &\sim& \mathcal{N}_{C}\left(0, \sigma^2\right),
\end{eqnarray}
where $\mathcal{N}_{C}\left(0, \sigma^2\right)$ denotes the complex-valued normal distribution with mean $0$ and variance $\sigma^2$, and with i.i.d real and imaginary part. Note that the real and imaginary parts of $\epsilon_t$ then each have variance $\sigma^2/2$.
Here $0\leq r < 1$ is commonly known as either the autoregressive or the damping parameter, ensuring the mean-reversion of the process.
By mean-reversion we mean that, beginning at any time $t$, we have $\lim_{\tau\to\infty} \E\left\{\widetilde{Z}_{t+\tau}|\widetilde{Z}_t\right\} = 0$, i.e. irrespective of the size of the perturbation $\epsilon_t$ at time $t$, the process is expected to return to its mean of 0. This is seen from the following inductive relationship,
\begin{equation*}
	\tilde{Z}_{t+\tau} = r^\tau e^{i\sum_{j=1}^{\tau}{\beta_{t+j}}}\tilde{Z}_t + \sum_{j=1}^\tau {r^{\tau-j}e^{i\sum_{k=j+1}^\tau\beta_k}\epsilon_{t+j}},  \ \ \tau\geq 0,
\end{equation*}
which leads to
\begin{equation*}
	\E\left\{\tilde{Z}_{t+\tau}|\tilde{Z}_t\right\} = r^\tau e^{i\sum_{j=1}^\tau{\beta_{t+j}}}\tilde{Z}_t,
\end{equation*}
which goes to zero exponentially as $\tau$ goes to infinity, since $0\leq r<1$.
A damping parameter $r$ close to 1 will lead to a slowly-decaying autocorrelation sequence.
A value of $r$ close to 0 will lead to a process with very short memory, with the limiting behaviour of a white noise process as $r\rightarrow0$.
The parameter $\beta_t$ is a known, dimensionless time-varying frequency, which we shall take within the interval $[-\pi,\pi)$ without loss of generality.

The process \eqref{eq=tvARCdef} is a nonstationary version of the complex-valued first order autoregressive process \citep{AdamWidelyLinear} introduced by~\citet{le1988note}, and also a discrete-time analogue of the complex-valued Ornstein-Ulhenbeck (OU) process \citep{Arato1962estimation} with time-varying oscillation frequency. We now prove in Proposition \ref{prop=tvarmodulated} that the model defined in (\ref{eq=tvARCdef}) belongs to our class of bivariate modulated processes.

\begin{proposition}[Modulated process representation]
	\label{prop=tvarmodulated}
	Let  $\{\widetilde{Z}_t\}$ be the process defined in (\ref{eq=tvARCdef}).
	There exists a unit-magnitude complex-valued modulating sequence $g_t$, 
	and a stationary complex-valued proper process $\{Z_t\}$ such that $\{\widetilde{Z}_t\}$ is the modulation of $\{Z_t\}$ by the non-random sequence $\{g_t\}$.
	More explicitly, we have $\widetilde{Z}_t = g_t Z_t$, for all $t\in\N$, where,
	\begin{eqnarray}
		\label{eq=tvARcModulatingSeq}
		g_t &=& e^{i\sum_{u=1}^{t}{\beta_u}},\\ 
		\nonumber Z_t &=& r Z_{t-1} + \epsilon_t', \ \ t\geq 1,
	\end{eqnarray}
	and $Z_0\sim\mathcal{N}_{C}(0, \sigma^2/(1-r^2))$. The process $\epsilon_t'$ is a Gaussian white noise process with the same properties
	(zero-mean, variance $\sigma^2$ and independence of real and imaginary parts) as those of $\epsilon_t$. Defined as such, the latent complex-valued process $Z_t$ is stationary and proper.
\end{proposition}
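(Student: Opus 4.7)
The plan is to construct $g_t$ and $Z_t$ explicitly so that the identity $\widetilde{Z}_t = g_t Z_t$ reduces the nonstationary recursion~\eqref{eq=tvARCdef} to a time-homogeneous complex AR(1) recursion, and then to verify the claimed distributional properties. The key observation is that the only source of nonstationarity in~\eqref{eq=tvARCdef} is the rotation factor $e^{i\beta_t}$; factoring out the cumulative rotation $g_t = e^{i\sum_{u=1}^{t}\beta_u}$ should absorb this completely.

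First I would set $g_0 = 1$ and $g_t = e^{i\sum_{u=1}^{t}\beta_u}$ for $t\geq 1$, and define $Z_t := g_t^{-1} \widetilde{Z}_t$. By construction $|g_t| = 1$, so $g_t$ is a unit-magnitude modulating sequence, and $\widetilde{Z}_t = g_t Z_t$ holds trivially. Substituting this factorization into~\eqref{eq=tvARCdef} and using the telescoping relation $g_t = e^{i\beta_t} g_{t-1}$ gives
\begin{equation*}
g_t Z_t = r e^{i\beta_t} g_{t-1} Z_{t-1} + \epsilon_t = r g_t Z_{t-1} + \epsilon_t,
\end{equation*}
so that $Z_t = r Z_{t-1} + \epsilon_t'$, with $\epsilon_t' := g_t^{-1}\epsilon_t$. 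I would then identify $Z_0 = \widetilde{Z}_0 \sim \mathcal{N}_{C}(0,\sigma^2/(1-r^2))$ as required.

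Next I would verify that $\epsilon_t'$ inherits the required distributional properties. Since $g_t$ is deterministic and has unit modulus, $\epsilon_t'$ is the image of $\epsilon_t$ under a deterministic rotation: it is complex Gaussian with mean zero, $\E\{|\epsilon_t'|^2\} = |g_t|^{-2}\E\{|\epsilon_t|^2\} = \sigma^2$, and $\E\{(\epsilon_t')^2\} = g_t^{-2}\E\{\epsilon_t^2\} = 0$ by the propriety of $\epsilon_t$. Independence of the real and imaginary parts of $\epsilon_t'$ follows from the joint Gaussianity together with zero pseudo-covariance. Independence across $t$ is preserved because the rotations act pointwise in $t$.

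Finally I would argue that $Z_t$ is stationary and proper. Stationarity is the standard fact that a complex AR(1) with $|r|<1$ driven by i.i.d.\ proper complex Gaussian innovations, initialized with the stationary marginal $\mathcal{N}_{C}(0,\sigma^2/(1-r^2))$, is strictly stationary; this can be checked directly by unrolling the recursion and computing $c_Z(\tau) = r^{|\tau|}\sigma^2/(1-r^2)$ independently of $t$. Propriety, i.e.\ $\E\{Z_t Z_{t+\tau}\} = 0$ for all $t,\tau$, follows inductively from propriety of $Z_0$ and of each $\epsilon_t'$, together with their mutual independence. There is no real obstacle here; the only points requiring care are the bookkeeping in the initialization $g_0 = 1$ (so that the identity and the initial distribution match) and the verification that rotation preserves both propriety and the equal-variance condition on real and imaginary parts of the innovations.
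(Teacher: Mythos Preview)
Your proposal is correct and follows essentially the same route as the paper: define $Z_t = g_t^{-1}\widetilde{Z}_t$ with $g_t = e^{i\sum_{u=1}^t\beta_u}$, use the telescoping identity $g_t = e^{i\beta_t}g_{t-1}$ to reduce the recursion to $Z_t = rZ_{t-1} + \epsilon_t'$ with $\epsilon_t' = g_t^{-1}\epsilon_t$, and then verify that the rotated innovations remain i.i.d.\ proper complex Gaussian so that $Z_t$ is the stationary proper complex AR(1) with the correct initial distribution. The paper's proof is slightly terser on the propriety check (it writes out $Z_t = r^tZ_0 + \sum_{j=1}^t r^{t-j}\epsilon_j'$ and observes $\E\{Z_tZ_{t+\tau}\}=0$), but your inductive argument and explicit pseudo-covariance computation achieve the same thing.
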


\begin{proof}
	See Appendix \ref{proof=tvarmodulated}.
\end{proof}
The stationary latent process $Z_t$ defined in~\eqref{eq=tvARcModulatingSeq} is a stationary complex-valued first order autoregressive process, and is Gaussian. Its autocovariance sequence is given by,
\begin{equation*}
	c_Z(\tau) = \frac{\sigma^2}{1-r^2}r^\tau, \ \ \tau\in\Z.
\end{equation*}
It is easy to verify that the mapping $\left(r, \sigma\right)\mapsto \left(c_Z(0), c_Z(1)\right)$ is a one-to-one mapping. 
In the following proposition, we stipulate a sufficient condition on the frequencies $\beta_t$ so that the process defined in $(\ref{eq=tvARCdef})$ satisfies our assumption of significant correlation contribution, when represented as a modulated process as defined in Proposition \ref{prop=tvarmodulated}.
\begin{proposition}
	\label{prop=tvARcCorrelationContribution}
	Let $\tilde{Z_t}$ be the process defined by (\ref{eq=tvARCdef}).  Assume that there exists $\Xi\in[-\pi,\pi)$ and $0\leq\Delta\leq\frac{\pi}{2}$ such that for all $t\in\N$, $\left|\beta_t-\Xi\right|\leq \Delta$. Then 
	$\tilde{Z_t}$ is a modulated process with significant correlation contribution.
\end{proposition}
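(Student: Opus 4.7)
My plan is to use Proposition \ref{prop=tvarmodulated} as the starting point: it exhibits $\widetilde{Z}_t$ as the modulation of a stationary complex AR(1) latent process $Z_t$ with unit-modulus modulating sequence $g_t = e^{i\sum_{u=1}^t \beta_u}$. The task then reduces to exhibiting a finite set $\Gamma \subset \N$ verifying the two requirements of the complex-valued analogue of Definition~\ref{def=univariateSignificantCorrel}. The natural candidate is $\Gamma = \{0, 1\}$, and I expect the work to split cleanly into an identifiability check (Point~1) and a control on $|c_g\sN(\tau)|$ at $\tau = 0, 1$ (Point~2).

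For Point~1, I would use the explicit autocovariance $c_Z(\tau;\btheta) = \sigma^2 r^\tau/(1-r^2)$ already displayed just after Proposition~\ref{prop=tvarmodulated}. The pair $(c_Z(0), c_Z(1)) = \left(\tfrac{\sigma^2}{1-r^2},\, \tfrac{r\sigma^2}{1-r^2}\right)$ lets one recover $r = c_Z(1)/c_Z(0)$ and then $\sigma^2 = (1-r^2) c_Z(0)$, so the mapping $(r,\sigma) \mapsto (c_Z(0;\btheta), c_Z(1;\btheta))$ is injective on the admissible parameter set $[0,1)\times(0,\infty)$.

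For Point~2, the lag-zero case is immediate: $|g_t|=1$ gives $c_g\sN(0) = 1$ for every $N$. The real content is at lag one, where the telescoping of the phases yields
\begin{equation*}
c_g\sN(1) = \frac{1}{N}\sum_{t=0}^{N-2} g_t^* g_{t+1} = \frac{1}{N}\sum_{t=0}^{N-2} e^{i\beta_{t+1}} = \frac{e^{i\Xi}}{N}\sum_{t=0}^{N-2} e^{i(\beta_{t+1}-\Xi)}.
\end{equation*}
Writing $\delta_{t+1} = \beta_{t+1} - \Xi$ with $|\delta_{t+1}| \leq \Delta$, I would factor out $e^{i\Xi}$ and take real parts to obtain
\begin{equation*}
\bigl|c_g\sN(1)\bigr| \;\geq\; \frac{1}{N}\sum_{t=0}^{N-2} \cos(\delta_{t+1}) \;\geq\; \frac{N-1}{N}\cos(\Delta),
\end{equation*}
where the first inequality uses $|z| \geq \mathcal{R}(z)$ after removing the global phase $e^{i\Xi}$, and the second uses that $\cos$ is decreasing on $[0,\pi/2]$. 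Passing to the $\liminf$ gives $\liminf_N |c_g\sN(1)| \geq \cos(\Delta) > 0$ as soon as $\Delta < \pi/2$, completing the verification.

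The main obstacle I anticipate is the boundary case $\Delta = \pi/2$: at that value the bound degenerates to zero, so either the hypothesis should read $\Delta < \pi/2$ strictly, or one must remark that the argument still succeeds whenever the $\delta_t$ do not conspire to cancel in the unit circle. I would flag this in the write-up and otherwise treat $\Delta < \pi/2$ as the effective regime. Once Point~2 is established, Proposition~\ref{prop=tvarmodulated} combined with the identifiability observation delivers the claim directly.
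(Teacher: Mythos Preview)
Your proposal is correct and follows essentially the same route as the paper: choose $\Gamma=\{0,1\}$, check injectivity of $(r,\sigma)\mapsto(c_Z(0),c_Z(1))$, and lower-bound $|c_g\sN(1)|$ by factoring out the global phase $e^{i\Xi}$ and using $\cos(\delta_{t+1})\geq\cos(\Delta)$. The paper carries out the lag-$\tau$ bound for all $0\le\tau\le L$ with $L=\lfloor \pi/(2\Delta)\rfloor$, which is more than is needed for $\Gamma=\{0,1\}$; your streamlined version handling $\tau=0$ and $\tau=1$ separately is equivalent, and your remark about the boundary case $\Delta=\pi/2$ is apt---the paper's proof has the same degeneracy there.
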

\begin{proof}
	See Appendix \ref{proof=tvARcCorrelationContribution}.
\end{proof}

Hence the complex-valued autoregressive process defined by (\ref{eq=tvARCdef}) belongs to the class of processes with a significant correlation contribution, and the expected periodogram is a one-to-one mapping from the parameter set $[0,1)\times[0,\infty)$ to the set of non-negative continuous functions on $[-\pi,\pi]$, according to Proposition~\ref{prop=identifiabilityViaPeriodogram}---a proposition which is readily extended to the complex-valued processes of this section.
  
\section{Parametric estimation of modulated processes}
\label{sec=estimation}
We have explored a class of univariate and bivariate modulated processes. The next stage is to describe their efficient inference.
In this section we describe how the parameters of the latent model for $\{X_t\}$ can be inferred from observing a single realization of the modulated process $\{\Y_t\}$.
Most authors have focused on the problem of estimating modulated processes under the assumption of asymptotic stationarity as defined in Definition \ref{asympstat} \citep{Parzen1963,  Dunsmuir1981c, Dunsmuir1981, Iacobucci2003}. Although non-parametric estimates have been the key concern in most of the relevant literature, there have been instances where parametric estimation has been considered, see for instance \citet{Dunsmuir1981c}. Parametric estimation ensures that the estimated autocovariance sequence $\hat{c}_X(\tau)$ is non-negative definite, as opposed to using non-parametric estimates of the form given in~\eqref{estimate}. Parametric estimation is also preferable when the true model is known, as it uses the observed degrees of freedom more efficiently. Herein we consider the problem of parametric estimation for our class of modulated processes with a significant correlation contribution, which, we recall, is a generalization of asymptotically stationary modulated processes. We propose an adaptation of the Whittle likelihood \citep{whittle1953estimation}, based on the expected periodogram. 

We wish to infer the parameter vector $\boldsymbol{\theta}$ of the latent univariate stationary process $\{X_t\}$ within the parameter set $\Theta$, based on the sample $\bold{\Y}=\Y_0,\cdots,\Y_{N-1}$ and the known modulating sequence $\{g_t:t=0,\cdots,N-1\}$. 
Because it has been assumed that the latent process is a zero-mean Gaussian process, the same is true for the modulated process. The vector $\mathbf{\Y}$ is multivariate Gaussian with an expected $N\times N$ autocovariance matrix
$C_{\Y}(\boldsymbol{\theta}) = \left\{c_{\Y}(t_1,t_2;{\boldsymbol{\theta}})\right\}$ for $t_1,t_2=0,\cdots,N-1$, where the components of this matrix are given by $c_{\Y}(t_1,t_2;{\boldsymbol{\theta}}) = g_{t_1} g_{t_2} c_X(t_2-t_1;{\boldsymbol{\theta}})$.
However, the parameter vector $\boldsymbol{\theta}$ of the latent process $\{X_t\}$ can be uniquely determined from the modulated process $\{\Y_t\}$ only if $\boldsymbol{\theta}\rightarrow \left\{c_{\Y}(t_1,t_2;{\boldsymbol{\theta}}):t_1,t_2\in\N\right\}$ is injective, i.e. there is no ${\boldsymbol{\theta}}'\in\Theta$ such that $\btheta\neq\btheta'$ and $c_{\Y}(t_1,t_2;{\boldsymbol{\theta}}) = c_{\Y}(t_1,t_2;{\boldsymbol{\theta'}})\ \forall t_1,t_2\in\N$.
This condition is clearly achieved under the assumption of a modulated process with significant correlation contribution.
The negative of the exact time-domain Gaussian log-likelihood is proportional to

		\begin{equation}
			\label{eq=timedomainLKH}
			\ell_{G}(\boldsymbol{\theta}) = 
			 \frac{1}{N}\log{\left|C_{\Y}(\boldsymbol{\theta})\right|} 
			+ \frac{1}{N}\mathbf{\Y}^T C_{\Y}(\boldsymbol{\theta})^{-1}\mathbf{\Y},
		\end{equation}
		where $\left|C_{\Y}(\boldsymbol{\theta})\right|$ denotes the determinant of $C_{\Y}(\boldsymbol{\theta})$. Note that one may need to remove from $\bold{\Y}$ points where $g_t$ is zero, to ensure that the determinant of the covariance matrix is non-zero, and since those observations carry no information about $\btheta$. We minimize $\ell_t$ to obtain the time-domain maximum likelihood estimator (MLE), i.e.
		\[
			\boldsymbol{\hat{\theta}}_{G} = 
			\arg \min_{\btheta\in\Theta}\ell_t(\boldsymbol{\theta}).
		\]
Parameter estimation based on time-domain likelihood has several drawbacks in the context of modulated processes.
For a large sample size $N$, computing the determinant of the covariance matrix is expensive, requiring $\mathcal{O}(N^3)$ elementary operations in general (although in specific cases such as for Markovian processes the likelihood is obtained in only $\mathcal{O}(N)$ computations).
Moreover each computation of the parametric covariance matrix $C_{\Y}(\boldsymbol{\theta})$ within the exact likelihood requires $\mathcal{O}(N^2)$ operations, compared to $\mathcal{O}(N)$ operations in the case of a stationary process.

We propose a computationally efficient estimation method for the parameters of the latent model based on the periodogram of the modulated time series. First recall that for the stationary time series $\{X_t\}$, making use of the Toeplitz property of the autocovariance matrix, one can approximate the log-likelihood using the Whittle likelihood \citep{whittle1953estimation}, which once discretized is evaluated by
\begin{equation}
	\label{eq=whittleLKH}
	\ell_W(\btheta) = \frac{1}{N}\sum_{\omega\in\Omega_N}\left\{\log S_X(\omega;\btheta) + \frac{\hat{S}_X^{(N)}(\omega)}{S_X(\omega;\btheta)}\right\},
\end{equation}
where again $\Omega_N$ is the set of Fourier frequencies $\frac{2\pi}{N}\cdot\left( -\lceil\frac{N}{2}\rceil+1, \cdots, -1, 0, 1, \cdots, \lfloor\frac{N}{2}\rfloor \right)$. This pseudo-likelihood has the benefit of $\mathcal{O}(N\log N)$ computational complexity using the Discrete Fourier Transform, with the resulting maximum likelihood estimator being asymptotically equivalent to the time-domain log-likelihood. 
We adapt this pseudo-likelihood procedure to modulated processes with significant correlation contribution in the following definition.
\begin{definition}[Spectral maximum pseudo-likelihood estimator for univariate modulated processes]
	\label{def=ourEstimator}
	Let $\{\Y_t\}$ be a modulated process with significant correlation contribution and let $\bf \Y$ be its length-$N$ sample. We define the following pseudo-likelihood:
	\begin{equation}
		\label{eq=newLKH}
		\ell_M(\boldsymbol{\theta}) = 
		 \frac{1}{N}\sum_{\omega\in\Omega_N}\left\{
		\log{\overline{S}\sN_{\Y}(\omega;\boldsymbol{\theta})} 
		+ \frac{\hat{S}_{\Y}\sN(\omega)}{\overline{S}\sN_{\Y}(\omega;\boldsymbol{\theta})}
		\right\},
	\end{equation}
	where $\overline{S}^{(N)}_{\Y}(\omega)$ is defined in Section \ref{sec=samplingproperties} as the expectation of the periodogram of the modulated time series, and is computed using Proposition \ref{prop=expectationPeriodogram2}.
	The corresponding estimator of the parameter vector $\boldsymbol{\theta}$ is obtained by a minimization procedure over the parameter set,
	\[
		\hat{\boldsymbol{\theta}}_M\sN = 
		\arg\min_{\btheta\in\Theta}\ell_M(\boldsymbol{\theta}).
	\]

\end{definition}

The sequence $\{c_g\sN(\tau): \tau=0,\cdots,N-1\}$ defined in (\ref{eq=autocovOfg}) requires $\mathcal{O}(N^2)$ computations in the most general case. This initial step is carried out independently of inferring the parameter of interest $\btheta$. Then any computation of $\{\overline{S}\sN_{\Y}(\omega;\boldsymbol{\theta}):\omega\in\Omega_N\}$ for any value of the parameter vector $\btheta$ will require $\mathcal{O}(N\log N)$ computations, since we can compute $\{\overline{c}_{\Y}\sN(\tau;\btheta):\tau=0, \cdots, N-1\}$ in $\mathcal{O}(N)$ computations using~\eqref{eq=expectedAutocovSequence1} and the precomputed $\{c_g\sN(\tau): \tau=0,\cdots,N-1\}$, and 
the quantity $\{\overline{S}\sN_{\Y}(\omega;\boldsymbol{\theta}):\omega\in\Omega_N\}$ is then computed via a fast Fourier transform.
The reason for separating this initial $\mathcal{O}(N^2)$ step from the rest of the computation is that it is carried out independently of the parameter value, and therefore outside any call to a minimization procedure over the parameter set $\Theta$ involving the expected periodogram. 

In the trivial case of a modulation sequence equal to 1 everywhere, then the likelihood of Definition~\ref{def=ourEstimator} does not exactly equal the Whittle likelihood of~\eqref{eq=whittleLKH}. This is because the spectral density $S_X(\omega)$ would be replaced by the expected periodogram $\overline{S}^{(N)}_{\Y}(\omega)$, which is the convolution of the true spectral density with the Fej\'er kernel (see~\eqref{eq=DunsmuirSpectrum}). For stationary time series, this type of estimator was investigated in~\cite{WhittleAdam}, and was found to significantly reduce bias and error in parameter estimation as compared with standard Whittle estimation.
For modulated processes that are asymptotically stationary, this signifies the difference between using~\eqref{eq=DunsmuirSpectrum} and the quantity defined by Proposition~\ref{prop=expectationPeriodogram2} to fit the periodogram.

The same estimator to~\eqref{def=ourEstimator} can be used for the complex-valued time series $\widetilde{Z}_t$ considered in Section \ref{sec=bivariate}, i.e. we define our estimator,
\begin{equation}
	\label{eq=complexEstimator}
		\hat{\boldsymbol{\theta}}_M\sN = 
		\arg\min_{\btheta\in\Theta}\ell_M(\boldsymbol{\theta}),
\end{equation}
with the objective function given by,
\begin{equation}
	\label{eq=newLKHcomplex}
	\ell_M(\boldsymbol{\theta}) = 
		\frac{1}{N}\sum_{\omega\in\Omega_N}
		{
		\left\{
		\log{\overline{S}\sN_{\widetilde{Z}}(\omega;\boldsymbol{\theta})} 
		+ \frac{\hat{S}_{\widetilde{Z}}\sN(\omega)}{\overline{S}\sN_{\widetilde{Z}}(\omega;\boldsymbol{\theta})}
		\right\}
		},
\end{equation}
The comments on computational aspects hold for the complex-valued case as well.
In Section~\ref{sec=consistency}, we will prove consistency of the frequency-domain estimator~\eqref{eq=newLKH} and its $\mathcal{O}(N^{-1/2})$ convergence rate.

\section{Applications}
\subsection{Application to Oceanographic Data}
\label{sec=realdata}
In this section we analyse real-world data from the Global Drifter Program (GDP). Specifically we model jointly the latitudinal and longitudinal velocities obtained from instruments known as drifters, which freely drift according to ocean surface flows~\citep{sykulski2016Lagrangian}.
Those velocities are modelled as the aggregation of two independent complex-valued processes, one of which is nonstationary and which we model as the complex-valued AR(1) process described in Section~\ref{sec=tvAR}. We use our estimator~\eqref{eq=complexEstimator} to infer physical quantities that describe the ocean surface currents.
To scrutinize our results and compare with alternative approaches, we also present two simulation studies, the first one being based on a dynamical model of the ocean surface currents and the second one being a simulated version of the model of Section~\ref{sec=tvAR}. 
All data and code used in this paper is available for download at \texttt{http://www.ucl.ac.uk/statistics/research/spg/software} and all results in this section and Section~\ref{sec=missingSims} are exactly reproducible.
\subsubsection{The Global Drifter Program}
\label{sec=GDPapp}
The GDP database (www.aoml.noaa.gov/phod/dac) is a collection of measurements obtained from buoys known as surface drifters, which drift freely with ocean currents and regularly communicate measurements to passing satellites at unequally spaced  time intervals averaging 1.4hrs. The data is then interpolated onto a regular temporal grid using the approach of~\citet{elipot2016global}. The measurements include position, and often sea surface and temperature. In total, over 11,000 drifters have been deployed, with approximately 70 million position recordings obtained. The analysis of this data is crucial to our understanding of ocean circulation \citep{lumpkin07}, which is known to play a primary role in determining the global climate system, see e.g.~\citet{andrews2012forcing}. Furthermore, GDP data is used to understand the dispersion characteristics of the ocean, which are critical in correctly modelling oil spills \citep{abascal2010analysis} and more generally assist in developing theoretical understanding of ocean fluid dynamics \citep{griffa2007lagrangian}, which is necessary for global climate modelling. 

In Fig.~\ref{Equatorial}(a), we display in the left panel the trajectories of 200 drifters which either traverse or are near the equator, interpolated for this application onto a 2 hour grid from raw position fixes available at the GDP web site. We focus on a single drifter trajectory, drifter ID\#43594, in panels (b) and (c), displaying both its latitudinal position and velocity respectively, the latter of which is obtained by differencing the positions. This velocity time series is nonstationary, as it has oscillations which appear to be modulated and change in frequency over time. The oscillations are known as {\em inertial oscillations}---one of the most ubiquitous and readily observable features of the ocean currents accounting for approximately half of the kinetic energy in the upper ocean \citep{ferrari2009arfm}. Inertial oscillations arise due to the deviation of the rotating earth from a purely spherical geometry, together with the appearance of the Coriolis force in the rotating reference frame of an earth-based observer \citep{early2012forces}. The modulation of these oscillations occurs because the drifters are changing latitude---and the {\em Coriolis frequency}, denoted $f$, is equal to twice the rotation rate of the Earth $\Omega$, multiplied by the sine of the latitude $\zeta$, i.e. $f=2\Omega \sin \zeta$ radians per second. The rotation rate of the Earth $\Omega$ is computed as $2\pi/T$ where $T$ is one sidereal day in seconds. Note that the Coriolis frequency $f$ is a signed quantity, implying that oscillations occur in opposite rotational clockwise/anti-clockwise directions from one hemisphere to the other. The Coriolis frequency is positive in the Northern hemisphere whereas the oscillations occur in the mathematically negative sense. 
Therefore we define the inertial frequency $\inerf = -f/2\pi K$ as the negative of the Coriolis frequency divided by $2\pi K$, where $K$ is one solar day in seconds, so that $\inerf$ is in cycles per day. The entire drifter dataset is split into segments of 60 inertial periods in length, accounting for the variation of the inertial period along drifter trajectories, and with 50\% overlap between segments.  The standard deviation of the inertial frequency along each data segment is taken, and the 200 segments exhibiting the largest ratio of the standard deviation of the inertial frequency, to the magnitude of its mean value along the segment, are identified for use in this study.  These exhibit the largest fractional changes in the inertial frequency, and as shown in Fig.~\ref{Equatorial}(a), are located in the vicinity of the equatorial region where inertial frequency vanishes.

\begin{figure}[h!]
\centering
\includegraphics[width=0.9\textwidth]{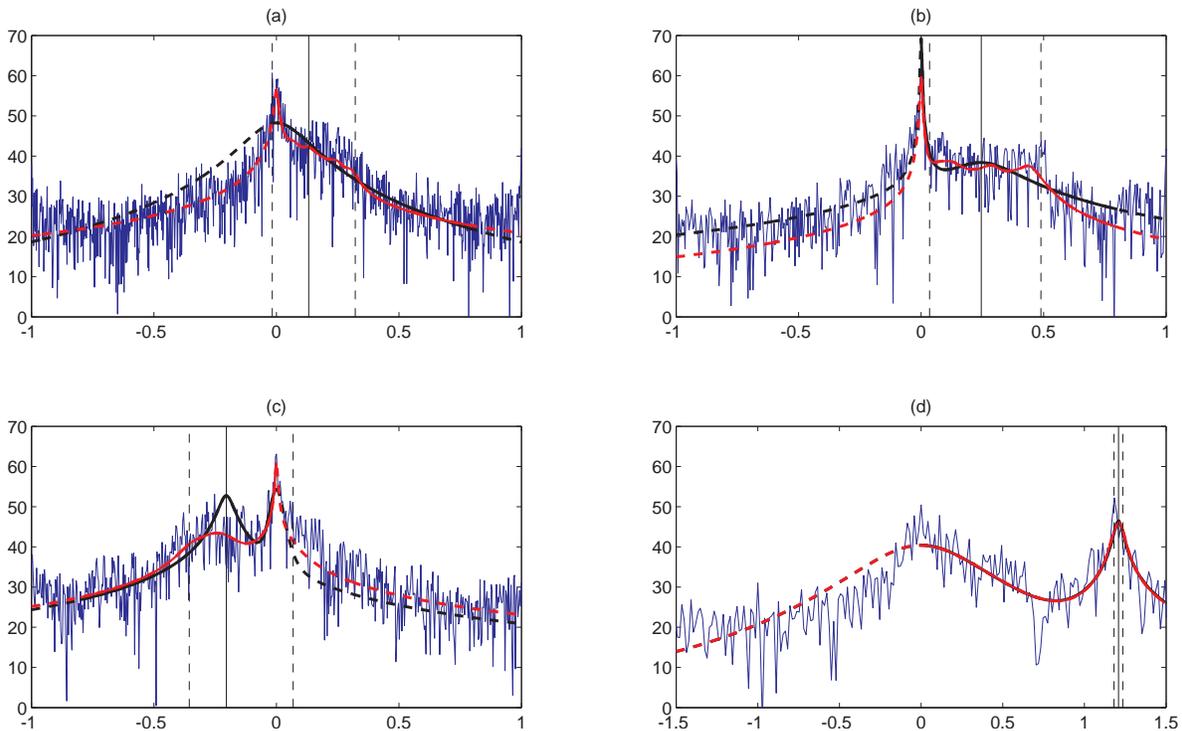}
\caption{
Fitted variance of the discrete Fourier transform using either the stationary model (in black) or the nonstationary model (in red) to the periodogram (in blue) for segments of data from drifter IDs
(a) \#79243, (b) \#54656, (c) \#71845 and (d) \#44312.
The solid black vertical line is the average inertial frequency, and the dashed vertical black lines are the minimum and maximum observed inertial frequency over the observed time window. The models are fit in the frequency range of 0 to 0.8 cycles per day in (a)--(c), and from 0 to 1.5 cycles per day in (d) as this drifter is at a higher latitude of $37^\circ$ S where inertial oscillations occur at a frequency of about 1.2 cycles per day. The fitted models are shown in solid lines within the frequency range, and in dashed lines outside the frequency range.}
\label{DrifterAnalysis} 
\end{figure}

\subsubsection{Stochastic modelling}
The stochastic modelling of Lagrangian trajectories was investigated in \citet{sykulski2016Lagrangian}, where the term ``Lagrangian" is used because the moving object making the observations (i.e. the drifter) is the frame of reference, as opposed to fixed-point measurements known as {\em Eulerian} observations. In that paper, the Lagrangian velocity time series was modelled as a complex-valued time series, with the following 6-parameter power spectral density:
\begin{eqnarray}
\label{driftermodel}
S(\omega) &=& \frac{A^2}{(\omega-\inerf)^2+\lambda^2}+\frac{B^2}{\left(\omega^2+h^2\right)^\alpha},\\
\nonumber &&A>0, \;\;\; \lambda>0, \;\;\; \inerf\in[-\pi,\pi], \;\;\; h>0, \;\;\; B>0, \;\;\; \alpha>\frac{1}{2},
\end{eqnarray}
where $\omega$ is given in cycles per day. The first component of~\eqref{driftermodel} is the spectral density of a complex Ornstein-Uhlenbeck (OU) process \citep{Arato1962estimation}, and is used to describe the effect of inertial oscillations at frequency $\inerf$. 
Denoting $\widetilde{Z}_{\text{OU}}(t)$ the OU component, where $\widetilde{Z}_{\text{OU}}(t)$ is complex-valued, these oscillations are described by the following stochastic differential equation (SDE),
\begin{equation}
	\label{eq=OU_SDE_stationary}
	d\widetilde{Z}_{\text{OU}}(t) = (-\lambda + i2\pi\inerf)\widetilde{Z}_{\text{OU}}(t)dt + AdW(t),
\end{equation}
where $t$ is expressed in days and $W(t)$ is a complex-valued Brownian process with independent real and imaginary parts. The damping parameter $\lambda>0$ ensures that the OU process is mean-reverting.
The corresponding continuous complex-valued autocovariance is given by
\begin{equation*}
	s(\tau) = \frac{A^2}{2\lambda}\exp\left\{-\lambda|\tau|+i2\pi\omega\tau\right\},
\end{equation*}
and the sampled process \citep{ARATO19991} $\widetilde{Z}_{\text{OU},t} = \widetilde{Z}_{\text{OU}}(t\Delta)$, where $\Delta=1/12$ day is the sampling rate corresponding to the 2hr grid, is a complex-valued AR(1),
\begin{equation}
	\label{eq=sampled_stationary_OU}
	\widetilde{Z}_{\text{OU},t} = re^{i2\pi\inerf\Delta}\widetilde{Z}_{\text{OU}, {t-1}} + \epsilon_t.
\end{equation}
Here $\epsilon_t$ is a Gaussian complex-valued white noise process with independent real and imaginary parts and variance $\sigma^2$. The autocovariance sequence of the stationary sampled process is given by,
\begin{equation}
	c_{\widetilde{Z}_{\text{OU}}} = \frac{\sigma^2}{1-r^2}r^{\tau\Delta}.
\end{equation}
The transformation between the parameters of the complex-valued OU and the complex-valued AR(1) are given by,
\begin{equation}
\label{eq=transform_params_OU_AR}
        \sigma^2 = \frac{A^2(1-e^{-\lambda\Delta})}{2\lambda\Delta}, \ \ \ \
        r = e^{-\lambda\Delta}.
\end{equation}

The second component of~\eqref{driftermodel} is the spectral density of a stationary proper Mat\'ern process \citep{gneiting2010matern}, denoted $Z_{M,t}$, and is used to describe two-dimensional background turbulence, see \citet{FracBrownianMotionJonathan}. 
Although the parameter $\inerf$ is varying as the drifter changes latitude, this parameter is fixed to its mean value in each trajectory segment in \citet{sykulski2016Lagrangian}. This leaves five remaining parameters to estimate, $\{A, \lambda, B, h, \alpha \}$, in different regions of the ocean.

The model of~\eqref{driftermodel} is stationary---slowly-varying nonstationarity in the data is accounted for by windowing the data into chunks of approximately 60 inertial periods, and treating the process as locally-stationary within each window. The estimated parameters can then be aggregated spatially to quantify the heterogeneity of ocean dynamics. This method works well on relatively quiescent and stationary regions of the ocean; however this method cannot account for the rapidly-varying nonstationarity evident in Fig.~\ref{Equatorial}, and leads to model misfit and biased parameter estimates, as we shall now investigate in detail.

\begin{figure}[h!]
\centering
\includegraphics[width=0.9\textwidth]{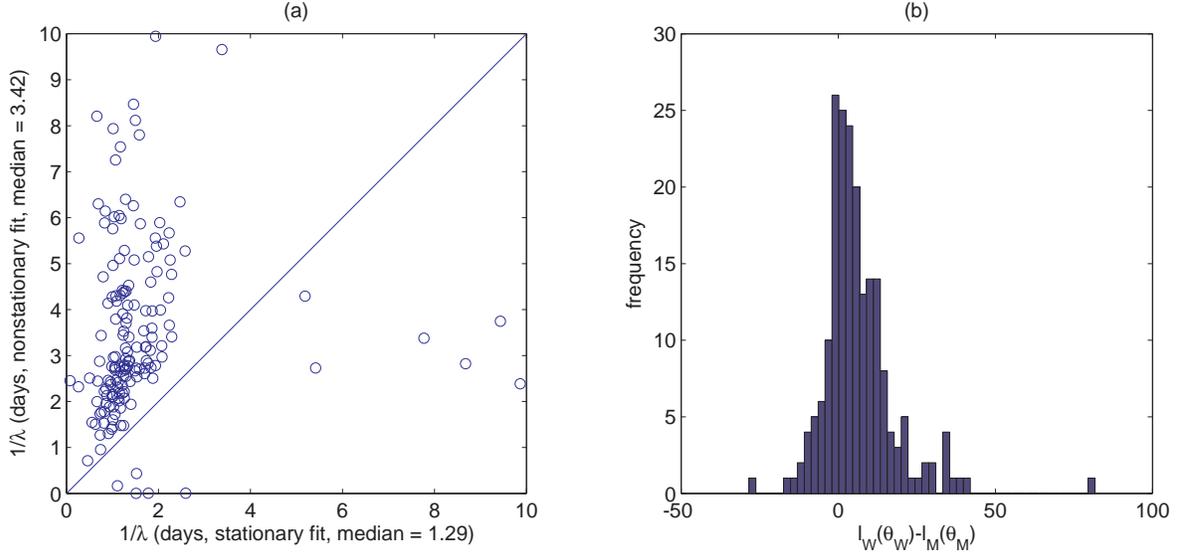}
	\caption{\label{Drifter200Analysis}(a) is a scatter plot of the damping timescale $1/\lambda$ as estimated by the stationary and nonstationary models, for each of the 200 trajectories displayed in Fig.~\ref{Equatorial}; (b) is a histogram of the difference between the log-likelihoods of the nonstationary and stationary models for the same 200 trajectories.}
\end{figure}

\subsubsection{Modulated time series modelling and estimation}\label{ss:modulateddrifters}
We now apply the methodological contributions of this paper to improve the model of~\eqref{driftermodel} for highly nonstationary time series, such as those observed in Fig.~\ref{Equatorial}(a). We do this by accounting for changes in the inertial frequency, $\inerf$, within each window of observation. We denote $\inerf(t)$ the continuous time-varying inertial frequency and $\inerf_t = \inerf(t\Delta)$ the inertial frequency value at each observed time step, $t=0,\cdots,N-1$.
The adapted version of the SDE~\eqref{eq=OU_SDE_stationary} is then given by,
\begin{equation}
	\label{eq=OU_SDE_nonstationary}
	d\widetilde{Z}_{\text{OU}}(t) = \left(-\lambda + i2\pi\inerf(t)\right)\widetilde{Z}_{\text{OU}}(t)dt + AdW(t).
\end{equation}
In analogue to the proof in \ref{proof=tvarmodulated} it is shown that the sampled process $\widetilde{Z}_{\text{OU},t}=\widetilde{Z}_{\text{OU}}(t\Delta)$ satisfies,
\begin{equation*}
	\widetilde{Z}_{\text{OU},t} = re^{i2\pi\int_{\Delta(t-1)}^{\Delta t}{\inerf(u)du}}\widetilde{Z}_{\text{OU},t-1} + \epsilon_t.
\end{equation*}
 As the inertial frequency is only observed at sampled points, we approximate the term $\int_{\Delta(t-1)}^{\Delta t}{\inerf(u)du}$ by $\Delta\inerf_t$.
Specifically, we use the model of~\eqref{eq=tvARCdef} for complex-valued time series, i.e.
\begin{equation}
	\label{eq=sampled_nonstationary_OU}
	\widetilde{Z}_{\text{OU}, t} = r e^{i2\pi\Delta\inerf_t}\widetilde{Z}_{\text{OU}, t-1} + \epsilon_t, \ \ t\geq1,
\end{equation}
where $\epsilon_t$ has the same properties as in~\eqref{eq=sampled_stationary_OU} and the transformation between the parameters $\{A, \lambda, \inerf_t\}$ of the nonstationary complex-valued OU process~\eqref{eq=OU_SDE_nonstationary} and the parameters $\{r, \sigma, \inerf_t\}$ of the nonstationary complex-valued AR(1) process~\eqref{eq=sampled_nonstationary_OU} are given by~\eqref{eq=transform_params_OU_AR}.


The required methodology has been developed in Section \ref{sec=bivariate} for bivariate (or complex-valued) time series. 
We only perform the modulation on the complex OU component in~\eqref{driftermodel}; the Mat\'ern component for the turbulent background is unchanged and is considered to be stationary in the window, as it is not in general affected by changes in $\inerf$. The two components are however observed in aggregation, and for this reason we cannot simply demodulate the observed nonstationary signal to recover a stationary signal. Instead, to jointly estimate the parameters $\{A,\lambda,B,h,\alpha\}$,
we first compute the modulating sequence, $g_t$, using~\eqref{eq=tvARcModulatingSeq} in Proposition~\ref{prop=tvarmodulated} and accounting for the temporal sample rate $\Delta$:
\begin{equation}
	\label{eq=modulationOU}
	g_t = e^{i\sum_{u=1}^t{2\pi\Delta\inerf_u}},
\end{equation}
for $t=0,\cdots,N-1$.
Then we obtain the expected periodogram of the OU component, by computing $c_g(\tau)$ according to~\eqref{eq=cgcomplex}, then $\overline{c}_{\widetilde{Z}}\sN(\tau)$, where we use the autocovariance of a stationary OU process,
\begin{equation*}
	c_{Z_{\text{OU}}}(\tau;r, \sigma) = \frac{\sigma^2}{1-r^2}r^\tau,
\end{equation*}
and Fourier transforming according to~\eqref{eq=computationOfPeriodogram}.
Next, we compute the expected periodogram of the stationary Mat\'ern as outlined in~\citet{sykulski2016Lagrangian}. Note that this can also be computed from the autocovariance of a Mat\'ern using~\eqref{eq=computationOfPeriodogram}, by setting $g_t=1$ for all $t$. Finally, we additively combine the expected periodograms, i.e.
\begin{equation*}
	\overline{S}\sN(\omega;\btheta) = \sum_{\tau=-(N-1)}^{N-1}{\left[c_g(\tau)c_{Z_{\text{OU}}}(\tau) + \left(1-\frac{|\tau|}{N}\right)c_{Z_{\text{M}}}(\tau)\right]e^{-i\omega\tau}},
\end{equation*}
 and then minimize the objective function, given in~\eqref{eq=newLKHcomplex}, to obtain parameter estimates for $\{A,\lambda,B,h,\alpha\}$.

Note that the modulation of a complex-valued AR(1) process by~\eqref{eq=modulationOU} will not lead to an asymptotically stationary process, as in general we cannot expect the quantities $c_g\sN(\tau)$ to converge. However, we can see from Fig.~\ref{Equatorial}(a) that the drifters of our dataset have latitudes comprised between $\pm 20$ degrees. Therefore the terms $2\pi\Delta\inerf_t$ in~\eqref{eq=sampled_nonstationary_OU} are comprised between $\pm 0.3591$ radians, so that the conditions of Proposition~\ref{prop=tvARcCorrelationContribution} are verified. Hence the sampled inertial component is a modulated process with a significant correlation contribution, which justifies the use of our estimator~\eqref{eq=complexEstimator}. Note that this results from the latitudes of the drifters and the sampling rate used.

The assumption of Gaussianity is reasonable for modelling the velocity of instruments from the GDP as is discussed in Section 2.4 of~\citet{LaCasce20081} and references therein. To further inspect this, we tested the Gaussianity of the Fourier transform for the four velocity time series in Fig~\ref{DrifterAnalysis}. Specifically, we compared the theoretical ordered statistics of the exponential distribution to the ordered values of the normalized periodogram (normalized by the expected periodogram of the fitted Gaussian model). These results are not included in the paper for space considerations, however the code to perform this analysis can be found in the online code.

\subsubsection{Parameter estimation with equatorial drifters}
We now compare the likelihood estimates and parameter fits for the stationary model~\eqref{driftermodel}, with those for the nonstationary version of this model described in the previous subsection.
In particular, the damping timescale $1/\lambda$ is of primary interest in oceanography~\citep{elipot2010modification}.
 In Fig.~\ref{DrifterAnalysis}, we display the Whittle likelihood fits of each model to segments of data from drifters IDs 
\#79243, \#54656 and \#71845,
 all of which are among the trajectory segments displayed in the left-hand panel of Fig.~\ref{Equatorial}. We also include model fits to a 60-inertial period window of drifter ID\#44312, which is investigated in detail in~\citet{sykulski2016Lagrangian}, as this South Pacific drifter is from a more quiescent region of the ocean, and does not exhibit significant changes in $\inerf$. For the South Pacific drifter in Fig.~\ref{DrifterAnalysis}(d), both fits are almost equivalent (and hence are overlaid), capturing the sharp peak in inertial oscillations at approx 1.2 cycles per day. For the three equatorial drifters, the stationary model~\eqref{driftermodel} has been fit with the inertial frequency set to the average of $\omega_t^{\{f\}}$ across the window. Here in the first three cases the stationary model is a relatively poor fit to the observed time series spectra. The nonstationary modulated model, which incorporates changes in $\inerf$, is a better fit, capturing the spreading of inertial energy between the maximum and minimum values of $\inerf_t$. 

In this analysis, we have excluded frequencies higher than 0.8 cycles per day from all the likelihood fits to the equatorial drifters (the Nyquist is 6 cycles per day for this 2-hourly data), to ignore contamination from tidal energy occurring at 1 cycle per day or higher, which is not part of our stochastic model. Furthermore, we also only fit to the side of the spectrum dominated by inertial oscillations, as the model is not always seen to be a good fit on the other side of the spectrum. The modelling and inference approach is therefore semi-parametric \citep{robinson1995gaussian}. 


The significance of the misfit of the stationary model is that parameters of the model may be under- or over-estimated as the model attempts to compensate for the misfit. For example, the damping parameter of the inertial oscillations, $\lambda$, will likely be overestimated in the stationary model, as it is used to try to capture the spread of energy around $\inerf$, which is in fact mostly caused by the changing value of $\inerf$, rather than a true high value of $\lambda$.

To investigate this further, we perform the analysis with all 200 drifters shown in Fig.~\ref{Equatorial}. In Fig.~\ref{Drifter200Analysis}(a), we show a scatter plot of the estimates of $1/\lambda$, known as the damping timescale, as estimated by both models. In general, the damping timescales are larger with the nonstationary model (consistent with a smaller $\lambda$), where the median value is 3.42 days, rather than 1.3 days with the stationary model. Previous estimates of the damping timescale in the literature have not included data from the equatorial region, so while direct comparisons are not possible, the former estimates are found to be more consistent with previous estimates at higher latitudes where values of around 3 days are reported in \cite{elipot2010modification}, and values ranging from 2 to 10 days are reported in \cite{watanabe2002global}.

The nonstationary model does not require more parameters to be fitted than the stationary model; both have 5 unknown parameters. Therefore there is no need to penalize the nonstationary model using model choice or likelihood ratio tests. 
Even though the models are not nested, comparing the likelihood of the two approaches can be informative.
We can directly compare the likelihood value of each model using~\eqref{eq=whittleLKH} and \eqref{eq=newLKH}, i.e. $\ell_M(\bm{\hat\theta}_M)-\ell_W(\bm{\hat\theta}_W)$. A histogram of the difference between the likelihoods for the 200 drifters is shown in Fig.~\ref{Drifter200Analysis}(b), where positive values indicate that the likelihood of the nonstationary model is higher. Overall, the nonstationary model has a higher likelihood in 146 out of the 200 trajectories and is therefore seen to be the better model in general.

There are other regions of the global oceans, in addition to the equator, where the nonstationary methods of this paper may significantly improve parameter estimates of drifter time series. These include drifters which follow currents that traverse across different latitudes, such as the Gulf Stream or the Kuroshio. Analysis of such data is an important avenue of future investigation.

\begin{figure}[h!]
\centering
\includegraphics[width=0.9\textwidth]{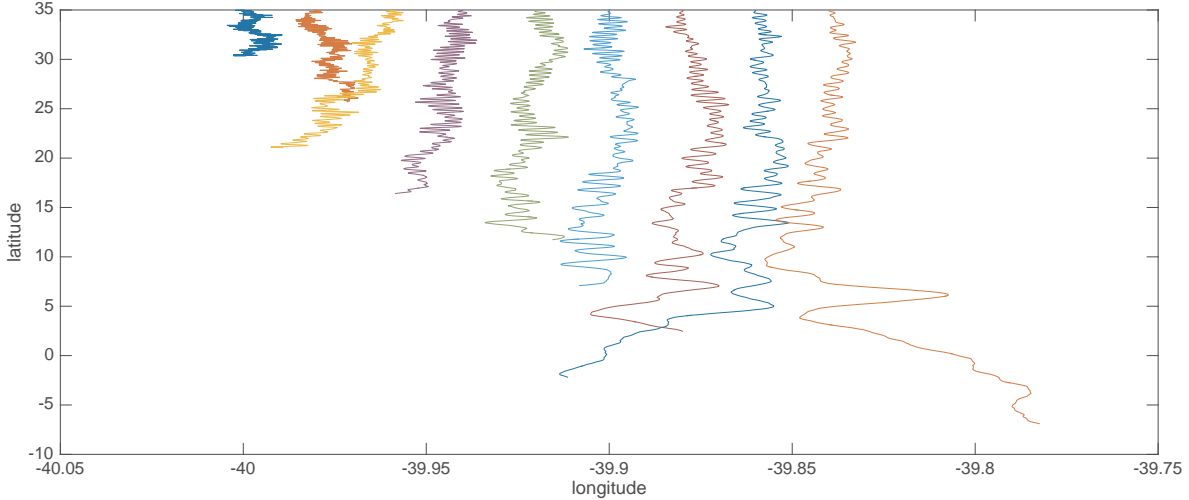}
\caption{\label{NumericalDrifters}Trajectories of 9 particles from the dynamical model, with the damping timescale set to 4 days. All particle trajectories are started at 35$^\circ$~N and 40$^\circ$~W with increasing meridional mean flow from $V=0.1$ to $V=0.9$ cm/s going from left to right ($u$ is set to zero for this example). The drifters are offset in longitude by 0.02 degrees for representation.}
\end{figure}

\subsubsection{Testing with Numerical Model Output}
\label{sec=dynamicalModel}
In this section we test the accuracy of the nonstationary modelling and parameter estimation for drifters by analysing output in a controlled setting using a dynamical model for inertial oscillations. The model propagates particles on an ocean surface forced by winds---simulated white noise in our simulations---with a fixed damping parameter, similar to the damped-slab model of~\citet{pollard1970dsr}, but uses the correct spherical dynamics for Earth from \citet{early2012forces}, so that the oscillations occur at the correct Coriolis frequency given the particle's latitude and the model remains valid at the equator. The damping timescale parameter is fixed globally {\em a priori} in the model and the goal is to see if it can be accurately estimated using parametric time series models.

The numerical model is constructed such that the particle can also be given a linear mean flow, $U+iV$. If this mean flow has a significant vertical component $V$, then the particle will cross different latitudes and the frequency of inertial oscillations will significantly change over a single analysis window.
We display particle trajectories from the dynamical model in Fig.~\ref{NumericalDrifters}, with various realistic mean flow values, where the spherical dynamics can clearly be seen for larger latitudinal mean flow values. We observe that the particles subject to small mean flows display stationary oscillation patterns, whereas for the particles with a large latitudinal mean flow, the oscillation frequency appears to diminish as the particle approaches latitude zero.
A more complete description of the numerical model is available in the online code. 

To explore the performance of the estimation of damping time-scales, we assess the performance of the parameter estimates of our nonstationary model, by performing a Monte Carlo study based on the dynamical model described in the previous paragraph. We generate 100 trajectories, each of length 60 days and sampled every 2 hours, for a given damping timescale ($1/\lambda$) and latitudinal mean flow ($V$). We estimate the damping parameter using the stationary and nonstationary methods, in exactly the same way as with the real-world drifter data, and average the estimated damping timescales $1/\lambda$ over the 100 time series. We note that as this model has no background turbulence, then we set $B=0$ in~\eqref{driftermodel} such that there is no Mat\'ern component present.
 We then repeat this analysis over a range of realistic values for $1/\lambda$ and $V$. The average estimates of $1/\lambda$ are reported in Fig.~\ref{NumericalMatrix}. The stationary method breaks down for large mean flows and long damping timescales, with large overestimates of $\lambda$. The nonstationary method performs well across the entire range of values. We note that long damping timescales are generally harder to estimate, as $\lambda$ becomes close to zero and is estimated over relatively fewer frequencies. We have not reported mean square errors here for space considerations, but we found the parameter biases to be the main contribution to the errors, so it follows that the nonstationary method remains strongly preferable.
\begin{figure}[h!]
\centering
\includegraphics[width=0.8\textwidth]{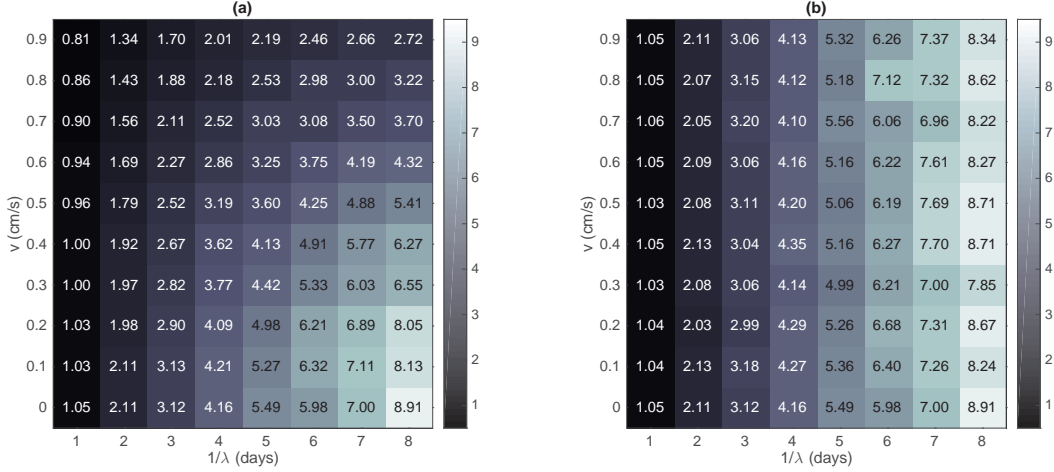}
\caption{\label{NumericalMatrix}Mean estimates of the damping timsescale $1/\lambda$ with (a) the stationary model of~\eqref{driftermodel} and (b) the nonstationary model of Section~\ref{ss:modulateddrifters}, applied to 100 realizations of the dynamical model described in Section~\ref{sec=dynamicalModel}. The experiment is performed over a grid of meridional mean flow values $v$ from 0 to 0.9 cm/s, and over a range of true damping timescales $1/\lambda$ from 1 to 8 days. The estimated damping timescale values, averaged over 100 repeat experiments, is written in each cell and shaded according to the colorbar.}
\end{figure}

\subsubsection{Testing with Stochastic Model Output}
In this section we test with purely stochastic output, which allows us to extensively compare biases, errors and computational times of the stationary and nonstationary methods in a much larger Monte Carlo study. We continue using the bivariate model of~\eqref{eq=tvARCdef} which is suitable for inertial oscillations, except this time we change $\beta_t$ according to a stochastic process. Specifically, we set as our generative mechanism for the frequencies $\beta_t$,
\begin{eqnarray}\label{eq:thetarandom}
	\beta_0 &=& \mathcal{D}(\gamma + A\epsilon_t)\\
	\beta_t &=& \mathcal{D}(\beta_{t-1} + A\epsilon_t),
\end{eqnarray}
where $\gamma\in[-\pi, \pi)$, $A>0$, $\epsilon_t$ is a standard normal white noise, and $\mathcal{D}(\cdot)$ is the bounding function defined by
\begin{equation}\label{eq:thetarandom2}
	\mathcal{D}(x) = \max\{\min(x, \gamma+\Delta), \gamma-\Delta\},
\end{equation}
where $\Delta>0$, and this choice of $\mathcal{D}(x)$ constrains $\beta_t$ in the interval $[\gamma-\Delta,\gamma+\Delta]$.
This way the frequencies $\beta_t$ are generated according to a bounded random walk, i.e. a random walk which is constrained to stay within a fixed bounded interval.
According to Proposition \ref{prop=tvARcCorrelationContribution}, if $\Delta$ is smaller than $\pi/2$, then this ensures that the modulated process belongs to the class of modulated processes with a significant correlation contribution, and our estimator~\eqref{eq=newLKHcomplex} is consistent.

In our simulations we have set $\gamma=\pi/2$, $\Delta=1$, $A=1/20$.
We simulate for a range of sample sizes ranging from $N=128$ to $N=4096$. For each sample size $N$, we independently simulate 2000 time series and estimate $\{r,\sigma\}$ for each series to report ensemble-averaged biases, errors, and computational times. The results are reported in Table~\ref{table1}. The bias and Mean Square Error (MSE) of the estimated parameters with the stationary method are seen to increase with increasing sample size. This is because the random walk of $\beta_t$ increases the range of $\beta_t$ with larger $N$, such that the nonstationarity of the time series is increasing. Conversely, the nonstationary method accounts for these rapidly changing modulating frequencies, and the bias and MSE of parameter estimates rapidly decrease with increasing $N$. The average CPU time is only around 5\% slower using the nonstationary method, as the method is still $\mathcal{O}(N\log N)$ in computational efficiency.

\begin{table}[t]
	\centering
	\caption{\label{table1}Performance of estimators with the stationary and nonstationary methods for the model of~\eqref{eq=tvARCdef} with $\beta_t$ evolving according to the bounded random walk described by~\eqref{eq:thetarandom}--\eqref{eq:thetarandom2}. The parameters are set as $r=0.8$, $\sigma=1$, $\gamma=\pi/2$, $\Delta=1$, and $A=1/20$. The results are averaged over 2000 independently generated time series for each sample size $N$. 
	The average CPU times for the optimization are given in seconds, as performed on a 2.40Ghz Intel i7-4700MQ processor (4 cores).
	}\vspace{3mm}
		\begin{tabular}{lcccccc}
				Sample size ($N$) & 128 & 256 & 512 & 1024 & 2048 & 4096\\
			\hline \hline
				\multicolumn{7}{c}{Stationary frequency domain likelihood}\\
			\hline
				Bias ($r$) &  -2.3481e-02	& -3.2400e-02 &	-4.8112e-02	& -6.9807e-02	& -9.3332e-02 &	-1.1161e-01\\

				Variance ($r$) &  1.8163e-03 &	1.0760e-03 &	1.1422e-03 &	1.5550e-03 &	1.4045e-03 &	8.2890e-04\\

				MSE ($r$) & 2.3677e-03 &	2.1258e-03 &	3.4570e-03 &	6.4280e-03 &	1.0115e-02 &	1.3286e-02\\
                                \hline
				Bias ($\sigma$) & 2.5577e-02 &	5.4988e-02 &	8.9480e-02 &	1.3241e-01 &	1.7432e-01 &	2.0651e-01\\

				Variance ($\sigma$) & 3.3898e-03 &	2.8178e-03 &	3.3471e-03 &	4.4660e-03 &	3.9885e-03 &	2.1609e-03\\

				MSE ($\sigma$)  & 4.0440e-03 &	5.8415e-03 &	1.1354e-02 &	2.1999e-02 &	3.4376e-02 &	4.4809e-02\\			
			\hline
			CPU time (sec) & 1.3083e-02 &	1.7776e-02 &	2.5743e-02 &	4.3666e-02 &	5.0948e-02 &	8.6940e-02 \\ \hline\hline
				\multicolumn{7}{c}{Nonstationary frequency domain likelihood}\\
			\hline
				Bias ($r$) & -4.6158e-03 &	-2.0129e-03 &	-1.4184e-03 &	-2.9047e-04 &	-2.6959e-04 &	8.8302e-05\\
				Variance ($r$) & 1.6508e-03 &	7.5379e-04 &	3.9819e-04 &	2.0710e-04 &	1.0674e-04 &	5.3236e-05\\
				MSE ($r$) &   1.6721e-03 &	7.5784e-04 &	4.0020e-04 &	2.0719e-04 &	1.0681e-04 &	5.3244e-05\\
			\hline
				Bias ($\sigma$) &   -1.4999e-02 &	-8.8581e-03 &	-4.4302e-03 &	-2.5292e-03 &	-1.4125e-03 &	-9.1703e-04\\

				Variance ($\sigma$) &  2.2543e-03	& 1.1989e-03 &	6.4245e-04 &	3.4775e-04 &	2.0113e-04 &	1.0759e-04\\

				MSE ($\sigma$) &  2.4793e-03 &	1.2774e-03 &	6.6208e-04 &	3.5415e-04 &	2.0312e-04 &	1.0843e-04\\
				\hline
			CPU time (sec) & 1.6814e-02 &	2.0272e-02 &	3.1397e-02 &	5.5925e-02 &	8.9997e-02 &	2.4147e-01  \\ \hline\hline
		\end{tabular}
\end{table}

Finally, we consider the case in which the modulating sequence is only unknown up to a functional form, and we must also estimate its parameters, along with the parameters of the latent process. We consider the following parametric form for $\beta_t$
\begin{equation}
	\label{eq=freqParameterModel}
	\beta_t = \gamma + \Delta\frac{2t-(N-1)}{2(N-1)},
\end{equation}
with parameters $\gamma\in[-\pi,\pi)$ and $0<\Delta<\pi$. The upper bound for $\Delta$ is chosen so that the resulting modulated process satisfies the assumptions of Proposition~\ref{prop=tvARcCorrelationContribution}. The modulated process then has a significant correlation contribution.
Therefore $\beta_t$ varies linearly from $\gamma-\frac{\Delta}{2}$ to $\gamma+\frac{\Delta}{2}$. We can then show that for all integer value $\tau$,
\begin{equation}
	c_g\sN(\tau) = \frac{\sin\left[\frac{\Delta \tau}{2(N-1)}(N-\tau)\right]}{N\sin\left[\frac{\Delta \tau}{2(N-1)}\right]}e^{\left\{i(\gamma \tau+\frac{\Delta \tau}{2(N-1)}\right\}}.
\end{equation}
This allows the kernel in~\eqref{eq=expectedAutocovSequence1} to be precomputed in $\mathcal{O}(N)$ elementary operations for all values of $\tau=0,\cdots,N-1$. This helps to speed up the computation of the expected periodogram in the likelihood for the special case of a linearly varying $\beta_t$. In this problem we have to estimate $\{\gamma,\Delta\}$ from $\beta_t$ as well as $\{r,\sigma\}$ from $Z_t$. We perform a Monte Carlo simulation with a fixed sample size of $N=512$, where we simulate 5,000 independent time series each with parameters set to  
$r=0.9$, $\sigma=10$, $\gamma=0.8$, and $\Delta=1$. We report the biases, variances and MSEs with the stationary and nonstationary methods in Table~\ref{table2}. As the stochastic process is Markovian, it is also possible to implement exact maximum likelihood in $\mathcal{O}(N)$ elementary operations for this specific problem, and we report these values in the table also. Our nonstationary inference method performs relatively close to that of exact maximum likelihood, despite the challenge of having to estimate parameters of the modulating sequence, as well as the latent process. The stationary method performs poorly, as with previous examples, as stationary modelling is not appropriate for such rapidly-varying oscillatory structure.

\begin{table}[t]
	\centering
	\caption{\label{table2}Performance of estimators with the stationary and nonstationary methods for the model of~\eqref{eq=tvARCdef} with $\beta_t$ evolving according to~\eqref{eq=freqParameterModel}. The parameters are set as $r=0.9$, $\sigma=10$, $\gamma=0.8$, and $\Delta=1$. The results are averaged over 5000 independently generated time series for each sample size $N$. N/A stands for Not Applicable.}\vspace{3mm}
		\begin{tabular}{lcccc}
				Estimated parameter & $r$ & $\sigma$ & $\gamma$ & $\Delta$\\
			\hline \hline
				\multicolumn{5}{c}{Exact likelihood}\\
			\hline
				Bias &  -1.3244e-03 &	2.1063e-02 &	2.1192e-03 &	-3.6145e-03\\

				Variance &  1.8668e-04 &	5.2130e-02 &	2.3725e-04 &	2.8323e-03\\

				MSE & 1.8844e-04 &	5.2574e-02 &	2.4174e-04 &	2.8454e-03 \\ \hline\hline
				\multicolumn{5}{c}{Stationary frequency domain likelihood}\\
			\hline
				Bias &  -1.5392e-01 &	5.2092e+00 &	2.5871e-03 & N/A\\

				Variance &  5.5052e-04 &	8.6907e-01 &	9.4628e-03& N/A\\

				MSE & 2.4241e-02 &	2.8005e+01 &	9.4695e-03   & N/A \\ \hline\hline\multicolumn{5}{c}{Nonstationary frequency domain likelihood}\\
			\hline
				Bias &  -1.7074e-03 &	6.8215e-03 &	1.1434e-03 &	-3.7092e-02\\

				Variance &  2.3975e-04 &	1.5285e-01 &	2.0803e-03 &	1.6425e-02\\

				MSE & 2.4266e-04 &	1.5290e-01 &	2.0816e-03 &	1.7801e-02 \\ \hline\hline
		\end{tabular}
\end{table}

\subsection{Missing data simulation}
\label{sec=missingSims}
In this section we show that the estimator defined in Definition \ref{def=ourEstimator} can be used for the random missing data scheme~\ref{ex=missingDataScheme} of Section \ref{sec=missing}. 
Therefore we simulate a real-valued first order autoregressive process with parameters $0\leq a <1$ and $\sigma$ according to
\begin{equation}\label{eq:missing1}
	X_t =a X_{t-1} + \epsilon_t, \ \ t\geq 1,
\end{equation}
where $X_0 \sim \mathcal{N}\left[0,\sigma^2/(1-a^2)\right]$, and $\epsilon_t$ is a Gaussian white noise process with mean zero and variance $\sigma^2$. The process $\{X_t\}$ is the latent process of interest. To account for the missing data, we generate a modulated time series $\Y_t=g_t X_t$ and assume we only observe the time series $\{\Y_t\}$, from which we estimate the parameters of the process $\{X_t\}$. The sequence $\{g_t\}$ takes its values in the set $\{0,1\}$ and is generated according to
\begin{equation*}
	g_t \sim \mathcal{B}(p_t),
\end{equation*}
where $\mathcal{B}(p)$ represents the Bernoulli distribution with parameter $p$, and where we set 
\begin{equation*}\label{eq:missing2}
	p_t = \frac{1}{2} + \frac{1}{4}\cos\left(\frac{2\pi}{10}t\right).
\end{equation*}
The observed modulating sequence $\{g_t\}$, made of zeros and ones, is clearly nonstationary as it does not admit a constant expectation. Therefore a spectral representation of the second order structure of the random modulating sequence $\{g_t\}$, as required in \citet{Dunsmuir1981b}, does not exist.
We simulate and estimate such a model for different sample sizes ranging from $N=128$ to $N=16384$. For each value of $N$, we independently simulate 2000 time series and for each time series we estimate $\{a,\sigma\}$. The outcomes of our simulation study are reported in Table~\ref{table3}. The bias, variance and mean square error rapidly decrease with increasing $N$, while the computational time only increases gradually with $N$ such that the methods are still computationally efficient for long time series. Comparing our technique with other methods from the literature is the subject of ongoing work.

\begin{table}[h]\small		\caption{\label{table3}Performance of our estimator for the missing data problem defined in~\eqref{eq:missing1}--\eqref{eq:missing2}. The unknown parameters are set as $a=0.8$, and $\sigma=1$. The results are averaged over 2000 independently generated time series for each sample size $N$.
The average CPU times for the optimization are given in seconds, as performed on a 2.40Ghz Intel i7-4700MQ processor (4 cores).
}\vspace{3mm}
	\centering
		\begin{tabular}{lccccccc}
				Sample size & 128 & 512 & 1024 & 2048 & 4096 & 8192 & 16384\\
			\hline\hline
				\multicolumn{8}{c}{Estimate of parameter $a$}\\
			\hline
				Bias  &  -2.0805e-02 &	-4.8097e-03 &	-3.0920e-04 &	-4.1710e-04 &	-2.7349e-04 &	-4.9356e-04 &	-1.5213e-04\\
				Variance  &  1.0721e-02 &	2.7114e-03 &	1.2795e-03 &	6.3100e-04 &	3.0883e-04 &	1.4380e-04 &	7.1010e-05\\
				MSE  &  1.1154e-02 &	2.7346e-03 &	1.2796e-03 &	6.3117e-04 &	3.0891e-04 &	1.4404e-04 &	7.1034e-05\\
			\hline
			\multicolumn{8}{c}{Estimate of parameter $\sigma$}\\
			\hline
				Bias  & -1.7136e-02 &	-5.4872e-03 &	-6.8949e-03 &	-2.7876e-03 &	-1.3061e-03 &	2.2880e-04 &	-1.7434e-04\\
				Variance &  3.3705e-02 &	8.7577e-03 &	4.1674e-03 &	1.9408e-03 &	9.7955e-04 &	4.1524e-04 &	2.2691e-04\\
				MSE  & 3.3999e-02 &	8.7878e-03 &	4.2150e-03 &	1.9486e-03 &	9.8125e-04 &	4.1529e-04 &	2.2694e-04\\
			\hline
			\multicolumn{8}{c}{Computational time}\\
			\hline
		CPU time (s) & 1.7901e-02 &	3.9687e-02 &	7.0515e-02 &	8.2408e-02 &	1.7624e-01 &	4.1774e-01 &	1.2138e+00\\
			\hline\hline
		\end{tabular}
\end{table}
\section{Consistency}
\label{sec=consistency}
In this section we show in Theorem~\ref{theorem=consistency} that the frequency domain estimator $\hat{\btheta}_M\sN$ (which for simplicity we denote $\hat{\btheta}\sN$ in this section) is consistent in the univariate real-valued case (extension to our class of bivariate processes follows directly). In Theorem~\ref{theorem=convergencerate} we show that this estimator converges with a $\mathcal{O}(N^{-\frac{1}{2}})$ rate. 
To guarantee consistency we require the following assumptions to be satisfied:

\begin{enumerate}
\item The parameter set $\Theta\subset\R^d$ is compact with a non-null interior, and the true parameter $\theta$ lies in the interior of $\Theta$.
\item\label{assumption=1} Assume that for all $\btheta\in\Theta$, we have $\sum_{\tau\in\N}{\left |c_X(\tau;\btheta)\right|}<\infty$ (short memory) and that the functions $\theta\rightarrow c_X(\tau;\btheta)$ are continuous with respect to $\btheta$. It follows that the spectral densities are also continuous with respect to $\btheta$. We also assume that for all $\btheta\in\Theta$ and $\omega\in[-\pi,\pi]$, $S_X(\omega;\btheta)>0$. By continuity on a compact set the spectral densities $S_X(\omega;\btheta)$ are therefore bounded below in both variables by a non-zero value. For the same reason they are bounded above.
\item We assume that the spectral densities are continuously differentiable with respect to $\omega$. By continuity on a compact set the derivatives with respect to $\omega$ are bounded above, independently of $\btheta$.
\item The process $\Y_t$ is a modulated process with significant correlation contribution. We recall that this implies the existence of a finite subset $\Gamma\subset\N$ such that the mapping $\btheta\mapsto\left\{c_X(\tau):\tau\in\Gamma\right\}$ is one-to-one. We also assume that the modulating sequence $\{g_t\}$ is bounded in absolute value by some finite constant $g_{\max}>0$.
\end{enumerate}
We start with the following two lemmas which yield uniform bounds of the expected periodogram and its derivative.
\begin{lemma}[Boundedness of the expected periodogram]
	\label{lemma=boundexpectedperiodogram}
	For all $\btheta\in\Theta$ and $N\in\N$, the expected periodogram $\overline{S}_{\Y}\sN(\omega;\btheta)$ is bounded below (by a positive real number) and above independently of $N$ and $\btheta$. We denote these bounds $\overline{S}_{\Y,\min}$ and $\overline{S}_{\Y,\max}$ respectively.
\end{lemma}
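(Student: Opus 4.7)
The plan is to use the convolution representation of Proposition~\ref{prop=freqPropSbar}, namely
\[
\overline{S}_{\Y}\sN(\omega;\btheta) = 2\pi\int_{-\pi}^\pi S_X(\omega-\lambda;\btheta)\, S_g\sN(\lambda)\,d\lambda,
\]
together with a Parseval-type identity for $S_g\sN$. Noting that $S_g\sN(\lambda) \geq 0$, Parseval gives
\[
\int_{-\pi}^\pi S_g\sN(\lambda)\,d\lambda \;=\; \frac{1}{2\pi N}\int_{-\pi}^\pi\Bigl|\sum_{t=0}^{N-1} g_t e^{-i\lambda t}\Bigr|^2 d\lambda \;=\; \frac{1}{N}\sum_{t=0}^{N-1}g_t^2 \;=\; c_g\sN(0).
\]
So both bounds reduce to sandwiching $S_X$ by constants and controlling $c_g\sN(0)$.

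For the upper bound, Assumption~\ref{assumption=1} (continuity of $S_X(\omega;\btheta)$ on the compact set $[-\pi,\pi]\times\Theta$, combined with positivity) yields a uniform bound $S_{X,\max}>0$ such that $S_X(\omega;\btheta)\leq S_{X,\max}$ for all $\omega,\btheta$. Combined with the boundedness assumption $|g_t|\leq g_{\max}$, we get $c_g\sN(0)\leq g_{\max}^2$, and hence
\[
\overline{S}_{\Y}\sN(\omega;\btheta) \leq 2\pi\, S_{X,\max}\, g_{\max}^2 =: \overline{S}_{\Y,\max},
\]
uniformly in $\omega$, $\btheta$ and $N$.

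For the lower bound, the same assumption provides a uniform positive lower bound $S_{X,\min}>0$, giving
\[
\overline{S}_{\Y}\sN(\omega;\btheta) \geq 2\pi\, S_{X,\min}\, c_g\sN(0).
\]
It remains to bound $c_g\sN(0)$ away from zero, and this is the step where the significant correlation contribution assumption enters. The key elementary observation is that $c_g\sN(0)$ dominates $|c_g\sN(\tau)|$ for every $\tau$: by Cauchy--Schwarz,
\[
|c_g\sN(\tau)| \leq \tfrac{1}{N}\sqrt{\textstyle\sum_{t=0}^{N-\tau-1} g_t^2}\sqrt{\textstyle\sum_{t=0}^{N-\tau-1} g_{t+\tau}^2} \leq \tfrac{1}{N}\sum_{t=0}^{N-1} g_t^2 = c_g\sN(0).
\]
Picking any $\tau\in\Gamma$, the characterisation~\eqref{eq=equivalenceoflimitinf} yields $\alpha_\tau>0$ and $N_\tau$ such that $c_g\sN(0)\geq |c_g\sN(\tau)|\geq \alpha_\tau$ for $N\geq N_\tau$. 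Hence for $N\geq N_\tau$ we obtain
$\overline{S}_{\Y}\sN(\omega;\btheta)\geq 2\pi\, S_{X,\min}\,\alpha_\tau =: \overline{S}_{\Y,\min}>0$, uniformly in $\omega$ and $\btheta$.

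The main obstacle is the lower bound; the upper bound is routine once Parseval is invoked. Specifically, one must translate the hypothesis that some lag $\tau\in\Gamma$ has $|c_g\sN(\tau)|$ eventually bounded below into a lower bound on $c_g\sN(0)$, which is exactly what the Cauchy--Schwarz step accomplishes. (If the statement is read literally for all $N\in\N$, then for small $N<N_\tau$ one may need to shrink the constant $\overline{S}_{\Y,\min}$ to accommodate the finitely many initial sample sizes, or tacitly restrict to $N$ large enough, as the subsequent consistency proof presumably does.)
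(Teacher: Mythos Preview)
Your proof is correct and follows essentially the same route as the paper: the convolution formula of Proposition~\ref{prop=freqPropSbar}, Parseval to evaluate $\int S_g\sN$, and Cauchy--Schwarz together with the significant-correlation-contribution assumption to bound $c_g\sN(0)$ from below. Your Cauchy--Schwarz step is in fact tidier than the paper's chain of inequalities, and your closing remark about the lower bound holding only for $N$ large enough matches what the paper actually proves and uses downstream.
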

\begin{proof}
See Appendix \ref{proof=boundDerivative}.
\end{proof}
\begin{lemma}[Boundedness of the derivative of the expected periodogram]
\label{lemma=totalvariationnorm}
The derivative of the expected periodogram with respect to $\omega$ exists and is bounded in absolute value independently of $\btheta$ and $N$.
\end{lemma}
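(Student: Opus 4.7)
The plan is to work from the convolution representation of the expected periodogram given by Proposition~\ref{prop=freqPropSbar}, rather than from the discrete Fourier series of Proposition~\ref{prop=expectationPeriodogram2}. The reason is that the obvious Fourier‐series bound $|\partial_\omega \overline{S}_{\Y}\sN(\omega;\btheta)| \leq \sum_\tau |\tau|\,|c_g\sN(\tau)|\,|c_X(\tau;\btheta)|$ would require the stronger hypothesis $\sum_\tau |\tau|\,|c_X(\tau;\btheta)| < \infty$, which is not among the standing assumptions; only summability of the autocovariance itself is assumed. Routing through the convolution representation sidesteps this difficulty entirely.

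First, starting from
\[
\overline{S}_{\Y}\sN(\omega;\btheta) = 2\pi\int_{-\pi}^\pi S_X(\omega-\lambda;\btheta)\,S_g\sN(\lambda)\,d\lambda,
\]
I would differentiate under the integral sign with respect to $\omega$. This interchange is justified because Assumption~3 provides a uniform bound $M := \sup_{\omega\in[-\pi,\pi],\,\btheta\in\Theta}\left|\partial_\omega S_X(\omega;\btheta)\right| < \infty$ (by continuity of $\partial_\omega S_X$ on the compact set $[-\pi,\pi]\times\Theta$), and $S_g\sN(\lambda)\geq 0$ is integrable over $[-\pi,\pi]$. This yields
\[
\left|\frac{\partial \overline{S}_{\Y}\sN(\omega;\btheta)}{\partial\omega}\right|
\leq 2\pi M \int_{-\pi}^\pi S_g\sN(\lambda)\,d\lambda.
\]

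Next, I would evaluate the remaining integral using Parseval's identity applied to the finite sequence $\{g_t\}_{t=0,\ldots,N-1}$:
\[
\int_{-\pi}^\pi S_g\sN(\lambda)\,d\lambda
= \frac{1}{2\pi N}\int_{-\pi}^\pi \left|\sum_{t=0}^{N-1}g_t e^{-i\lambda t}\right|^2 d\lambda
= \frac{1}{N}\sum_{t=0}^{N-1} g_t^2 \leq g_{\max}^2,
\]
where the final inequality uses Assumption~4 that the modulating sequence is bounded in absolute value by $g_{\max}$. Combining these two bounds gives $\left|\partial_\omega \overline{S}_{\Y}\sN(\omega;\btheta)\right| \leq 2\pi M g_{\max}^2$, a constant that depends neither on $\omega$, nor on $N$, nor on $\btheta$, which is the claim.

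The main ``obstacle'' is really a conceptual one: recognizing that the Fourier series formula should not be used here, because it mixes the diverging weight $|\tau|$ with the full range of lags up to $N-1$, which would force unnecessary smoothness assumptions on $c_X$. Once one chooses to work instead with the frequency-domain convolution, the derivation is a short application of differentiation under the integral and Parseval. I would also briefly remark that the argument is purely pointwise in $\omega$, so no additional regularity of $S_g\sN$ is required beyond the $L^1$ bound supplied by Parseval.
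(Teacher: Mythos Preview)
Your proposal is correct and follows essentially the same route as the paper: both start from the convolution representation of Proposition~\ref{prop=freqPropSbar}, differentiate under the integral using the uniform bound on $\partial_\omega S_X$ from Assumption~3, and then control $\int_{-\pi}^\pi S_g\sN(\lambda)\,d\lambda$ via Parseval to obtain a bound of order $g_{\max}^2\sup|\partial_\omega S_X|$. Your additional remark explaining why the Fourier-series representation is the wrong starting point (it would require $\sum_\tau |\tau|\,|c_X(\tau;\btheta)|<\infty$) is a useful clarification that the paper leaves implicit.
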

\begin{proof}
See appendix \ref{proof=totalvariationnorm}.
\end{proof}
In analogue to \citet{Taniguchi} for stationary processes, we introduce the following quantity,
\begin{equation*}
	D\sN\left(\bgamma, f\right) = \frac{1}{N}\sum_{\omega\in\Omega_N}\left\{ \log \overline{S}_{\Y}\sN(\omega;\bgamma) + \frac{f(\omega)}{\overline{S}_{\Y}\sN(\omega;\bgamma)} \right\},
\end{equation*}
for all positive integer $N$, $\bgamma\in\Theta$ and non-negative real-valued function $f$ defined on $\Omega_N$.
We also define
\begin{equation*}
	T\sN(f) = \arg\min_{\bgamma\in\Theta}D\sN\left(\bgamma, f\right).
\end{equation*}
This minimum for fixed $f$ is well defined since the set $\Theta$ is compact and  since the function $\bgamma\mapsto D\sN\left(\bgamma, f\right)$ is continuous. However in cases where the minimum is reached not uniquely but at multiple parameter values, 
 $T\sN(f)$ will denote any of these values, chosen arbitrarily. Note that, by the definition of our frequency domain estimator, we have $\hat{\btheta}\sN = T\sN\left(\hat{S}_{\Y}\sN(\cdot)\right)$.
We start with three lemmas that will be required in proving Theorem \ref{theorem=consistency} which establishes consistency.
\begin{lemma}
	\label{lemma=uniquenessOfMin}
	We have, for $N$ large enough, $T\sN(\overline{S}_{\Y}\sN(\omega;\btheta)) = \btheta$, uniquely.
\end{lemma}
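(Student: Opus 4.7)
The plan is to exploit the standard pointwise minimization property of the Whittle contrast together with the identifiability result of Proposition~\ref{prop=identifiabilityViaPeriodogram}. The contrast $D^{(N)}(\bgamma, \overline{S}_{\Y}^{(N)}(\cdot;\btheta))$ is a sum (over Fourier frequencies) of terms of the form $\log s + a/s$, where $s = \overline{S}_{\Y}^{(N)}(\omega;\bgamma)$ and $a = \overline{S}_{\Y}^{(N)}(\omega;\btheta) > 0$. The elementary fact that such a function of $s$ on $(0,\infty)$ is uniquely minimized at $s=a$ (its derivative $(s-a)/s^2$ vanishes only there and changes sign from negative to positive) will do the bulk of the work.

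First, I would apply this elementary observation pointwise, using Lemma~\ref{lemma=boundexpectedperiodogram} to guarantee that $\overline{S}_{\Y}^{(N)}(\omega;\btheta)$ is strictly positive so that the function is well defined. This yields the pointwise bound
\begin{equation*}
\log \overline{S}_{\Y}^{(N)}(\omega;\bgamma) + \frac{\overline{S}_{\Y}^{(N)}(\omega;\btheta)}{\overline{S}_{\Y}^{(N)}(\omega;\bgamma)} \;\geq\; \log \overline{S}_{\Y}^{(N)}(\omega;\btheta) + 1,
\end{equation*}
with equality if and only if $\overline{S}_{\Y}^{(N)}(\omega;\bgamma) = \overline{S}_{\Y}^{(N)}(\omega;\btheta)$. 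Summing over $\omega \in \Omega_N$ and dividing by $N$ shows that $D^{(N)}(\bgamma, \overline{S}_{\Y}^{(N)}(\cdot;\btheta))$ is bounded below by a quantity that depends only on $\btheta$ and $N$, and this bound is achieved at $\bgamma = \btheta$. Hence $\btheta$ is a minimizer, so $\btheta \in T^{(N)}(\overline{S}_{\Y}^{(N)}(\cdot;\btheta))$.

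For the uniqueness part, suppose $\bgamma \in \Theta$ also achieves the minimum. Since the sum of the pointwise lower bounds is attained only when each term is individually at its minimum, this forces $\overline{S}_{\Y}^{(N)}(\omega;\bgamma) = \overline{S}_{\Y}^{(N)}(\omega;\btheta)$ for every $\omega \in \Omega_N$. By Proposition~\ref{prop=identifiabilityViaPeriodogram}, for $N$ large enough the map $\bgamma \mapsto \{\overline{S}_{\Y}^{(N)}(\omega;\bgamma):\omega \in \Omega_N\}$ is injective on $\Theta$, so $\bgamma = \btheta$.

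There is no real obstacle here; the only subtlety is verifying that Proposition~\ref{prop=identifiabilityViaPeriodogram} is applicable, which is immediate because the standing assumptions include that $\Y_t$ has a significant correlation contribution and hence its hypotheses are satisfied for all $N$ larger than some threshold $N_0$ (depending on the finite lag set $\Gamma$). Thus the statement holds for all $N \geq N_0$.
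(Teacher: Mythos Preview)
Your proof is correct and follows essentially the same approach as the paper's: both rely on the elementary fact that $s\mapsto \log s + a/s$ (equivalently, $x\mapsto x-\log x$ with $x=a/s$) is uniquely minimized at $s=a$, sum this pointwise bound over $\Omega_N$, and then invoke Proposition~\ref{prop=identifiabilityViaPeriodogram} for uniqueness when $N$ is large enough. The only cosmetic difference is that the paper phrases the inequality via $x-\log x\geq 1$ rather than via the derivative of $\log s + a/s$.
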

\begin{proof}
	See Appendix~\ref{proof=uniquenessOfMin}.
\end{proof}

This shows that for all $N$ large enough, the function $\bgamma\rightarrow D\left(\bgamma, \overline{S}_{\Y}\sN(\cdot;\btheta)\right)$ reaches a global minimum at the true parameter vector $\btheta$. However because $\overline{S}_{\Y}\sN(\cdot;\btheta)$ is changing with $N$ and is not expected to converge to a given function, we need the following stronger result.

\begin{lemma}
\label{lemma=minimalValues}
If $\left\{\bgamma_N\right\}_{N\in\N}\in\Theta^\N$ is a sequence of parameter vectors such that $D\left(\bgamma_N, \overline{S}_{\Y}\sN(\cdot;\btheta)\right)-D\left(\btheta, \overline{S}_{\Y}\sN(\cdot;\btheta)\right)$ converges to zero when $N$ goes to infinity, then $\bgamma_N$ converges to $\btheta$.
\end{lemma}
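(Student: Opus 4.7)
The plan is to argue by contradiction through a subsequential compactness argument, combined with a pointwise identifiability lower bound on the contrast $D\sN$. Suppose $\bgamma_N$ does not converge to $\btheta$. By compactness of $\Theta$, there exists a subsequence $\bgamma_{N_k}\to\bgamma^*$ with $\bgamma^*\neq\btheta$. By the significant correlation contribution assumption, the map $\btheta'\mapsto\{c_X(\tau;\btheta'):\tau\in\Gamma\}$ is injective, so some $\tau_0\in\Gamma$ satisfies $c_X(\tau_0;\bgamma^*)\neq c_X(\tau_0;\btheta)$; continuity in $\btheta$ then yields $|c_X(\tau_0;\bgamma_{N_k})-c_X(\tau_0;\btheta)|\geq\delta>0$ for all large $k$.

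Next I would derive a quantitative lower bound on the contrast. Writing
\begin{equation*}
D\sN(\bgamma_{N_k},\overline{S}\sN_{\Y}(\cdot;\btheta))-D\sN(\btheta,\overline{S}\sN_{\Y}(\cdot;\btheta)) \;=\; \frac{1}{N}\sum_{\omega\in\Omega_N} h(x_\omega), \qquad h(x)=x-1-\log x,
\end{equation*}
with $x_\omega=\overline{S}\sN_{\Y}(\omega;\btheta)/\overline{S}\sN_{\Y}(\omega;\bgamma_{N_k})$, I use Lemma~\ref{lemma=boundexpectedperiodogram} to note that $x_\omega$ ranges in a compact subset of $(0,\infty)$ independent of $N$, $\btheta$, $\bgamma$, and $\omega$. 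On this set $h(x)\geq k(x-1)^2$ for some fixed $k>0$, which, together with the uniform upper bound on $\overline{S}\sN_{\Y}$, gives
\begin{equation*}
D\sN(\bgamma_{N_k},\overline{S}\sN_{\Y}(\cdot;\btheta))-D\sN(\btheta,\overline{S}\sN_{\Y}(\cdot;\btheta)) \;\geq\; c\cdot\frac{1}{N}\sum_{\omega\in\Omega_N}\bigl|\Delta\overline{S}\sN_{\Y}(\omega)\bigr|^2,
\end{equation*}
where $\Delta\overline{S}\sN_{\Y}=\overline{S}\sN_{\Y}(\cdot;\btheta)-\overline{S}\sN_{\Y}(\cdot;\bgamma_{N_k})$ and $c>0$ is a universal constant.

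To extract the contribution of lag $\tau_0$, I would multiply by $e^{i\omega\tau_0}$ and apply the discrete Cauchy--Schwarz inequality
\begin{equation*}
\Bigl|\tfrac{1}{N}\sum_\omega \Delta\overline{S}\sN_{\Y}(\omega)\, e^{i\omega\tau_0}\Bigr|^2 \;\leq\; \tfrac{1}{N}\sum_\omega \bigl|\Delta\overline{S}\sN_{\Y}(\omega)\bigr|^2.
\end{equation*}
Using Proposition~\ref{prop=expectationPeriodogram2} and the orthogonality identity $\tfrac{1}{N}\sum_{\omega\in\Omega_N} e^{i\omega(\tau_0-\tau)}=\mathbf{1}\{\tau\equiv\tau_0\bmod N\}$, the left-hand side equals $|\Delta\overline{c}\sN_\Y(\tau_0)+\Delta\overline{c}\sN_\Y(N-\tau_0)|^2$ once $N>\tau_0$. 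The first summand factorizes as $c_g\sN(\tau_0)\cdot[c_X(\tau_0;\btheta)-c_X(\tau_0;\bgamma_{N_k})]$, which by~\eqref{eq=equivalenceoflimitinf} and the first paragraph is bounded below in absolute value by $\tfrac12\alpha_{\tau_0}\delta$ for $k$ large; the second is bounded by $g_{\max}^2\bigl(|c_X(N-\tau_0;\btheta)|+|c_X(N-\tau_0;\bgamma_{N_k})|\bigr)$, which tends to zero as $N\to\infty$ by short memory. Chaining the inequalities therefore makes the contrast bounded below by a positive constant along $\{N_k\}$, contradicting the hypothesis, so $\bgamma_N\to\btheta$.

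The delicate step is this last vanishing: I need $|c_X(N-\tau_0;\bgamma_{N_k})|\to 0$ even though $\bgamma_{N_k}$ is moving with $k$. Pointwise absolute summability at each fixed parameter is not by itself enough, so I expect to invoke either a uniform $\ell^1$-domination of the autocovariances (a mild strengthening of the short-memory hypothesis in Assumption~\ref{assumption=1}), or continuity of the tail $\btheta\mapsto\sum_{\tau>T}|c_X(\tau;\btheta)|$ combined with Dini's theorem on the compact set $\Theta$, in order to obtain $\sup_{\bgamma\in\Theta}|c_X(N-\tau_0;\bgamma)|\to 0$. That uniform-in-$\bgamma$ control is where I expect most of the technical work to sit.
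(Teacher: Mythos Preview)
Your approach is correct and takes a genuinely different route from the paper's. The paper bounds $|\Delta\overline{c}_{\Y}\sN(\tau)|$ by the \emph{continuous} $L^1$-norm $\int_{-\pi}^\pi|\Delta\overline{S}_{\Y}\sN(\omega)|\,d\omega$, then uses a Lebesgue-measure argument together with the uniform bound on $\partial_\omega\overline{S}_{\Y}\sN$ (Lemma~\ref{lemma=totalvariationnorm}) to carry information from the continuum back to the discrete sum $\tfrac{1}{N}\sum_{\omega\in\Omega_N}b(x_\omega)$; only at the very end does it invoke $\liminf|c_g\sN(\tau)|>0$ for $\tau\in\Gamma$ and the injectivity of $\btheta\mapsto\{c_X(\tau;\btheta):\tau\in\Gamma\}$. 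Your argument stays entirely on the discrete grid via the quadratic minorant $h(x)\geq k(x-1)^2$ and Cauchy--Schwarz, and never touches Lemma~\ref{lemma=totalvariationnorm}; in that sense it is more elementary, at the cost of having to deal with the aliased lag $N-\tau_0$.

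However, the ``delicate step'' you flag is not actually delicate, and the strengthening of the short-memory assumption you propose is unnecessary. You control the aliased term via the crude estimate $|c_g\sN(N-\tau_0)|\leq g_{\max}^2$, which forces you to rely on tail decay of $c_X(N-\tau_0;\bgamma_{N_k})$. But directly from the definition,
\[
c_g^{(N_k)}(N_k-\tau_0)=\frac{1}{N_k}\sum_{t=0}^{\tau_0-1}g_t\,g_{t+N_k-\tau_0},
\]
a sum of only $\tau_0$ terms, so $|c_g^{(N_k)}(N_k-\tau_0)|\leq \tau_0 g_{\max}^2/N_k\to 0$. Since $|c_X(\tau;\bgamma)|\leq c_X(0;\bgamma)\leq\sup_{\bgamma\in\Theta}c_X(0;\bgamma)<\infty$ by continuity on the compact $\Theta$, the aliased contribution $\Delta\overline{c}_{\Y}^{(N_k)}(N_k-\tau_0)$ is $O(1/N_k)$ \emph{uniformly in} $\bgamma_{N_k}$, and no uniform-in-$\btheta$ tail control on $c_X$ is needed. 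With that sharper bound your proof closes under the paper's stated assumptions.
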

\begin{proof}
See Appendix~\ref{proof=minimalValues}.
\end{proof}

We now show that the functions $D\left(\bgamma, \overline{S}_{\Y}\sN(\cdot;\btheta)\right)$ and $D\left(\bgamma, \hat{S}_{\Y}\sN(\cdot)\right)$, defined on $\Theta$, behave asymptotically \emph{in the same way}. For this, we first need the following lemma where we bound the asymptotic variance of some linear functionals of the periodogram.
\begin{lemma}
\label{lemma=boundOnVariance}
Let $\left\{a\sN(\omega): \omega\in[-\pi,\pi)\right\}_{N\in\N}$ be a family of real-valued functions, uniformly bounded by a positive real number. We have
\begin{equation*}
	\var\left\{\frac{1}{N} \sum_{\omega\in\Omega_N}{a\sN(\omega)\hat{S}_{\Y}\sN(\omega)}\right\} = \mathcal{O}\left(\frac{1}{N}\right).
\end{equation*}
\end{lemma}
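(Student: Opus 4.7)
The plan is to express the statistic as a quadratic form in $\mathbf{\Y}$ and then reduce its variance, via Isserlis' identity for Gaussian quadratic forms, to a trace that can be controlled by matrix-norm inequalities. First I would substitute the definition of the periodogram to obtain
\[
T_N := \frac{1}{N}\sum_{\omega\in\Omega_N}a\sN(\omega)\hat{S}_{\Y}\sN(\omega) = \frac{1}{N^2}\mathbf{\Y}^T A_N \mathbf{\Y},
\]
where $(A_N)_{t,s}=\tilde\alpha_N(t-s)$ and $\tilde\alpha_N(k):=\sum_{\omega\in\Omega_N}a\sN(\omega)e^{-i\omega k}$. Because $a\sN$ is real-valued, $\tilde\alpha_N(-k)=\overline{\tilde\alpha_N(k)}$, so $A_N$ is Hermitian. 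Crucially, each $\omega\in\Omega_N$ satisfies $e^{-i\omega N}=1$, so $\tilde\alpha_N$ is $N$-periodic and $A_N$ is in fact circulant. The standard circulant diagonalisation then yields that the eigenvalues of $A_N$ are exactly $\{N a\sN(\omega):\omega\in\Omega_N\}$, so
\[
\|A_N\|_{\mathrm{op}} \;=\; N\max_{\omega\in\Omega_N}|a\sN(\omega)| \;\leq\; N\,a_{\max},
\]
where $a_{\max}$ is the uniform bound assumed on the family.

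Since $\mathbf{\Y}$ is a centred real Gaussian vector with covariance $C_{\Y}$ and $T_N$ is manifestly real, $\mathbf{\Y}^T A_N\mathbf{\Y}=\mathbf{\Y}^T\mathrm{Re}(A_N)\mathbf{\Y}$ with $\mathrm{Re}(A_N)$ real symmetric. Applying the Gaussian variance identity $\mathrm{var}(\mathbf{\Y}^T B\mathbf{\Y})=2\,\mathrm{tr}(BC_{\Y}BC_{\Y})$ for symmetric $B$, together with the elementary inequality $|\mathrm{tr}(BDBD)|\leq\|BD\|_F^2\leq\|B\|_{\mathrm{op}}^2\|D\|_F^2$ and $\|\mathrm{Re}(A_N)\|_{\mathrm{op}}\leq\|A_N\|_{\mathrm{op}}$, I would obtain
\[
\mathrm{var}(T_N) \;=\; \frac{2}{N^4}\mathrm{tr}\bigl(\mathrm{Re}(A_N)C_{\Y}\mathrm{Re}(A_N)C_{\Y}\bigr)\;\leq\; \frac{2}{N^4}\|A_N\|_{\mathrm{op}}^2\,\|C_{\Y}\|_F^2.
\]
The Frobenius norm of $C_{\Y}$ is then bounded by
\[
\|C_{\Y}\|_F^2 \;=\; \sum_{t,s=0}^{N-1} g_t^2 g_s^2\, c_X(s-t)^2 \;\leq\; g_{\max}^4\, N \sum_{\tau\in\Z} c_X(\tau)^2,
\]
which is of order $N$ by Assumption~\ref{assumption=1}, since the summability of $c_X$ implies $c_X\in\ell^2(\Z)$. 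Chaining the three bounds yields $\mathrm{var}(T_N)\leq 2\,a_{\max}^2 g_{\max}^4 \|c_X\|_{\ell^2}^2/N = \mathcal{O}(1/N)$.

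The main obstacle is the circulant identification of $A_N$: exploiting that the evaluation points lie on the Fourier grid $\Omega_N$ is what gives the sharp $\mathcal{O}(N)$ operator-norm bound. Without this observation, a cruder estimate (say $\|A_N\|_{\mathrm{op}}\leq\|A_N\|_F$) would lose a factor of $N$ and only deliver $\mathcal{O}(1)$. Everything else is a mechanical combination of Isserlis' identity with standard matrix-norm inequalities, and uses no property of $\Y$ beyond zero-mean Gaussianity, boundedness of the modulating sequence, and the short-memory assumption on $c_X$.
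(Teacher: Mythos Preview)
Your argument is correct and self-contained; it reaches the same $\mathcal{O}(1/N)$ conclusion as the paper, but by a genuinely different route.

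The paper does not pass to a single quadratic form. Instead it writes out $\cov\{\hat S_{\Y}\sN(\omega),\hat S_{\Y}\sN(\omega')\}$ via Isserlis' theorem as $\bigl|\frac{1}{N}(G_\omega\sN)^H C_X\sN(\btheta)G_{\omega'}\sN\bigr|^2$, bounds $|a\sN(\omega)a\sN(\omega')|\le a_{\max}^2$, and then collapses the inner sum over $\omega'$ using the Fourier orthogonality identity $\sum_{\omega'\in\Omega_N}G_{\omega'}\sN(G_{\omega'}\sN)^H=N\,\mathrm{diag}(g_0^2,\dots,g_{N-1}^2)$. The remaining sum is controlled by the spectral bound $\max_{\eta\in\mathrm{sp}(C_X\sN)}|\eta|\le S_{X,\max}$, yielding the constant $(S_{X,\max}\,a_{\max}\,g_{\max}^2)^2$. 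Your approach compresses the same Fourier orthogonality into the circulant diagonalisation of $A_N$, and replaces the spectral-norm bound on $C_X\sN$ by a Frobenius-norm bound on $C_{\Y}$, giving a constant proportional to $\|c_X\|_{\ell^2}^2$ rather than $S_{X,\max}^2$; both are finite under Assumption~\ref{assumption=1}. The paper's derivation has the side benefit of exhibiting the covariance structure of individual periodogram ordinates, while your matrix-norm argument is shorter and avoids the double frequency sum altogether.
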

\begin{proof}
See Appendix \ref{proof=boundOnVariance}
\end{proof}
Remembering that $\overline{S}_{\Y}\sN(\omega;\btheta) = \E\left\{\hat{S}_{\Y}\sN(\omega);\btheta\right\}$, we thus have 
\begin{equation}
\nonumber\sum_{\omega\in\Omega_N}{a\sN(\omega)\hat{S}_{\Y}\sN(\omega)} 
= \sum_{\omega\in\Omega_N}{a\sN(\omega)\overline{S}_{\Y}\sN(\omega;\btheta)} + \mathcal{O}_P\left(\frac{1}{\sqrt{N}}\right).
\end{equation}
We are now able to state a consistency theorem for our estimator $\hat{\btheta}\sN$.
\begin{theorem}[Consistency of the frequency domain estimator]
\label{theorem=consistency}
	We have $\hat{\btheta}\sN \overset{P}{\longrightarrow} \btheta$ in probability.
\end{theorem}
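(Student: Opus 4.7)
The plan is to apply the standard M-estimation contrast argument: show that the criterion evaluated at the random periodogram is uniformly close to the criterion evaluated at the expected periodogram, then combine with Lemma~\ref{lemma=uniquenessOfMin} (which identifies $\btheta$ as the minimizer of the deterministic criterion) and Lemma~\ref{lemma=minimalValues} (which promotes ``near-minimality'' to convergence of the arg-min).

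First I would introduce the residual
\begin{equation*}
R\sN(\bgamma) \;=\; D\sN\!\left(\bgamma, \hat{S}_{\Y}\sN\right) - D\sN\!\left(\bgamma, \overline{S}_{\Y}\sN(\cdot;\btheta)\right) \;=\; \frac{1}{N}\sum_{\omega\in\Omega_N}\frac{\hat{S}_{\Y}\sN(\omega)-\overline{S}_{\Y}\sN(\omega;\btheta)}{\overline{S}_{\Y}\sN(\omega;\bgamma)},
\end{equation*}
and observe, using the fact that $\hat{\btheta}\sN$ minimizes $D\sN(\cdot,\hat{S}_{\Y}\sN)$ together with $D\sN(\hat{\btheta}\sN,\overline{S}_{\Y}\sN)\ge D\sN(\btheta,\overline{S}_{\Y}\sN)$ (valid for large $N$ by Lemma~\ref{lemma=uniquenessOfMin}), that
\begin{equation*}
0 \;\le\; D\sN\!\left(\hat{\btheta}\sN, \overline{S}_{\Y}\sN(\cdot;\btheta)\right) - D\sN\!\left(\btheta, \overline{S}_{\Y}\sN(\cdot;\btheta)\right) \;\le\; R\sN(\btheta)-R\sN(\hat{\btheta}\sN) \;\le\; 2\sup_{\bgamma\in\Theta}\bigl|R\sN(\bgamma)\bigr|.
\end{equation*}
Hence it suffices to prove that $\sup_{\bgamma\in\Theta}|R\sN(\bgamma)|\to 0$ in probability, after which Lemma~\ref{lemma=minimalValues} directly delivers $\hat{\btheta}\sN\to\btheta$.

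Second, I would establish pointwise convergence $R\sN(\bgamma)=o_P(1)$ for each fixed $\bgamma$. By Lemma~\ref{lemma=boundexpectedperiodogram} the weights $a\sN(\omega)=1/\overline{S}_{\Y}\sN(\omega;\bgamma)$ are uniformly bounded by $1/\overline{S}_{\Y,\min}$, so Lemma~\ref{lemma=boundOnVariance} gives $\var\{R\sN(\bgamma)\}=\mathcal{O}(N^{-1})$; combined with $\E\{R\sN(\bgamma)\}=0$ and Chebyshev's inequality this yields $R\sN(\bgamma)=\mathcal{O}_P(N^{-1/2})$.

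Third, I would upgrade the pointwise bound to a uniform one by a standard compactness argument. For any $\bgamma_1,\bgamma_2\in\Theta$,
\begin{equation*}
\bigl|R\sN(\bgamma_1)-R\sN(\bgamma_2)\bigr| \;\le\; \frac{1}{N}\sum_{\omega\in\Omega_N}\bigl|\hat{S}_{\Y}\sN(\omega)-\overline{S}_{\Y}\sN(\omega;\btheta)\bigr|\cdot\sup_\omega\bigl|a\sN(\omega;\bgamma_1)-a\sN(\omega;\bgamma_2)\bigr|.
\end{equation*}
The first factor is $\mathcal{O}_P(1)$ because its expectation is bounded by $2\overline{S}_{\Y,\max}$. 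The second factor is controlled via continuity of $\bgamma\mapsto c_X(\tau;\bgamma)$ on the compact set $\Theta$ (Assumption~\ref{assumption=1}) together with the uniform lower bound on $\overline{S}_{\Y}\sN$: this yields a modulus of continuity in $\bgamma$ that is uniform in $\omega$ and $N$. A finite $\epsilon$-net on $\Theta$, Chebyshev at each of the finitely many centres, and a union bound then give $\sup_{\bgamma\in\Theta}|R\sN(\bgamma)|\overset{P}\to 0$.

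The main obstacle is the third step, the uniformity: one has to show that the map $\bgamma\mapsto 1/\overline{S}_{\Y}\sN(\omega;\bgamma)$ has a modulus of continuity that is uniform in $\omega$ and, crucially, in $N$. This requires care because $\overline{S}_{\Y}\sN$ is an $N$-dependent trigonometric polynomial; the argument rests on writing the difference as $\sum_\tau c_g\sN(\tau)\bigl(c_X(\tau;\bgamma_1)-c_X(\tau;\bgamma_2)\bigr)e^{-i\omega\tau}$, using $|c_g\sN(\tau)|\le g_{\max}^2$, the summability of $c_X(\tau;\bgamma)$, and an equicontinuity consequence of continuity on the compact $\Theta$ (via dominated convergence in the $\tau$-sum). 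All remaining steps are routine once this uniform-in-$N$ equicontinuity is secured.
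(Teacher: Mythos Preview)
Your approach coincides with the paper's: both reduce to showing $\sup_{\bgamma\in\Theta}\bigl|D\sN(\bgamma,\hat S_{\Y}\sN)-D\sN(\bgamma,\overline S_{\Y}\sN(\cdot;\btheta))\bigr|\to 0$ in probability, then sandwich the deterministic-criterion gap at $\hat\btheta\sN$ between $0$ and (twice) this supremum, and conclude via Lemma~\ref{lemma=minimalValues}. The paper is terser on the uniformity step---it asserts it directly from the uniform lower bound on $\overline S_{\Y}\sN$ (Lemma~\ref{lemma=boundexpectedperiodogram}) together with Lemma~\ref{lemma=boundOnVariance}---whereas you explicitly work out the equicontinuity/$\epsilon$-net argument that the paper leaves implicit.
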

\begin{proof}
	The proof is based on~\citet{Taniguchi}. Denote $\overline{h}\sN(\bgamma;\btheta) = D\left(\bgamma, \overline{S}_{\Y}\sN(\omega;\btheta)\right)$ and
	$\hat{h}\sN(\bgamma) = D\left(\bgamma, \hat{S}_{\Y}\sN(\omega)\right)$
	defined for any $\bgamma\in\Theta$. We have,
	\begin{eqnarray*}
		\overline{h}\sN(\bgamma;\btheta) - \hat{h}\sN(\bgamma) &=&
		\frac{1}{N}\sum_{\omega\in\Omega_N}\left\{\log{\overline{S}_{\Y}\sN(\omega;\bgamma)}+\frac{\overline{S}_{\Y}\sN(\omega;\btheta)}{\overline{S}_{\Y}\sN(\omega;\bgamma)}-\log{\overline{S}_{\Y}\sN(\omega;\bgamma)}-\frac{\hat{S}_{\Y}\sN(\omega)}{\overline{S}_{\Y}\sN(\omega;\bgamma)}\right\}\\
		&=&\frac{1}{N}\sum_{\omega\in\Omega_N}{\frac{\overline{S}_{\Y}\sN(\omega;\btheta)-\hat{S}_{\Y}\sN(\omega)}{\overline{S}_{\Y}\sN(\omega;\bgamma)}}.
	\end{eqnarray*}
	We have shown in lemma \ref{lemma=boundexpectedperiodogram} that $\overline{S}_{\Y}\sN(\omega;\bgamma)$ is bounded below in both variables $\omega$ and $\gamma$ by a positive real number, independently of $N$. Therefore, making use of lemma~\ref{lemma=boundOnVariance} we have
	\begin{equation}
	\label{eq=supgoestozero}
		\sup_{\gamma\in\Omega}\left|\overline{h}\sN(\bgamma;\btheta) - \hat{h}\sN(\bgamma)\right| \stackrel{P}{\longrightarrow} 0, \ \ (N\rightarrow\infty),
	\end{equation}
	where the letter \emph{P} indicates that the convergence is in probability, as the difference is of stochastic order $N^{-\frac{1}{2}}$.
	In particular~\eqref{eq=supgoestozero} implies that 
	\[
	\left|\min_\gamma \overline{h}\sN(\bgamma;\btheta) - \min_\gamma \hat{h}\sN(\bgamma)\right| \leq \sup_{\gamma\in\Omega}\left|\overline{h}\sN(\bgamma;\btheta) - \hat{h}\sN(\bgamma)\right| \stackrel{P}{\longrightarrow}0
	\]
	i.e.
	\begin{equation}
	\label{eq=cvgzeroP1}
		\left|\overline{h}\sN\left(T\sN(\overline{S}_{\Y}\sN(\omega;\btheta));\btheta\right) -\hat{h}\sN\left(T\sN(\hat{S}_{\Y}\sN(\omega))\right) \right| \stackrel{P}{\longrightarrow} 0.
	\end{equation}	
 Relation (\ref{eq=supgoestozero}) also implies that
	\begin{equation}
	\label{eq=cvgzeroP2}
	\left| \overline{h}\sN\left(T\sN(\hat{S}_{\Y}\sN(\omega));\btheta\right)-\hat{h}\sN\left(T\sN(\hat{S}_{\Y}\sN(\omega))\right)  \right| \stackrel{P}{\longrightarrow} 0 ,
	\end{equation}
	so that using the triangle inequality,~\eqref{eq=cvgzeroP1} and~\eqref{eq=cvgzeroP2}, we get,
	\begin{equation*}
		\left| \overline{h}\sN\left(T\sN(\hat{S}_{\Y}\sN(\omega));\btheta\right)  - \overline{h}\sN\left(T\sN(\overline{S}_{\Y}\sN(\omega;\btheta));\btheta\right) \right| \stackrel{P}{\longrightarrow} 0.
	\end{equation*}	
	We then obtain the stated theorem making use of Lemma \ref{lemma=minimalValues}.
\end{proof}

We now study the convergence rate of our frequency domain estimator. For this we first need the following two lemmas. Although the Hessian matrix of the likelihood
is not expected to converge for modulated processes with a significant correlation contribution, we can show that its norm is bounded below 
by a positive real number.
For this we need to strengthen the assumption of significant correlation contribution. Assuming that the spectral densities of the latent process are twice continuously differentiable with respect to $\btheta$, we assume that the Jacobian determinant of the mapping $\btheta\mapsto \left[c_X(\tau;\btheta):\tau\in\Gamma\right]^T$ taken at the true parameter value $\btheta$, i.e. the determinant of the matrix with elements $\frac{\partial c_X(\tau_i;\btheta)}{\partial\btheta_j}$ (with $\Gamma = \{\tau_1, \tau_2, \cdots, \tau_d\}$ here), is non-zero.

\begin{lemma}
\label{lemma=boundbelowC}
	Let $\bold{U}_1, \cdots, \bold{U}_d$ a family of vectors of $\R^d$ with rank $d$. Let $\alpha_1, \cdots, \alpha_d$ be positive real numbers. There exists a positive constant $C>0$ such that for all $\bold{V}\in\R^d$, 
	\begin{equation}
		\sum_{i=1}^d{\alpha_i^2\left(\bold{U}_i^T\bold{V}\right)^2} \geq C \left\|\bold{V}\right\|_2^2,
	\end{equation}
	where $\|\cdot\|_2$ denotes the Euclidean norm on $\R^N$. 
\end{lemma}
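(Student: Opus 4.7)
The plan is to recast the sum on the left-hand side as a quadratic form $\bold{V}^T M \bold{V}$ for a symmetric positive definite matrix $M$, and then take $C$ to be its smallest eigenvalue. Concretely, I would define the $d\times d$ matrix $A$ whose $i$-th row is $\alpha_i \bold{U}_i^T$. Then by direct expansion
\begin{equation*}
\sum_{i=1}^d \alpha_i^2 \left(\bold{U}_i^T \bold{V}\right)^2 = \|A\bold{V}\|_2^2 = \bold{V}^T A^T A\, \bold{V},
\end{equation*}
so setting $M = A^T A$ reduces the problem to bounding $\bold{V}^T M \bold{V}$ from below in terms of $\|\bold{V}\|_2^2$.

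Next I would argue that $M$ is positive definite. The matrix $A$ has rank $d$: its rows are the vectors $\alpha_i \bold{U}_i^T$, and since each $\alpha_i > 0$ and the family $\bold{U}_1, \ldots, \bold{U}_d$ has rank $d$ by hypothesis, the rescaled family is also linearly independent. Hence $A$ is an invertible $d\times d$ matrix, so $M = A^T A$ is symmetric positive definite, with all eigenvalues strictly positive.

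Finally, the Rayleigh quotient characterisation of the smallest eigenvalue gives
\begin{equation*}
\bold{V}^T M \bold{V} \geq \lambda_{\min}(M)\, \|\bold{V}\|_2^2, \qquad \forall \bold{V} \in \R^d,
\end{equation*}
and I would simply set $C = \lambda_{\min}(M) > 0$, which depends only on $\alpha_1,\ldots,\alpha_d$ and $\bold{U}_1,\ldots,\bold{U}_d$ but not on $\bold{V}$, as required. There is no real obstacle here; the only thing worth being careful about is to verify explicitly that rank $d$ of $\{\bold{U}_i\}$ together with positivity of the $\alpha_i$ implies invertibility of $A$, so that $\lambda_{\min}(M)$ is genuinely nonzero. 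The result is essentially a restatement of the fact that any positive definite quadratic form is bounded below by a positive multiple of the squared Euclidean norm.
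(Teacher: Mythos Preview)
Your proof is correct and takes a genuinely different route from the paper's. The paper argues by compactness: it restricts the continuous map $\mathcal{S}(\bold{V})=\sum_i \alpha_i^2(\bold{U}_i^T\bold{V})^2$ to the unit sphere, notes that the image is compact and nonnegative, and rules out the value zero by observing that a unit vector with $\mathcal{S}(\bold{V})=0$ would be orthogonal to every $\bold{U}_i$, contradicting the rank hypothesis; the general case then follows by homogeneity. Your argument instead packages the sum as $\|A\bold{V}\|_2^2$ with $A$ the matrix of rows $\alpha_i\bold{U}_i^T$, checks that $A$ is invertible, and invokes the Rayleigh quotient to take $C=\lambda_{\min}(A^TA)$. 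The two approaches are equivalent in spirit---indeed $\lambda_{\min}(A^TA)$ is precisely the minimum of $\mathcal{S}$ over the unit sphere---but your version is more algebraic and has the minor advantage of identifying $C$ explicitly, whereas the paper's compactness argument only asserts its existence.
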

\begin{proof}
See Appendix~\ref{proof=boundbelowC}.
\end{proof}

\begin{lemma}
\label{lemma=7}
We have,
\begin{equation}
	\frac{\partial l_M\sN}{\partial\theta_i}(\btheta) = \mathcal{O}_P\left(\frac{1}{\sqrt{N}}\right).
\end{equation}
The Hessian matrix of the function $l_M(\btheta)$ satisfies
\begin{equation}
	H(\btheta) = \mathcal{I}(\btheta) + \mathcal{O}_P\left(\frac{1}{\sqrt{N}}\right),
\end{equation}
where the matrix norm of $\mathcal{I}(\btheta)$ is bounded below by a positive value, independently of $N$.
\end{lemma}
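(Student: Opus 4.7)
My plan is to carry out three steps using the score--Hessian template for pseudo-likelihood estimators, where stochastic fluctuations are controlled by the variance bound of Lemma~\ref{lemma=boundOnVariance} and the structural lower bound on the Hessian information is obtained from the significant-correlation-contribution hypothesis together with Lemma~\ref{lemma=boundbelowC}. First, differentiating the definition of $\ell_M\sN$ in Definition~\ref{def=ourEstimator} gives
\[
\frac{\partial \ell_M\sN}{\partial\theta_i}(\btheta) = \frac{1}{N}\sum_{\omega\in\Omega_N}\frac{\partial_i \overline{S}_{\Y}\sN(\omega;\btheta)}{\overline{S}_{\Y}\sN(\omega;\btheta)^2}\Bigl[\overline{S}_{\Y}\sN(\omega;\btheta) - \hat{S}_{\Y}\sN(\omega)\Bigr].
\]
At the true parameter the expectation vanishes since $\E\{\hat{S}_{\Y}\sN(\omega);\btheta\}=\overline{S}_{\Y}\sN(\omega;\btheta)$, and the weight $a\sN(\omega) = \partial_i \overline{S}_{\Y}\sN(\omega;\btheta)/\overline{S}_{\Y}\sN(\omega;\btheta)^2$ is uniformly bounded: the denominator has a positive lower bound from Lemma~\ref{lemma=boundexpectedperiodogram}, and $|\partial_i \overline{S}_{\Y}\sN(\omega;\btheta)|\le g_{\max}^2 \sum_{\tau\in\Z}|\partial_i c_X(\tau;\btheta)|$, which is finite under standard regularity of the latent covariance. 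Lemma~\ref{lemma=boundOnVariance} then yields variance $\mathcal{O}(1/N)$, so Chebyshev gives the score bound $\mathcal{O}_P(N^{-1/2})$.

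Differentiating once more produces terms either deterministic in $\btheta$ or linear in the periodogram $\hat{S}_{\Y}\sN(\omega)$. Taking expectations at the true parameter, the $\partial_{ij}\overline{S}_{\Y}\sN/\overline{S}_{\Y}\sN$ pieces cancel against each other and I identify
\[
\mathcal{I}_{ij}\sN(\btheta) := \E\left\{\frac{\partial^2 \ell_M\sN}{\partial\theta_i \partial\theta_j}(\btheta)\right\} = \frac{1}{N}\sum_{\omega\in\Omega_N}\frac{\partial_i \overline{S}_{\Y}\sN(\omega;\btheta)\,\partial_j \overline{S}_{\Y}\sN(\omega;\btheta)}{\overline{S}_{\Y}\sN(\omega;\btheta)^2}.
\]
The fluctuation $H_{ij}(\btheta) - \mathcal{I}_{ij}\sN(\btheta)$ is again a linear functional of $\hat{S}_{\Y}\sN$ with uniformly bounded weights (a second derivative of $\overline{S}_{\Y}\sN$ appears, controlled exactly as in the first step), so a second application of Lemma~\ref{lemma=boundOnVariance} delivers the componentwise $\mathcal{O}_P(N^{-1/2})$ error.

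The main obstacle is Step 3: showing $\|\mathcal{I}\sN(\btheta)\|$ is bounded below uniformly in $N$. For $\mathbf{v}\in\R^d$, I first use Lemma~\ref{lemma=boundexpectedperiodogram} to obtain
\[
\mathbf{v}^T \mathcal{I}\sN(\btheta) \mathbf{v} \;\ge\; \overline{S}_{\Y,\max}^{-2}\cdot\frac{1}{N}\sum_{\omega\in\Omega_N}\bigl(\mathbf{v}^T \nabla \overline{S}_{\Y}\sN(\omega;\btheta)\bigr)^2.
\]
Setting $\psi_\tau = c_g\sN(\tau)\,\mathbf{v}^T\nabla c_X(\tau;\btheta)$, extended evenly to $\tau\in\{-(N-1),\dots,N-1\}$, we have $\mathbf{v}^T\nabla \overline{S}_{\Y}\sN(\omega;\btheta) = \sum_\tau \psi_\tau e^{-i\omega\tau}$. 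The Parseval identity at the $N$ Fourier frequencies aliases $\tau$ and $\tau-N$ into a single residue class, giving
\[
\frac{1}{N}\sum_{\omega\in\Omega_N}\bigl(\mathbf{v}^T\nabla \overline{S}_{\Y}\sN(\omega;\btheta)\bigr)^2 = \sum_{\tau=0}^{N-1}\bigl(\psi_\tau+\psi_{\tau-N}\bigr)^2 \;\ge\; \sum_{\tau\in\Gamma}\bigl(\psi_\tau^2 - 2|\psi_\tau\psi_{\tau-N}|\bigr).
\]
The delicate point is that the aliased partners $\psi_{\tau-N}$ could in principle cancel the main contribution; the argument works because for $\tau\in\Gamma$ fixed, $|\psi_{\tau-N}|\le g_{\max}^2 \|\mathbf{v}\|\,|\nabla c_X(N-\tau;\btheta)|\to 0$ by the short-memory tail decay of $\nabla c_X$, while $|\psi_\tau|\ge \alpha_\tau|\mathbf{v}^T\nabla c_X(\tau;\btheta)|$ for $N$ large by the significant-correlation-contribution hypothesis. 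The strengthened assumption that the Jacobian of $\btheta\mapsto [c_X(\tau;\btheta)]_{\tau\in\Gamma}$ is non-singular at the true parameter provides $d$ linearly independent vectors $\nabla c_X(\tau;\btheta)$ for $\tau\in\Gamma$, so Lemma~\ref{lemma=boundbelowC} yields $\sum_{\tau\in\Gamma}\alpha_\tau^2(\mathbf{v}^T\nabla c_X(\tau;\btheta))^2 \ge C\|\mathbf{v}\|^2$, and combining the above estimates gives $\mathbf{v}^T\mathcal{I}\sN(\btheta)\mathbf{v}\ge C'\|\mathbf{v}\|^2$ uniformly in $N$.
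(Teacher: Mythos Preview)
Your proof is correct and follows the same three-step template as the paper's: compute the score and Hessian of $\ell_M\sN$ explicitly, control the stochastic fluctuations by Lemma~\ref{lemma=boundOnVariance} with uniformly bounded weights, and derive the information lower bound by passing from frequency to time domain, restricting to $\tau\in\Gamma$, and invoking the significant-correlation-contribution hypothesis together with Lemma~\ref{lemma=boundbelowC}.

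One point is worth noting. In Step~3 the paper simply asserts a ``finite Fourier pair'' Parseval identity between the $N$ values $\{\mathbf{v}^T\nabla\overline{S}_{\Y}\sN(\omega;\btheta)\}_{\omega\in\Omega_N}$ and the $2N-1$ coefficients $\{\psi_\tau\}_{|\tau|\le N-1}$, which is not literally a DFT pair because of the dimension mismatch; strictly speaking one obtains $\frac{1}{N}\sum_{\omega}(\cdot)^2=\psi_0^2+\sum_{j=1}^{N-1}(\psi_j+\psi_{N-j})^2$, and the cross terms could in principle interfere. Your explicit treatment of this aliasing---bounding $(\psi_\tau+\psi_{\tau-N})^2\ge\psi_\tau^2-2|\psi_\tau\psi_{\tau-N}|$ and showing $\psi_{\tau-N}\to 0$ for $\tau\in\Gamma$ fixed---plugs this small gap. (You could alternatively note that $|c_g\sN(N-\tau)|\le(\tau/N)g_{\max}^2\to 0$, which gives the same conclusion without invoking tail decay of $\nabla c_X$.)
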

\begin{proof}
See Appendix~\ref{proof=7}.
\end{proof}

\begin{theorem}[Convergence rate]
\label{theorem=convergencerate}
	We have $\hat{\btheta} = \btheta + \mathcal{O}_P\left(\frac{1}{\sqrt{N}}\right)$.
\end{theorem}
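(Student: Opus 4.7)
The plan is to run the standard M-estimation argument: combine the first-order optimality condition for $\hat{\btheta}\sN$ with a componentwise Taylor expansion of the score $\nabla \ell_M$ about $\btheta$, then invert the Hessian. First, by Theorem~\ref{theorem=consistency}, $\hat{\btheta}\sN \to \btheta$ in probability, and since $\btheta$ lies in the interior of $\Theta$, the estimator lies in the interior with probability tending to $1$. On this event, the first-order condition $\nabla \ell_M(\hat{\btheta}\sN) = \mathbf{0}$ holds, so the proof reduces to a perturbation argument around $\btheta$.

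Applying the mean value theorem to each coordinate of the score gives, for $i=1,\dots,d$,
\begin{equation*}
0 \;=\; \frac{\partial \ell_M}{\partial \theta_i}(\hat{\btheta}\sN) \;=\; \frac{\partial \ell_M}{\partial \theta_i}(\btheta) \;+\; \sum_{j=1}^d \frac{\partial^2 \ell_M}{\partial \theta_i \partial \theta_j}\!\left(\tilde{\btheta}^{(i)}\right)\,\bigl(\hat{\btheta}\sN_j - \btheta_j\bigr),
\end{equation*}
for some $\tilde{\btheta}^{(i)}$ on the segment between $\hat{\btheta}\sN$ and $\btheta$. Writing $\tilde{H}\sN$ for the matrix whose $i$-th row stacks these second partials at $\tilde{\btheta}^{(i)}$, this rearranges to $\tilde{H}\sN\bigl(\hat{\btheta}\sN - \btheta\bigr) = -\nabla \ell_M(\btheta)$. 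By Lemma~\ref{lemma=7}, the right-hand side is $\mathcal{O}_P(N^{-1/2})$, so the desired rate will follow provided $\tilde{H}\sN$ is invertible with $\bigl(\tilde{H}\sN\bigr)^{-1} = \mathcal{O}_P(1)$.

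To handle the Hessian, I would first argue that $\btheta \mapsto \partial^2 \ell_M/\partial \theta_i\partial\theta_j(\btheta)$ is continuous (this follows from the twice-differentiability assumption on $S_X$ in $\btheta$, together with the uniform bounds of Lemmas~\ref{lemma=boundexpectedperiodogram}--\ref{lemma=totalvariationnorm}, which keep the integrands bounded). Combined with $\tilde{\btheta}^{(i)} \to \btheta$ in probability, each entry of $\tilde{H}\sN - H(\btheta)$ is then $o_P(1)$. Plugging in the decomposition $H(\btheta) = \mathcal{I}(\btheta) + \mathcal{O}_P(N^{-1/2})$ from Lemma~\ref{lemma=7} yields $\tilde{H}\sN = \mathcal{I}(\btheta) + o_P(1)$. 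Since Lemma~\ref{lemma=7} bounds the norm of $\mathcal{I}(\btheta)$ below by a positive constant uniformly in $N$, $\tilde{H}\sN$ is eventually invertible with $\bigl(\tilde{H}\sN\bigr)^{-1} = \mathcal{O}_P(1)$, and we conclude
\begin{equation*}
\hat{\btheta}\sN - \btheta \;=\; -\bigl(\tilde{H}\sN\bigr)^{-1}\nabla \ell_M(\btheta) \;=\; \mathcal{O}_P(1)\cdot\mathcal{O}_P(N^{-1/2}) \;=\; \mathcal{O}_P(N^{-1/2}).
\end{equation*}

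The main obstacle is the invertibility step, which is non-standard here because, as stressed just before Lemma~\ref{lemma=7}, the Hessian is \emph{not} expected to converge to a limit in our modulated-with-significant-correlation-contribution setting. The usual Fisher-information limit argument is therefore unavailable, and one must instead rely on the uniform lower bound on $\|\mathcal{I}(\btheta)\|$ supplied by Lemma~\ref{lemma=7}; this in turn rests on the non-degeneracy of the Jacobian of $\btheta \mapsto (c_X(\tau;\btheta))_{\tau\in\Gamma}$, which plays the role, in our setting, of the classical positive-definiteness of the asymptotic Fisher information.
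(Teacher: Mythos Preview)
Your proposal is correct and follows essentially the same route as the paper's proof: Taylor-expand the score at $\btheta$, use Lemma~\ref{lemma=7} to control both the score and the Hessian, invoke consistency plus continuity of the Hessian to replace $H(\tilde{\btheta})$ by $\mathcal{I}(\btheta)+o_P(1)$, and conclude via the uniform lower bound on the smallest eigenvalue of $\mathcal{I}(\btheta)$. If anything, your version is slightly more careful than the paper's, which writes a single intermediate point $\widetilde{\btheta}$ in the Lagrange remainder rather than the componentwise points $\tilde{\btheta}^{(i)}$ that the mean value theorem actually delivers for a vector-valued map.
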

\begin{proof}
We have, by Taylor expansion with Lagrange form of the remainder term,
\begin{equation*}
	\nabla l_M(\hat{\btheta}) = \bold{0} = \nabla l_M(\btheta) + H(\widetilde{\btheta})(\hat{\btheta}-\btheta),
\end{equation*}
where $\widetilde{\btheta}$ lies between $\hat{\btheta}$ and $\btheta$. Therefore,
\begin{equation}
	\hat{\btheta}-\btheta = - H(\widetilde{\btheta})^{-1}\nabla l_M(\btheta).
\end{equation}
We have shown that $\hat{\btheta}$ converges in probability to $\btheta$. By continuity of the Hessian, and using the results of Lemma~\ref{lemma=7}, we obtain
\begin{equation}
	\hat{\btheta}-\btheta = - \left[\mathcal{I}(\btheta) + \mathcal{O}_P\left(\frac{1}{\sqrt{N}}\right) + o_P(1)\right]^{-1}\mathcal{O}_P\left(\frac{1}{\sqrt{N}}\right) = \mathcal{O}_P\left(\frac{1}{\sqrt{N}}\right).
\end{equation}
This concludes the proof.
\end{proof}

\section{Conclusion}
The well-established theory for the analysis of stationary time series is often in contradiction with real-world data applications. This is because most real time series are nonstationary. Nonstationary observations have required statisticians to develop new  models and more broadly new generating mechanisms. Among the large class of nonstationary models, uniform modulation of time series is an easy way to create nonstationarity, and presents all the advantages of a simple mechanism for the time-varying second order structure of a process.
Modulation has already been used to account for missing data when analysing stationary time series, as well as gentle time variation. In fact, if the modulation is slow, regular theory for locally stationary time series applies. Despite its popularity as a modelling tool, the concept of modulation, when variation can be moderate to rapid, is very poorly understood. We have in this paper shown how modulation of time series can account for much more rapid changes of a time series model.~\citet{JTSA:JTSA12034} already abandoned the assumption of smoothness in time of the time-varying spectral density~\citep{Dahlhaus1997}. However, the class of modulated processes with a significant correlation contribution is one of few instances of nonstationary models where more data in time results in more accurate estimates, and asymptotic consistency under standard assumptions for the latent stationary process.

As we have generalized modulation beyond the assumptions where it is known that models can be estimated, the question naturally arises, as to what types of modulation still permit parameter estimation. 
Key to our understanding of modulated processes is the definition of modulated processes with a significant correlation contribution (see Definition \ref{def=univariateSignificantCorrel}), which generalizes the classical concept of asymptotically stationary modulated processes, and corresponds to our main modelling innovation. We require that the sample autocorrelations of the modulating sequences be asymptotically bounded below, so that the information in the autocorrelation of the process does not fade in the observed process. The interpretation of this requirement is that there must be sufficient support in the autocorrelation to retain the information in the modulation.

With this new model class we can implement estimation directly in the Fourier domain, directly after transforming the data from the temporal domain. Estimation is still possible in ${\cal O}\left(N\log(N)\right)$ computational effort, and the further required conditions to ensure consistency were studied.
Most real-world data sets are aggregations of heterogeneous components. To fully show the promise of our newly proposed procedure, we show how estimation is still possible in the setting of unobserved components models, where different types of processes are superimposed. 
Real-world data from the Global Drifter Program show its relevance for understanding surface flow measurements at the equator---a challenging region for studying inertial oscillations—where the power of the new method shows that despite rapid modulation we can still uncover the generating mechanism of the process.

 There are a number of questions still remaining in our understanding of modulation. We have extended the regimes when estimation is possible, but do not know when an estimable process tips into one from which no information can be recovered. 
 By introducing a new class of models, many new questions can both be posited and answered, especially as most sources of real-world data show aggregations of components, all obeying different generation mechanisms.
\footnotesize
\setlength{\bibsep}{0pt plus 0.2ex}
\bibliographystyle{apalike}
\bibliography{arthur}

\appendix
\section{Appendix}

\subsection{Proof of Proposition \ref{prop=stationaryModulatedProcesses}}
\label{proof=stationaryModulatedProcesses}
\begin{proof}
We distinguish the case where $\{Z_t\}$ is a white noise process (the covariance is zero everywhere except for lag zero) from the case where $\{Z_t\}$ is not a white noise process. 
\begin{enumerate}
		\item\noindent Assume $\{Z_t\}$ is a white noise process.
		\begin{itemize}
			\item[$\rightarrow$] Assume $\{\widetilde{Z}_t\}$ is a stationary process.
			Being stationary, it has a constant variance and therefore the modulating sequence must have a constant modulus.
			\item[$\leftarrow$] Conversely, if $\{g_t\}$ has a constant modulus $\{\widetilde{Z}_t\}$ is stationary and is a white noise process.
		\end{itemize}
		
		\item\noindent Assume $\{Z_t\}$ is not a white noise process. The set $\{\tau\in\N^* :|c_Z(\tau;\boldsymbol{\theta})|>0\}$ is therefore not empty, so 
		$\mu = \gcd\{\tau\in\N^* :|c_Z(\tau;\boldsymbol{\theta})|>0\}$ is well defined.
		\begin{itemize} 
		\item[$\rightarrow$] Assume $\{\widetilde{Z}_t\}$ is stationary.
		Then it must have a constant variance, so there must exists a real number $a\geq 0$ such that $\rho_t = a\ \forall t\in\N$.
		Leaving aside the trivial case in which $\{g_t\}$ is zero everywhere, let $t_1, t_2$ be two natural integers.
		We have

	\begin{equation}
		\nonumber c_{\widetilde{Z}}(t_1,t_2;\boldsymbol{\theta}) 
		= g_{t_1}^* g_{t_2} c_Z(t_2-t_1;\boldsymbol{\theta}) 
		= a^2e^{i(\phi_{t_2}-\phi_{t_1})}c_Z(t_2-t_1;\boldsymbol{\theta}). 
	\end{equation}
	If $c_Z(t_2-t_1;\boldsymbol{\theta})\neq 0$ then

	\begin{equation}
		\nonumber e^{i(\phi_{t_2}-\phi_{t_1})} = 
		\frac{c_{\widetilde{Z}}(s,t;\boldsymbol{\theta})}{a^2 c_Z(t_2-t_1;\boldsymbol{\theta})},
	\end{equation}
	which leads to

	\begin{equation}
		\nonumber \phi_{t_2}-\phi_{t_1} = 
		\arg\left\{\frac{c_{\widetilde{Z}}(t_1,t_2;\boldsymbol{\theta})}{a^2 c_Z(t_2-t_1;\boldsymbol{\theta})}\right\} \mod 2\pi,
	\end{equation}
	where the equality is true up to a multiple of $2\pi$, which we indicate by the use of the notation $\mod 2\pi$.
	Since $\{\widetilde{Z}_t\}$ is assumed stationary, there exists a function $\zeta$, 
	defined on $\left\{\tau\in\N:c_Z(\tau;\boldsymbol{\theta})\neq0\right\}$, such that 

	\begin{equation}
		\nonumber \arg\left\{\frac{c_{\widetilde{Z}}(t_1,t_2;\boldsymbol{\theta})}{a^2 c_Z(t_2-t_1;\boldsymbol{\theta})}\right\} = 
		\zeta(t_2-t_1) \mod 2\pi, \ \forall t_1,t_2\in\N.
	\end{equation}
	Therefore

	\begin{equation}
		\nonumber \phi_{t_2}-\phi_{t_1} = \zeta(t_2-t_1) \mod 2\pi.
	\end{equation}
	Now let $t\in\N$ be any natural integer and write $t =\mu q + r$ where $0\leq r<\mu$ and $q\in\N$ are uniquely defined as the remainder and quotient of the Euclidian division of $t$ by $\mu$. 

	\begin{equation*}
		 \phi_t = \sum_{k=0}^{q-1}{(\phi_{r+(k+1)\mu} - \phi_{r+k\mu})} + \phi_r
		 = \sum_{k=0}^{q-1}{\zeta(\mu)} + \phi_r \mod 2\pi
		 = q\zeta(\mu) + \phi_r \mod 2\pi.
	\end{equation*}

	Letting $\gamma = \zeta(\mu)$ we obtain,

	\begin{equation}
		\nonumber \phi_t = \gamma \left\lfloor{\frac{t}{\mu}}\right\rfloor + \phi_{t\bmod\mu} \mod 2\pi.
	\end{equation}
	
	\item[$\leftarrow$] 
	Conversely assume there exists two constants $\gamma\in\R$ and $a\geq0$ such that for all $t\in\N$,

	\begin{eqnarray}
		\nonumber \rho_t &=& a,\\
		\nonumber \phi_t &=& \phi_{t\bmod\mu} + \gamma \left\lfloor{\frac{t}{\mu}}\right\rfloor \mod 2\pi.
	\end{eqnarray}

	Let $t,\tau$ be two natural integers.
	We have:

	\begin{equation}
		\nonumber c_{\widetilde{Z}}(t,t+\tau;\boldsymbol{\theta}) = 
		g_t^* g_{t+\tau} c_Z(\tau;\boldsymbol{\theta}) = 
		a^2e^{i(\phi_{t+\tau}-\phi_t)}c_Z(\tau;\boldsymbol{\theta}). 
	\end{equation}
	If $c_Z(\tau;\boldsymbol{\theta}) = 0$ then $c_{\widetilde{Z}}=(t,t+\tau;\boldsymbol{\theta})=0$ which does not depend on $t$.
	Otherwise, $\tau$ is a multiple of $\mu$ by definition of $\mu$.
	Therefore there exists an integer $q$ such that $\tau=q\mu$, and $(t+\tau)\bmod\mu = t \bmod\mu$.
	Finally,

	\begin{eqnarray}
		\nonumber \phi_{t+\tau}-\phi_t &=& \phi_{(t+\tau)\bmod \mu} 
		+ \gamma \left\lfloor{\frac{t+\tau}{\mu}}\right\rfloor - \phi_{t\bmod \mu} 
		- \gamma \left\lfloor{\frac{t}{\mu}}\right\rfloor \mod 2\pi\\
		\nonumber &=& \gamma \left\lfloor{\frac{t}{\mu}+q}\right\rfloor - 
		\gamma \left\lfloor{\frac{t}{\mu}}\right\rfloor \ \mod 2\pi
		 = \gamma \left ( \left\lfloor{\frac{t}{\mu}}\right\rfloor + q - \left\lfloor{\frac{t}{\mu}}\right\rfloor \right) \ \mod 2\pi\\
		\nonumber &=& \gamma q \ \mod 2\pi,
	\end{eqnarray}
	where we have used the fact that $\left\lfloor{\frac{t}{\mu}+q}\right\rfloor = \left\lfloor{\frac{t}{\mu}}\right\rfloor + q$ as $q$ is an integer. Again the obtained quantity does not depend on $t$.
	Therefore $c_{\widetilde{Z}}(t,t+\tau;\boldsymbol{\theta})$ does not depend on $t$ but only on the lag $\tau$.
	This proves that $\{\widetilde{Z}_t\}$ is stationary with autocovariance sequence
	$
		c_{\widetilde{Z}}(\tau;\boldsymbol{\theta}) = a^2e^{i\gamma\frac{\tau}{\mu}}c_Z(\tau).
	$
	As for the spectral density of the resulting stationary modulated process $\{\widetilde{Z}_t\}$ we have,
	\begin{eqnarray}
		\nonumber S_{\widetilde{Z}} (\omega;\boldsymbol{\theta}) &=& \sum_{\tau=-\infty}^{\infty}{c_{\widetilde{Z}}(\tau;\boldsymbol{\theta})e^{-i\omega \tau}}
		\nonumber = \sum_{\tau=-\infty}^{\infty}{a^2 c_Z(\tau;\boldsymbol{\theta})e^{-i(\omega \tau - \gamma\frac{\tau}{\mu})}}
		\nonumber = \sum_{\tau=-\infty}^{\infty}{a^2 c_Z(\tau;\boldsymbol{\theta})e^{-i(\omega - \frac{\gamma}{\mu})\tau}}\\
		\nonumber &=& a^2\sum_{\tau=-\infty}^{\infty}{c_Z(\tau;\boldsymbol{\theta})e^{-i(\omega - \frac{\gamma}{\mu})\tau}}
		\nonumber =  a^2S_Z\left(\omega-\frac{\gamma}{\mu}\right).
	\end{eqnarray}
	\end{itemize}
\end{enumerate}
This concludes the proof. Note that for a real-valued process this shift would be impossible as the spectral density has to retain symmetry. As both ${Z_t}$ and ${\widetilde{Z}_t}$ are complex-valued, this is not a concern. 
\end{proof}

\subsection{Proof of Proposition \ref{prop=tvarmodulated}}
\label{proof=tvarmodulated}
\begin{proof}
	Let us define the complex-valued stochastic process $\{Z_t\}$ according to

	\begin{equation}
		\nonumber Z_t = e^{-i\sum_{u=1}^{t}{\beta_u}}\widetilde{Z}_t, \ \ t=0,1,2,\cdots .
	\end{equation}
	By applying the definition of the process $\{\widetilde{Z}_t\}$ one can determine the following relationship, for all $t\geq1$,

	\begin{equation*}
		 Z_t = e^{-i\sum_{u=1}^t{\beta_u}}\widetilde{Z}_t
		 \ =\ e^{-i\beta_t}e^{-i\sum_{u=1}^{t-1}{\beta_u}}\widetilde{Z}_t
		 \ = \ e^{-i\beta_t}e^{-i\sum_{u=1}^{t-1}{\beta_u}}(re^{i\beta_t}\widetilde{Z}_{t-1}+\epsilon_t)
		 \ =\ re^{-i\sum_{u=1}^{t-1}{\beta_u}}\widetilde{Z}_{t-1} +\epsilon_t',
	\end{equation*}
	and finally
	$
		\nonumber Z_t = r Z_{t-1} + \epsilon_t', \ \ t\geq 1,
	$
	where $\epsilon_t' = e^{-i\sum_{u=1}^t{\beta_u}}\epsilon_t, \forall t\in\N$ 
	has the same distribution as $\epsilon_t$, as we have assumed that the complex-valued white noise process
	$\epsilon_t$ has variance $\sigma^2$ and independent real and imaginary parts.
	Therefore the process $Z_t$ is a first-order complex-valued autoregressive process with stationary parameters. It
	is stationary if and only if $\var\{Z_0\} = \frac{\sigma^2}{1-r^2}$.
	Since $Z_0=\widetilde{Z}_0$, it follows that $\var\{Z_0\} = \var\{\widetilde{Z}_0\}$.
	Thus if $\var\{\widetilde{Z}_0\} = \frac{\sigma^2}{1-r^2}$, the process $\{Z_t\}$ is stationary. The fact that the process $\{Z_t\}$ is proper stems from the fact that
	the innovations $\{\epsilon_t\}$ as well as the random variable $\widetilde{Z}_0$ are proper, as using the following relation,
		$Z_{t} = r^t Z_0 + \sum_{j=1}^t{r^{t-j}\epsilon'_{j}}$,
	we obtain for all $t,\tau\in\N$, 
	$
		\E\left\{Z_{t}Z_{t+\tau}\right\} = 0.
	$
	This shows how the proposed process is generated by the stated mechanism of modulation as claimed in the proposition.
\end{proof}

\subsection{Proof of Proposition \ref{prop=tvARcCorrelationContribution}}
\label{proof=tvARcCorrelationContribution}
\begin{proof}
	Let $\Gamma = \{0,1\}$. We show that conditions 1 and 2 given in Definition \ref{def=univariateSignificantCorrel} are verified.
	\begin{enumerate}
	\item The function $(r,\sigma)\mapsto \left\{c_Z\left[\tau;(r,\sigma)\right]:\tau\in\Gamma\right\}$ is one-to-one.
	\item According to Proposition \ref{prop=tvarmodulated}, there exists a stationary proper complex-valued process $Z_t$ such that $\widetilde{Z}_t = g_tZ_t$, where 
	$g_t = e^{i\sum_{u=1}^{t}{\beta_u}}$, i.e $\widetilde{Z}_t$ is a modulated process. The autocovariance sequence of the process $Z_t$ is given by 
	\begin{equation*}
	c_Z(\tau) = \frac{\sigma^2}{1-r^2}r^\tau, \tau\in\Z,
\end{equation*}
and we observe that the function $(r, \sigma)\mapsto \left(c_X(0), c_X(1)\right)$ is one-to-one.
	Let $L$ be the largest positive (i.e. greater than or equal to 1) integer such that $\Delta\leq\frac{\pi}{2L}$. This is well defined as we have assumed $0\leq\Delta\leq\frac{\pi}{2}$.
	Fix an integer lag value $0\leq\tau\leq L$.
	We have
	\begin{eqnarray*}
		\left|\frac{1}{N}\sum_{t=0}^{N-1-\tau}{g_t^* g_{t+\tau}}\right|&=& 
		\left|\frac{1}{N}\sum_{t=0}^{N-1-\tau}{e^{i\sum_{u=t}^{t+\tau-1}{\beta_u}}}\right|
		\ =\ \left|\frac{1}{N}\sum_{t=0}^{N-1-\tau}{e^{i\sum_{u=t}^{t+\tau-1}\left(\Xi+\beta_u-\Xi\right)}}\right|\\
		&=& \left|\frac{1}{N}e^{i\tau\Xi}\sum_{t=0}^{N-1-\tau}{e^{i\sum_{u=t}^{t+\tau-1}\left(\beta_u-\Xi\right)}}\right|
		\ \geq\ \frac{1}{N}\left|\Re\left\{\sum_{t=0}^{N-1-\tau}{e^{i\sum_{u=t}^{t+\tau-1}\left(\beta_u-\Xi\right)}}\right\}\right|\\
		&=& \frac{1}{N}\left|\sum_{t=0}^{N-1-\tau}\cos\left\{\sum_{u=t}^{t+\tau-1}\left(\beta_u-\Xi\right)\right\}\right|.
	\end{eqnarray*}
	Using the triangle inequality it follows $\left|\sum_{u=t}^{t+\tau-1}{\beta_u-\Xi}\right| \leq \sum_{u=t}^{t+\tau-1}{|\beta_u-\Xi|} \leq \tau\Delta$, 
	the fact that $\tau\Delta < \frac{\pi}{2}$ by assumption, and that the cosine function is 
	decreasing on the interval $[0,\frac{\pi}{2}]$ we obtain
	\[
		\left|\frac{1}{N}\sum_{t=0}^{N-1-\tau}{g_t^* g_{t+\tau}}\right|\geq \frac{1}{N}\sum_{t=0}^{N-1-\tau}\cos(\tau\Delta)
		= (1-\frac{\tau}{N})\cos(\tau\Delta) \stackrel{N\rightarrow\infty}{\rightarrow}  \cos(\tau\Delta) > 0,
	\]
	as $0\leq\tau\beta<\tau\frac{\pi}{2L}\leq \frac{\pi}{2}$.
	The above converges to a non-zero value as $N$ goes to infinity, so that $\liminf\limits_{N\rightarrow\infty}\left|c_g\sN(\tau)\right| > 0$. It is true in particular for $\tau\in\Gamma$. 
\end{enumerate}
This shows that the process $\widetilde{Z}_t$ is a modulated process with a significant correlation contribution.
\end{proof}

\subsection{Proof of Lemma \ref{lemma=boundexpectedperiodogram}}
\label{proof=boundDerivative}
\begin{proof}
We denote $S_{X,\max} = \max_{\btheta,\omega}S_X(\omega;\btheta)$ and $S_{X,\min} = \min_{\btheta,\omega}S_X(\omega;\btheta)$.
	\begin{enumerate}
	\item We first show the existence of the upper bound.
	According to Proposition \ref{prop=freqPropSbar} the expected periodogram can be expressed, for $\omega\in[-\pi,\pi]$, $\btheta\in\Theta$ and $N\in\N$, by
	\begin{equation*}
	\overline{S}_{\Y}\sN(\omega;\boldsymbol{\theta}) =
		2\pi\int_{-\pi}^\pi{S_X(\omega-\lambda)S_g\sN(\lambda)d\lambda}.
	\end{equation*}
	Therefore,
	\begin{equation*}
		\overline{S}_{\Y}\sN(\omega;\boldsymbol{\theta}) \leq 2\pi S_{X,\max} \int_{-\pi}^\pi{S_g\sN(\lambda)d\lambda}
		= S_{X,\max} \frac{1}{N}\sum_{t=0}^{N-1}{\left|g_t\right|^2},
	\end{equation*}
	by Parseval equality, and finally,
	\begin{equation*}
	\overline{S}_{\Y}\sN(\omega;\boldsymbol{\theta}) \leq g^2_{\max} S_{X,\max},
	\end{equation*}
	and this by assumption is finite.
	\item Similarly, we show the existence of a lower bound. According to the assumption of a modulated process with significant correlation contribution, there exists a non-negative integer $\tau\in\Gamma$ and a positive real number $\alpha_\tau$ such that for $N$ large enough, $c_g\sN(\tau)\geq\alpha_\tau$. Then,
	\begin{eqnarray*}
		\overline{S}_{\Y}\sN(\omega;\boldsymbol{\theta}) &\geq& 2\pi S_{X,\min} \int_{-\pi}^\pi{S_g\sN(\lambda)d\lambda}
		\ =\ S_{X,\min} \frac{1}{N}\sum_{t=0}^{N-1}{\left|g_t\right|^2}\\
		&\geq& S_{X,\min} \frac{1}{N}\sum_{t=0}^{N-\tau-1}{\left|g_t\right|^2}
		\ \geq\ S_{X,\min} \frac{1}{N}\sqrt{\sum_{t=0}^{N-\tau-1}{\left|g_t\right|^2}}\sqrt{\sum_{t=\tau}^{N-1}{\left|g_t\right|^2}}\\
		&\geq& S_{X,\min} \frac{1}{N}\left|\sum_{t=0}^{N-\tau-1}{g_t^* g_{t+\tau}} \right|,
	\end{eqnarray*}
	by Cauchy-Schwartz inequality.
	Hence we get for $N$ large enough
	$
		\overline{S}_{\Y}\sN(\omega;\boldsymbol{\theta}) \geq \alpha_\tau S_{X,\min}.
	$
	This proves the stated result.
	\end{enumerate}
\end{proof}

\subsection{Proof of Lemma \ref{lemma=totalvariationnorm}}
\label{proof=totalvariationnorm}
\begin{proof}
We have for all $\omega\in[-\pi,\pi]$, $\btheta\in\Theta$ and $N\in\N$ that the form of $\overline{S}_{\Y}\sN(\omega;\boldsymbol{\theta})$ is given by 
	\begin{equation*}
	\overline{S}_{\Y}\sN(\omega;\boldsymbol{\theta}) =
		2\pi\int_{-\pi}^\pi{S_X(\omega-\lambda;\btheta)S_g\sN(\lambda)d\lambda}.
	\end{equation*}
We obtain (where the inversion of differentiation and integration is a consequence of the differentiability of the functions $\omega\rightarrow S_X(\omega;\btheta)$ and the fact that the spectral densities are bounded above),
\begin{equation*}
	\left|\frac{\partial \overline{S}_{\Y}\sN}{\partial\omega}(\omega;\btheta)\right|
	= 2\pi\left|\int_{-\pi}^\pi{\frac{\partial S_X}{\partial\omega}(\omega-\lambda;\btheta)S_g\sN(\lambda)d\lambda}\right|
	\ \leq \ 2\pi\max_{\lambda, \btheta}\left\{\left|\frac{\partial S_X}{\partial\omega}(\lambda;\btheta)\right|\right\}\int_{-\pi}^\pi{S_g\sN(\lambda)d\lambda}
	\ \leq \ g_{\max}^2\max_{\omega, \btheta}\left\{\left|\frac{\partial S_X}{\partial\omega}(\omega;\btheta)\right|\right\},
\end{equation*}
which concludes the proof.
\end{proof}

\subsection{Proof of Lemma \ref{lemma=uniquenessOfMin}}
\label{proof=uniquenessOfMin}
\begin{proof}
We will use repeatedly the fact that the function $x\rightarrow x-\log x$, defined on the set of positive real numbers, admits a global minimum for $x=1$ where it takes value 1. It is an increasing function on $(1,\infty)$ and decreasing on $(0,1)$. This is easily seen by studying the derivative.
Now let $N$ be a natural integer. We have for all $\bgamma\in\Theta$
\begin{eqnarray*}
	D(\bgamma, \overline{S}_{\Y}\sN(\omega;\btheta)) &=& \frac{1}{N}\sum_{\omega\in\Omega_N}
		\left\{ 
			\log \overline{S}_{\Y}\sN(\omega;\bgamma) + 
			\frac{\overline{S}_{\Y}\sN(\omega;\btheta)}{\overline{S}_{\Y}\sN(\omega;\bgamma)} 
		\right\} 
		=\frac{1}{N} \sum_{\omega\in\Omega_N}
		\left\{
		\log \overline{S}_{\Y}\sN(\omega;\btheta) + \overbrace{
		\frac{\overline{S}_{\Y}\sN(\omega;\btheta)}{\overline{S}_{\Y}\sN(\omega;\bgamma)} - 
		\log\left[  \frac{\overline{S}_{\Y}\sN(\omega;\btheta)}{\overline{S}_{\Y}\sN(\omega;\bgamma)}\right]
		}^{\geq 1}
		\right\}\\
		&\geq& \frac{1}{N}\sum_{\omega\in\Omega_N}
		\left\{
		\log \overline{S}_{\Y}\sN(\omega;\btheta)
		 +1\right\},
	\end{eqnarray*}
	where we have an equality if and only if $\overline{S}_{\Y}\sN(\omega;\btheta) = \overline{S}_{\Y}\sN(\omega;\bgamma)$ for all $\omega\in\Omega_N$, which for $N$ large enough is equivalent to $\bgamma=\btheta$ according to Proposition~\ref{prop=identifiabilityViaPeriodogram}.
\end{proof}

\subsection{Proof of Lemma \ref{lemma=minimalValues}}
\label{proof=minimalValues}
\begin{proof}
We prove this in three steps.
\begin{enumerate}
\item
We have for a positive integer $N$,
	\begin{equation}
	\label{eq=d-d}
		D\left(\bgamma_N, \overline{S}_{\Y}\sN(\cdot;\btheta)\right)-D\left(\btheta, \overline{S}_{\Y}\sN(\cdot;\btheta)\right) =
		\frac{1}{N}\sum_{\omega\in\Omega_N}
		\left\{
			\frac{\overline{S}_{\Y}\sN(\omega;\btheta)}{\overline{S}_{\Y}\sN(\omega;\bgamma_N)}
			- \log\frac{\overline{S}_{\Y}\sN(\omega;\btheta)}{\overline{S}_{\Y}\sN(\omega;\bgamma_N)}
			-1
		\right\}.
	\end{equation}
	By assumption, this converges to zero as $N$ goes to infinity.
For any integer positive $\tau$ smaller than $N$ we can write,
\begin{equation}
	\overline{c}_{\Y}\sN(\tau;\bgamma_N)-\overline{c}_{\Y}\sN(\tau;\btheta) = 
	\frac{1}{2\pi}
	\int_{-\pi}^{\pi}
	\left(
		\overline{S}_{\Y}\sN(\omega;\bgamma_N) - \overline{S}_{\Y}\sN(\omega;\btheta)
	\right)
	e^{i\omega\tau}d\omega,
\end{equation}
so we have the following bound,
\begin{equation}
\label{eq=cy-cy}
	\left|\overline{c}_{\Y}\sN(\tau;\bgamma_N)-\overline{c}_{\Y}\sN(\tau;\btheta)\right| \leq
	\frac{1}{2\pi}
	\int_{-\pi}^{\pi}{
	\left|
		\overline{S}_{\Y}\sN(\omega;\bgamma_N) - \overline{S}_{\Y}\sN(\omega;\btheta)
	\right|
	d\omega}.
\end{equation}
\item
Now we assume, with the intent to arrive at a contradiction, that this quantity does not converge to zero. Then there exists an increasing function $\phi(N)$, defined on the set of non-negative 
integers and taking values in the set of non-negative integers and $\epsilon>0$ such that
\begin{equation}
	\left|\overline{c}_{\Y}\sfN(\tau;\bgamma\ufN)-\overline{c}_{\Y}\sfN(\tau;\btheta)\right| \geq\epsilon, \ \forall N\in\N. 
\end{equation}
Fix $N\in\N$. Denote $M$ the upper bound (independent of $N$) of the integrands in (\ref{eq=cy-cy}) using lemma \ref{lemma=boundexpectedperiodogram}. 
Let $B_{\phi(N)}\subset[-\pi,\pi]$ be the inverse image of $[\epsilon/2,\infty)$ by the function $\omega\mapsto \left|\overline{S}_{\Y}\sfN(\omega;\bgamma\ufN) -\overline{S}_{\Y}\sfN(\omega;\btheta)\right|$. Let $\lambda_{\phi(N)}$ be the Lebesgue measure of the Borel set $B_{\phi(N)}$. We have,
\begin{equation}
\epsilon\leq 
	\frac{1}{2\pi}
		\int_{-\pi}^{\pi}{\left|
		\overline{S}_{\Y}\sfN(\omega;\bgamma\ufN) - \overline{S}_{\Y}\sfN(\omega;\btheta)
	\right|d\omega} \leq \frac{\lambda_{\phi(N)}}{2\pi}M + \frac{2\pi-\lambda_{\phi(N)}}{2\pi}\frac{\epsilon}{2},
\end{equation}
and therefore
\begin{equation}
	\lambda_{\phi(N)} \geq \frac{\pi\epsilon}{M-\frac{\epsilon}{2}}.
\end{equation}
Since $B_{\phi(N)}$ is defined such that,
\begin{equation}
	\label{eq=diffOfExpectedPers}
	\left|
		\overline{S}_{\Y}\sfN(\omega;\bgamma\ufN) - \overline{S}_{\Y}\sfN(\omega;\btheta)
	\right|
	\geq \frac{\epsilon}{2}, \ \forall\omega\in B_{\phi(N)},
\end{equation}
it follows that, dividing each side of~\eqref{eq=diffOfExpectedPers} by $\overline{S}_{\Y}\sfN(\omega;\bgamma\ufN)$,
\begin{equation}
	\left|
		\frac{\overline{S}_{\Y}\sfN(\omega;\btheta)}{\overline{S}_{\Y}\sfN(\omega;\bgamma\ufN)} - 1
	\right| \geq \frac{\epsilon}{2\overline{S}_{\Y}\sfN(\omega;\bgamma\ufN)}
	\geq \frac{\epsilon}{2\overline{S}_{\Y,\max}}, \ \forall\omega\in B_{\phi(N)}.
\end{equation}
We therefore have that for all $\omega\in B_{\phi(N)}$,
$\left|\overline{S}_{\Y}\sfN(\omega;\btheta)/\overline{S}_{\Y}\sfN(\omega;\bgamma\ufN)-1\right|$ is bounded below by $c\epsilon$, where $c=1/(2\overline{S}_{\Y,\max})$ is a positive constant independent of $N$. 
Denote 
\begin{equation*}
	b:x\rightarrow x-\log x-1, \ \ x>0, \ \ \ \ \  \overline{b}\sfN:\omega\mapsto b\left(\frac{\overline{S}_{\Y}\sfN(\omega;\btheta)}{\overline{S}_{\Y}\sfN(\omega;\bgamma\ufN)}\right), \ \ \omega\in[-\pi,\pi].
\end{equation*}
 For all $\omega\in B_{\phi(N)}$, $\overline{b}\sfN(\omega)$ is bounded below by $d=\min(b(1+c\epsilon,1-c\epsilon))$ (where $d>0$ is a constant that depends on $\epsilon$ but not on $N$) because of the properties of the function $b(x)$ which we recalled at the beginning of the proof of lemma \ref{lemma=uniquenessOfMin}.	
The function $b(x)$ has a bounded derivative on any interval of the form $[a_1,a_2]$ where $0<a_1<a_2<\infty$. Since 
\begin{equation*}
\frac{\overline{S}_{\Y,\text{max}}}{\overline{S}_{\Y,\text{min}}} \geq \frac{\overline{S}_{\Y}\sfN(\omega;\btheta)}{\overline{S}_{\Y}\sfN(\omega;\bgamma\ufN)} \geq  \frac{\overline{S}_{\Y,\text{min}}}{\overline{S}_{\Y,\text{max}}}>0, 
\end{equation*}
and using Lemma~\ref{lemma=boundexpectedperiodogram} and Lemma~\ref{lemma=totalvariationnorm}, the function $\overline{b}\sfN(\omega)$ has a bounded derivative. We denote the corresponding bound $l_{\max}$, which is independent of $N$.

Recalling that $\lambda_{\phi(N)}$ is the measure of $B_{\phi(N)}$, there exist $T=\lfloor\frac{N \lambda_{\phi(N)}}{4\pi} \rfloor$ increasing elements $\nu_1, \cdots, \nu_T\in B_{\phi(N)}$ such that $\nu_{i+1}-\nu_i \geq \frac{4\pi}{N}, i=1,\cdots,T-1$. Then there exist $T-1$ Fourier frequencies
$\nu'_1, \cdots, \nu'_{T-1}$, such that $\nu_i < \nu'_i < \nu_{i+1}$. Then we have,
\begin{equation*}
	\left|\sum_{i=1}^{T-1}{ 
	\overline{b}\sfN(\nu'_i)-
	\overline{b}\sfN(\nu_i)
	}\right|
	\leq
	\sum_{i=1}^{T-1}{\left| 
	\overline{b}\sfN(\nu'_i)-
	\overline{b}\sfN(\nu_i)
	\right|}
	\leq \sum_{i=1}^{T-1}{(\nu'_i-\nu_i)l_{\max}} \leq 2\pi l_{\max},
\end{equation*}
which implies
\begin{eqnarray*}
	\sum_{i=1}^{T-1}{\overline{b}\sfN(\nu'_i)} &\geq &
	\sum_{i=1}^{T-1}{\overline{b}\sfN(\nu_i)} - 2\pi l_{\max}\\
	&\geq& (T-1)d-2\pi l_{\max}.
\end{eqnarray*}

Because $T$ is of order $N$, we conclude that
(\ref{eq=d-d}) cannot converge to zero. We arrive at a contradiction.\textreferencemark \linebreak
So we obtain that for all integer $\tau$, $\overline{c}_{\Y}\sN(\tau;\bgamma_N)-\overline{c}_{\Y}\sN(\tau;\btheta)$ converges to zero when $N$ goes to infinity.
\item
In particular for $\tau\in\Gamma$, if we denote $\alpha_\tau = \liminf\limits_{N\rightarrow\infty}\left|c_g\sN(\tau)\right| > 0$,  we have for $N$ large enough,
\begin{equation*}
	\left|\overline{c}_{\Y}\sN(\tau;\bgamma_N)-\overline{c}_{\Y}\sN(\tau;\btheta)\right| = \left|c_g\sN(\tau)\left(c_X(\tau;\bgamma_N)-c_X(\tau;\btheta)\right)\right| \geq \alpha_\tau \left|c_X(\tau;\bgamma_N)-c_X(\tau;\btheta)\right|,
\end{equation*}
so that $\left|c_X(\tau;\bgamma_N)-c_X(\tau;\btheta)\right|$ converges to zero as $N$ tends to infinity.
Because of the compacity of $\Theta$, and using the fact that the function $\btheta\mapsto\left\{c_X(\tau):\tau\in\Gamma\right\}$ is one-to-one and continuous, this yields the stated lemma.
\end{enumerate}
\end{proof}


\subsection{Proof of Lemma \ref{lemma=boundOnVariance}}
\label{proof=boundOnVariance}
\begin{proof}
	Let $a_\text{max}$ be a finite positive constant such that $\left|a\sN(\omega)\right|\leq a_\text{max}, \forall\omega\in[-\pi,\pi),\forall N\in\N$.
	We start by looking at the covariance matrix of the Fourier transform. We shall denote the Fourier transform, for a fixed positive integer $N$,
	\begin{equation*}
		J_{\Y}\sN(\omega) = \frac{1}{\sqrt{N}}\sum_{t=0}^{N-1}{\Y_te^{-i\omega t}} = \frac{1}{\sqrt{N}}\sum_{t=0}^{N-1}{g_t X_te^{-i\omega t}}, \ \ \omega\in\Omega_N.
	\end{equation*}
	Since the expectation of the latent process is assumed to be zero, the same holds for the Fourier transform by the linearity of the Fourier transform. Hence from the linear equation above we see that the covariance matrix elements can be expressed in the following way:
	\begin{equation}
		\label{eq=covFFT}
		\cov\left\{ J_{\Y}\sN(\omega), J_{\Y}\sN(\omega')\right\} = \frac{1}{N}\left(G_{\omega}\sN\right)^H C_X\sN(\btheta) G_{\omega'}\sN, \ \ \omega,\omega'\in\Omega_N,
	\end{equation}
	where subscript H denotes the Hermitian transpose, $C_X\sN(\btheta)$ denotes the finite theoretical autocovariance matrix of the latent process (i.e with elements $c_X(i-j;\btheta)$, $i,j=0,\cdots,N-1$), and 
	$G_\omega\sN$ is the vector $[g_t e^{i\omega t}: t=0,\cdots,N-1]^T$.
	Using Isserlis' theorem \citep{WhittleAdam} and the assumption of Gaussianity of the latent process (which in turns implies the Gaussianity of the Fourier transform of the modulated process), the covariances of the periodogram are related to the covariances of the Fourier transform according to the simple following relation
\begin{equation}
	\label{eq=covPeriodogram}
	\cov\left\{\hat{S}_{\Y}\sN(\omega), \hat{S}_{\Y}\sN(\omega')\right\} = \left| \cov\left\{ J_{\Y}\sN(\omega), J_{\Y}\sN(\omega')\right\} \right|^2.
\end{equation}
This can be written as
\begin{eqnarray*}
	\cov\left\{\hat{S}_{\Y}\sN(\omega), \hat{S}_{\Y}\sN(\omega')\right\} &=& \frac{1}{N^2} \left(G_{\omega}\sN\right)^H C_X\sN(\btheta) G_{\omega'}\sN \left( \left(G_{\omega}\sN\right)^H C_X\sN(\btheta) G_{\omega'}\sN \right)^H.\\
	&=& \frac{1}{N^2} {\left(G_{\omega}\sN\right)}^H C_X\sN(\btheta) G_{\omega'}\sN  {G_{\omega'}\sN}^H  {C_X\sN(\btheta)}^H G_{\omega}\sN.
\end{eqnarray*}
We then have
\begin{eqnarray*}
	&&\var\left\{\frac{1}{N}\sum_{\omega\in\Omega_N}{a\sN(\omega)\hat{S}_{\Y}\sN(\omega)}\right\}= \\
	&& \frac{1}{N^2}\sum_{\omega\in\Omega_N}{\sum_{\omega'\in\Omega_N}{a\sN(\omega)a\sN(\omega') \frac{1}{N^2}{\left(G_{\omega}\sN\right)}^H C_X\sN(\btheta) G_{\omega'}\sN  {G_{\omega'}\sN}^H  {C_X\sN(\btheta)}^H G_{\omega}\sN }}\\
	&\leq& \frac{a_{\text{max}}^2}{N^4}\sum_{\omega\in\Omega_N}{\sum_{\omega'\in\Omega_N}{ {\left(G_{\omega}\sN\right)}^H C_X\sN(\btheta) G_{\omega'}\sN  {G_{\omega'}\sN}^H  {C_X\sN(\btheta)}^H G_{\omega}\sN }}\\
	&=& \frac{a_{\text{max}}^2}{N^4}\sum_{\omega\in\Omega_N}{ {\left(G_{\omega}\sN\right)}^H C_X\sN(\btheta) \sum_{\omega'\in\Omega_N}\left\{G_{\omega'}\sN  {G_{\omega'}\sN}^H\right\}  {C_X\sN(\btheta)}^H G_{\omega}\sN },
\end{eqnarray*}
where the first inequality is legitimate as the covariances of the periodogram are positive real-valued numbers (see (\ref{eq=covPeriodogram})), and where the last equality is obtained after factorizing. Now we use the fact that
\begin{equation}
	\label{eq=matrixDiagonal}
	\sum_{\omega'\in\Omega_N}\left\{G_{\omega'}\sN  {G_{\omega'}\sN}^H\right\} = N \text{diag}(g_0^2, \cdots, g_{N-1}^2).
\end{equation}
Indeed, the $(t_1,t_2)$-th term of the left hand side of~\eqref{eq=matrixDiagonal} is given by
\begin{equation}
\sum_{\omega'\in\Omega_N}g_{t_1}g_{t_2}e^{i\omega'(t_1-t_2)}= \sum_{k=0}^{N-1}{g_{t_1}g_{t_2}e^{\frac{i2k\pi(t_1-t_2)}{N}}} = g_{t_1}g_{t_2}\sum_{k=0}^{N-1}{e^{\frac{i2k\pi(t_1-t_2)}{N}}},
\end{equation}
where we recognize the finite sum of the geometric sequence of term $e^{\frac{i2\pi(t_1-t_2)}{N}}$, which is $N$ if $t_1=t_2$, and otherwise,
\begin{equation}
	\sum_{k=0}^{N-1}{e^{\frac{i2k\pi(t_1-t_2)}{N}}} = \sum_{k=0}^{N-1}{\left(e^{\frac{i2\pi(t_1-t_2)}{N}}\right)^k} = \frac{1-\left(e^{\frac{i2\pi(t_1-t_2)}{N}}\right)^N}{1-e^{\frac{i2\pi(t_1-t_2)}{N}}}=0.
\end{equation}
Therefore
\begin{eqnarray}
	\var\left\{\frac{1}{N}\sum_{\omega\in\Omega_N}{a(\omega)\hat{S}_{\Y}\sN(\omega)}\right\}
	&\leq& \frac{a_{\text{max}}^2}{N^3}\sum_{\omega\in\Omega_N}{ \left(G_{\omega}\sN\right)^H C_X\sN(\btheta) \ \text{diag}(g_0^2, \cdots, g_{N-1}^2) \ C_X\sN(\btheta)^H G_{\omega}\sN }\\
	&\leq& \frac{a_{\text{max}}^2g_{\text{max}}^2}{N^3}\sum_{\omega\in\Omega_N}{ \left(G_{\omega}\sN\right)^H C_X\sN(\btheta)  {C_X\sN(\btheta)}^H G_{\omega}\sN }.
\end{eqnarray}
Therefore we now have
\begin{equation}
	\label{eq=variancefinale}
	\var\left\{\frac{1}{N}\sum_{\omega\in\Omega_N}{a(\omega)\hat{S}_{\Y}\sN(\omega)}\right\} \leq \frac{a_{\text{max}}^2g_{\text{max}}^2}{N^3}\sum_{\omega\in\Omega_N}{ \left\|{C_X\sN(\btheta)}^H G_{\omega}\sN\right\|_2^2},
\end{equation} 
where $\|\cdot\|_2$ denotes the Euclidean norm on $\C^N$. 
For all $\bold{U}\in\C^N$, the matrix $C_X\sN(\btheta)$ is Hermitian, so it can be written $PDP^H$ where D is a diagonal matrix and where $P$ is unitary, so that,
\begin{equation}
	\left\|C_X\sN(\btheta) \bold{U} \right\|_2  
	\leq 
	\left\|\bold{U}\right\|_2 \max_{\eta\in\text{sp}\left(C_X\sN(\btheta)\right)} |\eta|,
\end{equation}
where $\text{sp}\left(C_X\sN(\btheta)\right)$ is the set of eigenvalues of $C_X\sN(\btheta)$. Furthermore we have from \citet[p. 394]{matrixanalysis} that, recalling that the spectral density $S_X(\omega;\btheta)$ is assumed to be continuous in $\omega$,
\begin{equation}
	\label{eq=boundEigenvalues}
	\max_{\eta\in\text{sp}\left(C_X\sN(\btheta)\right)} |\eta| =\max_{\bold{U}\in\C^n}\left\{ \frac{\bold{U}^H C_X\sN(\btheta) \bold{U}}{\bold{U}^H \bold{U}}\right\} \leq S_{X,\max}.
\end{equation}
Combining~\eqref{eq=variancefinale}-\eqref{eq=boundEigenvalues} and replacing $\bold{U}$ by $G_{\omega}\sN$,
\begin{equation}
\var\left\{\frac{1}{N}\sum_{\omega\in\Omega_N}{a(\omega)\hat{S}_{\Y}\sN(\omega)}\right\} \leq \frac{\left(S_{X,\text{max}} \ a_{\max}\ g^2_{\max}\right)^2}{N},
\end{equation}
as $\left\|G_{\omega}\sN\right\|_2\leq g_{\max}\sqrt{N}$.
\end{proof}

\subsection{Proof of Lemma \ref{lemma=boundbelowC}}
\label{proof=boundbelowC}
\begin{proof}
We first show that the proposition is true for all vectors in $\mathcal{C} = \left\{V\in\R^d:\left\|V\right\|_2 = 1\right\}$, which is compact. The function $\mathcal{S}:V\mapsto\sum_{i=1}^d{\alpha_i^2\left(\bold{U}_i^T\bold{V}\right)^2}$ is continuous, so the image of $\mathcal{C}$ by $\mathcal{S}$ is compact. Since $\mathcal{S}$ takes non-negative values, the image of $\mathcal{C}$ by $\mathcal{S}$ either contains zero or there exists a constant $C>0$ such that it is bounded below by $C$. The image of $\mathcal{C}$ by $\mathcal{S}$ cannot contain zero, as otherwise there would be a vector of $\R^d$ with norm 1 whose scalar product with vectors $U_i, i=1,\cdots,d$ is zero, which is impossible as we have assumed that the family $U_1,\cdots,U_d$ has rank d. Therefore there exist a constant $C>0$ such that,
\begin{equation*}
		\sum_{i=1}^d{\alpha_i^2\left(\bold{U}_i^T\bold{V}\right)^2} \geq C,  \ \forall V\in\mathcal{C}.
\end{equation*}
Now in general, if $\bold{V}$ is any non-zero vector in $\R^d$, we have, using the result we have derived for vectors of $\mathcal{C}$,
	\begin{eqnarray*}
		\sum_{i=1}^d{\alpha_i^2\left(\bold{U}_i^T\bold{V}\right)^2} &=& \left\|\bold{V}\right\|_2^2\sum_{i=1}^d{\alpha_i^2\left(\bold{U}_i^T\frac{\bold{V}}{\left\|\bold{V}\right\|_2}\right)^2}\\
		&\geq&\left\|\bold{V}\right\|_2^2 C.
	\end{eqnarray*}
	If $V=0$ the result is obsvious.
	This concludes the proof in the general case.
\end{proof}

\subsection{Proof of Lemma \ref{lemma=7}}
\label{proof=7}
\begin{proof}
	\begin{enumerate}
		\item Direct calculations give that the score function takes the form, for $i=1,\cdots,d$,
		\begin{equation*}
			\frac{\partial l_M\sN}{\partial\theta_i}(\btheta) = \frac{1}{N}\sum_{\omega\in\Omega_N}\left\{ \frac{\frac{\partial \overline{S}_{\Y}\sN}{\partial \theta_i}(\omega;\btheta)}{\left(\overline{S}_{\Y}\sN(\omega;\btheta)\right)^2}\left( \overline{S}_{\Y}\sN(\omega;\btheta) - \hat{S}_{\Y}\sN(\omega) \right) \right\}.
		\end{equation*}
		Since by definition $\overline{S}_{\Y}\sN(\omega;\btheta) = \E\left\{ \hat{S}_{\Y}\sN(\omega);\btheta \right\}$, the null expectation of the score function is zero.
		Applying Lemma~\ref{lemma=boundOnVariance}, and the fact that $\frac{\frac{\partial \overline{S}_{\Y}\sN}{\partial \theta_i}(\omega;\btheta)}{\left(\overline{S}_{\Y}\sN(\omega;\btheta)\right)^2}$ is bounded above in absolute value independently of $N$ (as a direct consequence of Lemma~\ref{lemma=boundexpectedperiodogram} and Lemma~\ref{lemma=totalvariationnorm}), we get the first result.
		\item Again by direct calculation we obtain the following Hessian matrix:
		\begin{eqnarray*}
			\frac{\partial^2l_M\sN}{\partial \theta_i \partial\theta_j}(\btheta) &= &
			\frac{1}{N} \sum_{\omega\in\Omega_N}
			\left\{
			\frac{\frac{\partial^2\overline{S}_Y\sN}{\partial \theta_i \partial \theta_j}(\omega;\btheta)\left( \overline{S}_Y\sN(\omega;\btheta)\right)^2
			-2\overline{S}_Y\sN(\omega;\btheta)\frac{\partial\overline{S}_Y\sN}{\partial\theta_i}(\omega;\btheta)\frac{\partial\overline{S}_Y\sN}{\partial\theta_j}(\omega;\btheta)}
			{\left( \overline{S}_Y\sN(\omega;\btheta)\right)^4}\right. \\
			&\times&\left.
			\left( \overline{S}_Y\sN(\omega;\btheta) - \hat{S}_Y\sN(\omega) \right)
			+  
			\frac{1}{\left( \overline{S}_Y\sN(\omega;\btheta)\right)^2}
			\frac{\partial\overline{S}_Y\sN}{\partial\theta_i}(\omega;\btheta)
			\frac{\partial\overline{S}_Y\sN}{\partial\theta_j}(\omega;\btheta)
			\right\}.
		\end{eqnarray*}
		The expectation of the Hessian matrix is therefore
		\begin{equation*}
			\mathcal{I}\sN(\btheta) = \frac{\partial^2 l_M\sN}{\partial\btheta\partial\btheta^T}(\omega;\btheta) =  \frac{1}{N}\sum_{\omega\in\Omega_N}{\frac{1}{\left( \overline{S}_{\Y}\sN(\omega;\btheta)\right)^2}
			\frac{\partial\overline{S}_{\Y}\sN}{\partial\btheta}(\omega;\btheta)\frac{\partial\overline{S}_{\Y}\sN}{\partial\btheta^T}(\omega;\btheta)},
		\end{equation*}
		where we use the notation $\frac{\partial\overline{S}_{\Y}\sN}{\partial\btheta^T}(\omega;\btheta)$ to denote the transpose of the gradient vector $\frac{\partial\overline{S}_{\Y}\sN}{\partial\btheta}(\omega;\btheta)$.
	For any of the $\omega\in\Omega_N$ (to which corresponds a term in the above sum), and for any vector $\bold{U}\in\R^d$,
	\begin{equation*}
		\bold{U}^T \frac{\partial\overline{S}_{\Y}\sN}{\partial\btheta}(\omega;\btheta)\frac{\partial\overline{S}_{\Y}\sN}{\partial\btheta^T}(\omega;\btheta) \bold{U} = \left| \frac{\partial\overline{S}_{\Y}\sN}{\partial\btheta^T}(\omega;\btheta) \bold{U} \right|^2 \geq 0,
 	\end{equation*}
	so that the matrix $\mathcal{I}\sN(\btheta)$ is non-negative definite as a sum of non-negative definite matrices.
	Now to show that the matrix $\mathcal{I}(\btheta)$ is positive definite, let $\bold{U} = [u_1,\cdots,u_d]^T\in\R^d$ non-zero.
	We have
	\begin{eqnarray*}
		\bold{U}^T \ \mathcal{I}\sN(\btheta) \ \bold{U} &=& \frac{1}{N}\sum_{\omega\in\Omega_N}{\frac{1}{\left( \overline{S}_{\Y}\sN(\omega;\btheta)\right)^2}
			\bold{U}^T
			\frac{\partial\overline{S}_{\Y}\sN}{\partial\btheta}(\omega;\btheta)\frac{\partial\overline{S}_{\Y}\sN}{\partial\btheta^T}(\omega;\btheta)}
			\bold{U}\\
			&=& \frac{1}{N}\sum_{\omega\in\Omega_N}{\frac{1}{\left( \overline{S}_{\Y}\sN(\omega;\btheta)\right)^2}
			\left|\frac{\partial\overline{S}_{\Y}\sN}{\partial\btheta^T}(\omega;\btheta)
			\bold{U}
			\right|^2}\\
			&=& \frac{1}{N}\sum_{\omega\in\Omega_N}{\frac{1}{\left( \overline{S}_{\Y}\sN(\omega;\btheta)\right)^2}
			\left(
			\sum_{i=1}^d{
			\frac{\partial\overline{S}_{\Y}\sN}{\partial\theta_i}(\omega;\btheta)
			u_i
			}
			\right)^2
			}\\
			&\geq& \frac{1}{N\left( \sup_{\omega\in\Omega_N}\overline{S}_{\Y}\sN(\omega;\btheta)\right)^2}\sum_{\omega\in\Omega_N}{
			\left(
			\sum_{i=1}^d{
			\frac{\partial\overline{S}_{\Y}\sN}{\partial\theta_i}(\omega;\btheta)
			u_i
			}
			\right)^2.
			}
	\end{eqnarray*}
	Seeing $\ \left\{\frac{1}{\sqrt{N}}\sum_{i=1}^d{u_i\frac{\partial\overline{S}_{\Y}\sN}{\partial\theta_i}(\omega;\btheta)}\right\}_{\omega\in\Omega_N}
	\Leftrightarrow \ 
	\left\{\sum_{i=1}^d{u_i\frac{\partial\overline{c}_{\Y}\sN}{\partial\theta_i}(\tau;\btheta)}\right\}_{\tau=-(N-1), \cdots, N-1}$
	as a finite Fourier pair and applying Parseval's equality
	we obtain that,
	\begin{eqnarray*}
		\bold{U}^T \ \mathcal{I}\sN(\btheta) \ \bold{U} &\geq& 
		\frac{1}{\left( \sup_{\omega\in\Omega_N}\overline{S}_{\Y}\sN(\omega;\btheta)\right)^2}
		\sum_{\tau=-(N-1)}^{N-1}{
		\left(
		\sum_{i=1}^d{
		u_i
			\frac{\partial\overline{c}_{\Y}\sN}{\partial\theta_i}(\tau;\btheta)
			}
			\right)^2
			}\\
		&=&
		\frac{1}{\left( \sup_{\omega\in\Omega_N}\overline{S}_{\Y}\sN(\omega;\btheta)\right)^2}
		\sum_{\tau=-(N-1)}^{N-1}{
		\left(
		\sum_{i=1}^d{
		u_i
			c_g\sN(\tau)\frac{\partial c_X}{\partial\theta_i}(\tau;\btheta)
			}
			\right)^2,
			}\\
	\end{eqnarray*}
	by definition of $\overline{c}_{\Y}\sN(\tau;\btheta)$ and noting that $c_g\sN(\tau)$ does not depend on $\btheta$.
	Therefore,
	\begin{equation*}
			\bold{U}^T \ \mathcal{I}\sN(\btheta) \ \bold{U} \geq
		\frac{1}{\left( \sup_{\omega\in\Omega_N}\overline{S}_{\Y}\sN(\omega;\btheta)\right)^2}
		\sum_{\tau\in\Gamma}{
		\left(
		\sum_{i=1}^d{
		u_i
			c_g\sN(\tau)\frac{\partial c_X}{\partial\theta_i}(\tau;\btheta)
			}
			\right)^2
			},
	\end{equation*}
	as long as $N$ is larger than the greater integer value in $\Gamma$.
	Denote $\alpha_{\tau} = \liminf\limits_{N\rightarrow\infty}\left|c_g\sN(\tau)\right| > 0$ for $\tau\in\Gamma$, we obtain that for $N$ large enough (see~\eqref{eq=equivalenceoflimitinf}),
	\begin{eqnarray*}
	\bold{U}^T \ \mathcal{I}\sN(\btheta) \ \bold{U} &\geq& 
		\frac{1}{\left( \sup_{\omega\in\Omega_N}\overline{S}_{\Y}\sN(\omega;\btheta)\right)^2}
		\sum_{\tau\in\Gamma}{
		c_g\sN(\tau)^2
		\left(
		\sum_{i=1}^d{
		u_i
			\frac{\partial c_X}{\partial\theta_i}(\tau;\btheta)
			}
			\right)^2
			}\\
			& \geq & 
		\frac{1}{\left( \sup_{\omega\in\Omega_N}\overline{S}_{\Y}\sN(\omega;\btheta)\right)^2}
		\sum_{\tau\in\Gamma}{
		\alpha_\tau^2
		\left(
		\sum_{i=1}^d{
		u_i
			\frac{\partial c_X}{\partial\theta_i}(\tau;\btheta)
			}
			\right)^2
			}.
	\end{eqnarray*}
	Now according to the assumption of significant correlation contribution, the mapping $\btheta\mapsto \left[c_X(\tau):\tau\in\Gamma\right]^T$ is one-to-one, so its Jacobian taken at the true parameter vector $\btheta$ is non-zero. Therefore the family $\frac{\partial c_X(\tau)}{\partial \theta}:\tau\in\Gamma$ has rank $d$ and we can apply Lemma~\ref{lemma=boundbelowC}, i.e. we can conclude that
there exists a positive constant $C>0$ such that,
	\begin{equation*}
	\bold{U}^T \ \mathcal{I}\sN(\btheta) \ \bold{U} \geq C \left\|U\right\|_2^2.
	\end{equation*}
This implies that the norm of the expected Hessian matrix is bounded below by a positive real-number.
Similarly to the gradient, using Lemma~\ref{lemma=boundOnVariance}, we obtain the stated result for the Hessian.
\end{enumerate}
\end{proof}

\end{document}